\newtheorem{theorem}{Theorem}[section] 
\newtheorem{definition}[theorem]{Definition} 
\newtheorem{lemma}[theorem]{Lemma} 
\newtheorem{corollary}[theorem]{Corollary}
\newtheorem{example}[theorem]{Example}
\newtheorem{proposition}[theorem]{Proposition}
\newtheorem{remark}[theorem]{Remark}
\begin{document}

\title{On a Class of Twisted Elliptic Curve Codes$^{\dag}$}
\author{Xiaofeng Liu, Jun Zhang, Fang-Wei Fu
\IEEEcompsocitemizethanks{\IEEEcompsocthanksitem Xiaofeng Liu and Fang-Wei Fu are with the Chern Institute of Mathematics and LPMC, Nankai University, Tianjin 300071, China, Emails: lxfhah@mail.nankai.edu.cn, fwfu@nankai.edu.cn. Jun Zhang is with the School of Mathematical Sciences, Capital Normal University, Beijing 100048, China, Email: junz@cnu.edu.cn.
}
\thanks{$^\dag$Xiaofeng Liu and Fang-Wei Fu were supported by the National Key Research and Development Program of China (Grant Nos.  2022YFA1005000), the National Natural Science Foundation of China (Grant Nos. 12141108, 61971243), the Fundamental Research Funds for the Central Universities of China (Nankai University), and the Nankai Zhide Foundation. Jun Zhang was supported by the National Natural Science Foundation of China under Grant Nos. 12222113, 12441105.} 
}{\tiny }

\maketitle

\begin{abstract} 
 Motivated by the research of twisted generalized Reed-Solomon (TGRS) codes, we initiate the study of twisted elliptic curve codes (TECCs) in this paper. In particular, we study a class of TECCs with one twist. The parity-check matrices of the TECCs are explicitly given by computing the Weil differentials. Then the sufficient and necessary conditions of self-duality are presented. The minimum distances of the TECCs are also determined. Moreover, examples of MDS, AMDS, self-dual, and MDS self-dual TECCs are given. Finally, we calculate the dimensions of the Schur squares of the TECCs. Then we obtained some nonequivalence results.
\end{abstract}

\begin{IEEEkeywords}
	Algebraic geometry codes, Elliptic curves, Twisted elliptic curve codes, Riemann-Roch space, MDS codes, Self-dual, Schur squares.
\end{IEEEkeywords}

\section{Introduction}
\label{sec:1}
Let $\mathbb{F}_{q}$ be a finite field with $q$ an odd prime power and $\mathbb{F}_{q}^{*}=\mathbb{F}_{q}\setminus\{0\}$. A linear code $\mathcal{C}$ with parameters $[n, k, d]_{q}$ is a subspace of $\mathbb{F}^{n}_{q}$. The Singleton bound $d\leq n-k+1$ is the most famous trade-off of the parameters $n, k, d$. Any linear code that achieves the Singleton bound is called a maximum distance separable (MDS) code. The number $\mathcal{S}(\mathcal{C})=n-k+1-d$ is called the Singleton defect of the code $\mathcal{C}$. If $\mathcal{S}(\mathcal{C})=0$, then $\mathcal{C}$ is just an MDS code. If $\mathcal{S}(\mathcal{C})=1$, then $\mathcal{C}$ is called an almost-MDS (AMDS) code. If $\mathcal{S}(\mathcal{C})=\mathcal{S}(\mathcal{C}^{\perp})=1$, then $\mathcal{C}$ is called a near-MDS (NMDS) code.
	
	The Generalized Reed-Solomon (GRS) code is one of the most important MDS code families. In 2017, Beelen \textit{et al.} introduced the concept of twisted generalized Reed-Solomon (TGRS) codes; see \cite{4}. Since then, properties of the TGRS codes have attracted a lot of researchers. Now we recall some classical constructions of TGRS codes. In \cite{18}, Sun \textit{et al.} studied the decoding algorithms of TGRS codes and twisted Goppa codes. Huang \textit{et al.} constructed MDS or NMDS LCD codes from twisted genralized Reed-Solomon codes; see \cite{5}.
	In \cite{22}, Hu \textit{et al.} came up with a new class of TGRS codes, namely $(\mathcal{L},\mathcal{P})-$TGRS codes and provided necessary and sufficient conditions for $(\mathcal{L},\mathcal{P})-$TGRS codes to be MDS and self-dual, which extends the related results in the previous work about TGRS codes.  Recently, the deep hole problems of TGRS codes were also considered; see~\cite{23,24} \textit{etc.}.

	Algebraic geometry (AG) codes as generalizations of RS codes were introduced by Goppa in the 1980s. They are widely applied in both coding theory and cryptography. The Tasfasman-Vladut-Zink bound induced by the towers of AG codes improves the Gilbert-Varshamov bound in the asymptotic performance of codes; see~\cite{1}. The cryptanalysis of the McEliece cryptosystem based on AG codes and their subfield subcodes can be checked at~\cite{21} \textit{etc.}. The AG codes constructed on the elliptic curves over finite fields are called elliptic curve codes (ECCs).  ECCs are widely applied in cryptography and coding theory.  Identifying the minimum distance of ECCs is equivalent to a subset-sum problem (SSP) and  it is $\mathbf{NP}-$hard under $\mathbf{RP}-$reduction (see \cite{2}). Now we recall some results of ECCs. Deterministic results on the minimum distance of the ECCs were obtained; see~\cite{8,13,20}. Recently, covering radius problems and the deep hole problems of ECCs were considered in~\cite{2}. In~\cite{3}, Chen calculated the Schur square of the ECCs and constructed many non-Reed-Solomon-type MDS codes.   
	In \cite{11}, Thiranant \textit{et al.} compared the performance between QR code authentication based on RSA and elliptic curve cryptography. In \cite{12},  Li \textit{et. al.} used ECCs to construct optimal locally repairable codes. Genç and Afacan designed and implemented an efficient elliptic curve digital signature algorithm (ECDSA); see \cite{19}. 
	\subsection{Main Techniques and Results}
	Let $E/\mathbb{F}_{q}$ be an elliptic function field on a finite field $\mathbb{F}_{q}$ with $q$ an odd prime power.  In this paper, we initiate the study of a class of twisted elliptic curve codes (TECCs), which is also the first construction of twisted AG codes. In this paper, we focus on the elliptic function fields over finite fields with odd characteristics. Such function fields belong to Kummer extensions with defining equation $y^{2}=f(x)$ for some square-free polynomial $f(x)$ with $\deg(f(x))=3$. By using the Riemann-Hurwitz formula and prime ideal decompositions in the elliptic function fields, we completely determine the parity-check matrices of ECCs. The orthogonality relations and the parity-check matrices of ECCs are listed in Theorem \ref{cc}  and Theorem \ref{c1}. By analyzing the codimension between TECCs and classical ECCs, we also completely determine the parity-check matrices of TECCs and the results are listed in Theorems \ref{30}. The results of self-duality are listed in Corollary \ref{59}. Conditions such that TECCs attain the possible minimum distances are also obtained and the results are listed in Theorem \ref{10}. Last, we calculate the dimension of the Schur squares of TECCs through which we obtained some non-equivalence results.
	\subsection{Orgnization of this paper}
	In this paper, we extend the classical ECCs to twisted elliptic curve codes (TECCs). In particular, we study a class of TECCs with one twist and determine their dual codes, minimum distances, MDS condition and self-duality. The rest of the paper is organized as follows. In Section \ref*{sec:2}, we introduce some basic results of elliptic function fields and some constructions of ECCs and propose the definition of TECCs. In Section \ref*{sec:4},  by using the Riemann-Hurwitz formula and computing the Weil differentials, we give explicit constructions of the parity-check matrices for the ECC $\mathcal{C}_{\mathcal{L}}(D, kO)$. In Section \ref*{sec:5}, we give some constructions of the parity-check matrices for TECCs and determine the conditions such that TECCs $\mathcal{C}(D, kO,\ell, \eta)$ to be self-dual. In Section \ref*{sec:6}, we determine the conditions such that TECCs attain the possible minimum distances. In Section \ref{sec:8}, by calculating the dimensions of the Schur squares of TECCs, we obtain some non-equivalence results when the dimensions satisfy $4\leq k\leq \frac{n-4}{2}$ and $\frac{n+4}{2}\leq k\leq n-4$. Finally, we make a conclusion and give some further research problems in Section \ref*{sec:7}.
\section{Preliminaries }
\label{sec:2}
\subsection{Elliptic Function Fields}
Suppose $E$ is a projective, non-singular and geometrically irreducible elliptic curve over the finite field $\mathbb{F}_{q}$. Let $F=\mathbb{F}_{q}(E)$ be the corresponding elliptic function field of $E$. Deote by $E(\mathbb{F}_{q})$ the set of $\mathbb{F}_{q}$-rational points of $E$.
	
	A place $P$ of $F$ is the maximal ideal of some valuation ring $\mathcal{O}_{P}$ of $F/\mathbb{F}_{q}$ and we have the discrete valuation $\mathrm{v}_{P}(\cdot)$. Denote by $\mathbb{P}_{F}=\{P\ |\ P\ \text{is a place of}\ F/\mathbb{F}_{q}\}$ the set of all places on elliptic function field $F/\mathbb{F}_{q}$ and the degree of a place is given by $\deg(P)=[F_{P} :\mathbb{F}_{q}]$, where $F_{P}=\mathcal{O}_{P}/P$.  For any nonzero rational function $f\in F$, divisor $(f)$ is called a principal divisor corresponding to $f$. It has the decomposition $(f)=(f)_{0}-(f)_{\infty}$ where $(f)_{0}=\sum_{P\in\mathbb{P}_{F},\mathrm{v}_{P}(f)>0}\mathrm{v}_{P}(f)P$ and $ (f)_{\infty}=\sum_{P\in\mathbb{P}_{F},\mathrm{v}_{P}(f)<0}(-\mathrm{v}_{P}(f))P   $.
	
	A divisor is a formal sum of places. Denote by $\mathbb{D}_{F}$ the set of all divisors. 
	For divisor $G=\sum_{P\in\mathbb{P}_{F}}\mathrm{v}_{P}(G)P\in \mathbb{D}_{F}$, the support of $G$ is denoted by $\mathrm{Supp}(G)=\{P\in\mathbb{P}_{F}\ |\ \mathrm{v}_{P}(G)\neq0\}$. The degree of divisor $\deg(G)$ is defined as $\deg(G)=\sum_{P\in\mathbb{P}_{F}}\mathrm{v}_{P}(G)\deg(P)$. A divisor $G=\sum_{P\in\mathbb{P}_{F}}\mathrm{v}_{P}(G)P$ is called effective if $\mathrm{v}_{P}(G)\geq 0$ for all $P\in\mathbb{P}_{F}$. The Riemann-Roch space associated to a non-negative divisor $G$
	\begin{displaymath}
		\mathcal{L}(G):=\{f\in\mathbb{F}_{q}(E)\setminus\{0\}\ |\ (f)+G\geq 0\}\cup\{0\}
	\end{displaymath}
	is a finite $\mathbb{F}_{q}$-dimensional vector space, and denote by $\ell(G)$ the dimension of $\mathcal{L}(G)$.

	Let $\Omega_{F}$ be the set of all Weil differentials of $F/\mathbb{F}_{q}$. Denote by $\Omega_{F}(G)$ the $\mathbb{F}_{q}$-vector space of all Weil differentials $\omega$ with the differential divisor $(\omega)\geq G$ together with the zero differential, i.e.
	\begin{displaymath}
		\Omega_{F}(G):=\{\omega\in\Omega_{F}\ |\ \omega=0\ \text{or}\ (\omega)\geq G\}.
	\end{displaymath}
	
	A divisor $W$ is called a canonical divisor of $F/\mathbb{F}_{q}$ if $W=(\omega)$ for some $\omega\in\Omega_{F}$. We have the following proposition in reference \cite{1}.
	\begin{proposition}
		\label{0}
		\begin{enumerate}
			\item If $W$ is a canonical divisor of the elliptic function field $F$, then $\deg(W)=0$ and $\ell(W)=1$.
			\item For $0\neq x\in F$ and $0\neq \omega\in\Omega_{F}$ we have $(x\omega)=(x)+(\omega)$.
		\end{enumerate}
	\end{proposition}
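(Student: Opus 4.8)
The plan is to obtain part (1) as a direct corollary of the Riemann--Roch theorem (which we take as available from \cite{1}), and to obtain part (2) from the adèle-theoretic description of the divisor of a Weil differential. It is natural to treat part (2) first, since it shows that any two nonzero Weil differentials differ by multiplication by a nonzero function, hence that all canonical divisors are linearly equivalent; in particular $\deg(W)$ and $\ell(W)$ in part (1) do not depend on the chosen $\omega$.

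\medskip

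For part (2), recall that a nonzero Weil differential $\omega$ is a nonzero $\mathbb{F}_{q}$-linear functional on the adèle space $\mathcal{A}_{F}$ that vanishes on $F$ and on $\mathcal{A}_{F}(G)$ for some divisor $G$, where $\mathcal{A}_{F}(G)=\{\alpha\in\mathcal{A}_{F}:\ v_{P}(\alpha_{P})\geq -v_{P}(G)\ \text{for all }P\}$, and that $(\omega)$ is defined to be the (unique, well-defined) largest divisor $D$ with $\omega|_{\mathcal{A}_{F}(D)}=0$; equivalently $\omega$ vanishes on $\mathcal{A}_{F}(D)$ if and only if $D\leq(\omega)$. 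The scalar multiple is $x\omega\colon\alpha\mapsto\omega(x\alpha)$. The one computation needed is the identity $x\cdot\mathcal{A}_{F}(G)=\mathcal{A}_{F}\big(G-(x)\big)$, which is immediate from $v_{P}(x\alpha_{P})=v_{P}(x)+v_{P}(\alpha_{P})$. Consequently, $x\omega$ vanishes on $\mathcal{A}_{F}(G)$ iff $\omega$ vanishes on $\mathcal{A}_{F}(G-(x))$ iff $G-(x)\leq(\omega)$ iff $G\leq(x)+(\omega)$; taking the largest such $G$ yields $(x\omega)=(x)+(\omega)$. It remains only to check that $x\omega$ is again a nonzero Weil differential: it is nonzero since $x\neq 0$, it vanishes on $xF=F$, and it vanishes on $x\,\mathcal{A}_{F}(G)=\mathcal{A}_{F}(G-(x))$.

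\medskip

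For part (1), apply the Riemann--Roch theorem $\ell(G)=\deg(G)+1-g+\ell(W-G)$ with genus $g=1$. Taking $G=0$ and using $\ell(0)=1$ (the only functions with no poles are the constants) gives $\ell(W)=g=1$. Taking $G=W$ gives $\ell(W)=\deg(W)+1-g+\ell(0)=\deg(W)+1$, hence $\deg(W)=0$. (These are of course the specializations to $g=1$ of the general identities $\deg(W)=2g-2$ and $\ell(W)=g$; alternatively, if $E$ is given by a Weierstrass equation then the differential $\omega=dx/(2y+a_{1}x+a_{3})$ has empty support, so $\deg((\omega))=0$ for that particular $\omega$, and part (2) propagates the conclusion to every canonical divisor.)

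\medskip

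Neither part is deep: part (1) is a one-line consequence of Riemann--Roch, whose genuine content we are assuming, and part (2) is a formal manipulation. The only place demanding care is the bookkeeping in part (2) — getting the sign convention right so that $x\,\mathcal{A}_{F}(G)=\mathcal{A}_{F}(G-(x))$ and not $\mathcal{A}_{F}(G+(x))$ — together with the correct use of the defining property of the differential divisor, namely that $\omega$ annihilates $\mathcal{A}_{F}(D)$ exactly when $D\leq(\omega)$; once these are in place the equality $(x\omega)=(x)+(\omega)$ follows immediately.
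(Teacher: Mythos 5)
Your proof is correct: the paper itself gives no proof of this proposition but simply cites it from Stichtenoth \cite{1}, and your argument (the adèle-theoretic identity $x\cdot\mathcal{A}_{F}(G)=\mathcal{A}_{F}(G-(x))$ for part (2), then Riemann--Roch with $g=1$ specialized at $G=0$ and $G=W$ for part (1)) is exactly the standard derivation given there. Nothing is missing.
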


	The well-known Riemann-Roch theorem for the elliptic function field states
	\begin{displaymath}
		\ell(G)= \deg G+\ell(W-G)
	\end{displaymath}
	where $W$ is a canonical divisor for the elliptic function field $F$.
	
	The defining equation of the elliptic function field over finite fields can be classified as follows.
 \begin{lemma} {(see \cite{1})}
 Let $F /\mathbb{F}_{q}$ be an elliptic function field.
 \begin{enumerate}
 	\item If $p$ is odd, there exist $x, y\in F$ such that $F=\mathbb{F}_{q}(x, y)$ and 
 	\begin{displaymath}
 		y^{2}=f(x),
 	\end{displaymath}
 	with a square-free polynomial $f(x)\in \mathbb{F}_{q}[x]$ of degree $3$.
 	\item If $p$ is even, there exist $x, y\in F$ such that $F=\mathbb{F}_{q}(x, y)$ and 
 	\begin{displaymath}
    \vspace{-0.25 cm}
 		y^{2}+y=f(x)\ \text{with\ $\deg(f)=3$},
        \vspace{-0.25 cm}
        \end{displaymath}
 	or \begin{displaymath}
 		y^{2}+y=x+\frac{1}{ax+b}\ \text{with}\ a,b\in \mathbb{F}_{q}\ \text{and}\ a\neq 0.	
        \end{displaymath}
 \end{enumerate}
 \label{1}
\end{lemma}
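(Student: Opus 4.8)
The plan is to produce the functions $x$ and $y$ from the Riemann-Roch theorem, read off a Weierstrass-type relation by a pole-order count at a rational place, and then bring it into the two stated normal forms by a change of variables depending on the characteristic.

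First I would fix a rational place $O$ of $F$ (such a place exists because $E$, being an elliptic curve, has an $\mathbb{F}_q$-rational point). By Proposition~\ref{0} a canonical divisor $W$ has $\deg W=0$ and $\ell(W)=1$, so $\deg(W-nO)=-n<0$, hence $\ell(W-nO)=0$, and the Riemann-Roch theorem yields $\ell(nO)=n$ for every $n\ge 1$. Thus $\mathcal{L}(O)=\mathbb{F}_q$, and choosing $x\in\mathcal{L}(2O)\setminus\mathbb{F}_q$ and $y\in\mathcal{L}(3O)\setminus\mathcal{L}(2O)$, pole-order comparisons force $(x)_\infty=2O$ and $(y)_\infty=3O$ (for instance $x$ cannot have a single simple pole, since that would identify $F$ with a rational function field, contradicting $g(F)=1$). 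Hence $[F:\mathbb{F}_q(x)]=\deg\big((x)_\infty\big)=2$ and $y\notin\mathbb{F}_q(x)$, so $F=\mathbb{F}_q(x,y)$, and the quadratic extension $F/\mathbb{F}_q(x)$ is separable (an inseparable one would again force $F$ to be rational). The seven functions $1,x,y,x^{2},xy,x^{3},y^{2}$ all lie in the $6$-dimensional space $\mathcal{L}(6O)$, so they satisfy a nontrivial $\mathbb{F}_q$-linear relation in which the coefficients of $x^{3}$ and $y^{2}$ are both nonzero (these being the only terms with a pole of order $6$ at $O$); after rescaling $x$ and $y$ the relation becomes a Weierstrass equation
\begin{displaymath}
y^{2}+a_{1}xy+a_{3}y=x^{3}+a_{2}x^{2}+a_{4}x+a_{6},
\end{displaymath}
and nonsingularity of $E$ translates into the usual non-degeneracy of the coefficients.

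If $p$ is odd, the substitution $y\mapsto y-\tfrac{1}{2}(a_{1}x+a_{3})$ eliminates the $xy$- and $y$-terms and leaves $y^{2}=f(x)$ with $f\in\mathbb{F}_q[x]$ of degree $3$; if $f$ had a repeated root the affine curve would be singular there, so $f$ must be square-free, which is part~(1). If $p=2$ one cannot complete the square, so instead I would branch on the coefficient $a_{1}$ (geometrically, on whether $j=0$), separability/smoothness ruling out the degenerate coefficient values. When $a_{1}=0$, rescaling $y$ to normalize its (nonzero) coefficient and translating $x$ to clear the $x^{2}$-term gives $y^{2}+y=f(x)$ with $\deg f=3$. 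When $a_{1}\neq 0$, after reducing to a standard form $y^{2}+xy=x^{3}+a_{2}x^{2}+a_{6}$ ($a_{6}\neq 0$), the substitution $y\mapsto xy$, followed by division by $x^{2}$, produces the Artin-Schreier equation $y^{2}+y=x+a_{2}+a_{6}x^{-2}$; clearing the order-$2$ pole by $y\mapsto y+\sqrt{a_{6}}/x$ and the constant by $x\mapsto x+a_{2}$ then yields $y^{2}+y=x+\tfrac{1}{ax+b}$ with $a\neq 0$, which is part~(2). That these are the only two outcomes can be seen abstractly from Riemann-Hurwitz for $F/\mathbb{F}_q(x)$: since $2g(F)-2=0$ the different $\mathfrak{d}$ has degree $4$, and the Artin-Schreier conductor formula $\mathfrak{d}=\sum_{P}(m_{P}+1)P'$, with each $m_{P}=-v_{P}(u)$ odd and positive, leaves only the ramification patterns ``one rational pole of order $3$'' and ``two rational poles of order $1$'' for the Artin-Schreier parameter $u$.

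The odd-characteristic case is essentially bookkeeping; the real obstacle is the $p=2$ normalization. One has to show the Artin-Schreier parameter $u$ can always be reduced to precisely one of the two listed shapes, which demands careful control of the pole divisor of $u$ under the operations $u\mapsto u+z^{2}+z$ ($z\in\mathbb{F}_q(x)$) and under linear fractional changes of $x$, together with the different computation needed to exclude every other ramification pattern --- in particular the \emph{a priori} admissible ``one pole at a place of degree $2$'', which one sidesteps by insisting from the start that $x$ be an $x$-coordinate of a Weierstrass model. Pinning down the final rescaling so that no stray constant term survives in $y^{2}+y=x+\tfrac{1}{ax+b}$ is where most of the remaining fiddly computation sits.
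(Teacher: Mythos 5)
The paper offers no proof of this lemma: it is quoted from Stichtenoth \cite{1} (Proposition 6.1.2 there), and the argument the paper implicitly relies on classifies the degree-$2$ extension $F/\mathbb{F}_{q}(x)$ by Kummer theory ($p$ odd) and Artin--Schreier theory ($p=2$), pinning down the admissible shapes of the defining equation by computing the different. Your proof is correct but takes the other standard route: you extract a full Weierstrass equation from the linear dependence of $1,x,y,x^{2},xy,x^{3},y^{2}$ in the $6$-dimensional space $\mathcal{L}(6O)$ and then normalize by explicit substitutions, branching on $a_{1}$ in characteristic $2$. The trade-off is real: Stichtenoth's approach is intrinsic and explains \emph{why} only these ramification patterns can occur (the different has degree $4$ and Artin--Schreier conductors $m_{P}+1$ are even), whereas your route bypasses the conductor analysis entirely --- the chain $y\mapsto xy$, division by $x^{2}$, $y\mapsto y+\sqrt{a_{6}}/x$, $x\mapsto x+a_{2}$ lands directly on $y^{2}+y=x+1/(ax+b)$ --- and, as a useful by-product, makes explicit the birational equivalence between the third type and a Weierstrass model, a fact the paper has to invoke separately from \cite{1} immediately after Lemma \ref{1}. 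Your closing worry about the ``degree-$2$ pole'' pattern is indeed moot in your setup, precisely because the substitutional derivation never leaves the two listed forms. Two routine points deserve one line each in a full write-up: $y\notin\mathbb{F}_{q}(x)$ because $v_{O}$ restricted to $\mathbb{F}_{q}(x)$ equals $2v_{P_{\infty}}$ (so is even) while $v_{O}(y)=-3$; and in the $p=2$, $a_{1}=0$ branch one must record $a_{3}\neq 0$ (otherwise $F/\mathbb{F}_{q}(x)$ is inseparable, forcing genus $0$) before rescaling $y$ by $a_{3}$.
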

\begin{remark}
	\begin{enumerate}
		\item By Lemma \ref{1}, it can be checked that the first defining equation belongs to the Kummer extension while the second and third belong to the Artin-Schreier curves. 
		\item For the simplicity, we denote the three kinds of defining equations as the first type, second type and third type respectively in the rest of the discussions.
		\item Denote by $O$ and $P_{\infty}$ the infinity of the corresponding elliptic curve and projective line $\mathbb{P}_{\mathbb{F}_{q}}$ respectively. The valuations at infinity of $x$ and $y$ are given by $v_{O}(x)=-2$ and $v_{O}(y)=-3$ for the first and second types. For the third type, we have $v_{O}(x)=-2$ and $v_{O}(y)=-1$. For the fact that $2v_{O}(y)=v_{O}(y^{2}+y)=v_{O}(x+\frac{1}{ax+b})=e(O|P_{\infty})v_{P_{\infty}}(x+\frac{1}{ax+b})=2v_{P_{\infty}}(x)=-2$, 
			then we have $v_{O}(y)=-1$ where $e(O| P_{\infty})$ is the ramification index of $O$ over $P_{\infty}$.
            \end{enumerate}
\end{remark}
The following Lemma \ref{2} gives an explicit basis of the Riemann-Roch spaces with regard to the elliptic function fields over the finite fields $\mathbb{F}_{q}$ with characteristic $p\geq 3 $ in reference \cite{2}.

\begin{lemma}
{(see \cite{2})}
Notations as above.  For any integer $k\geq 1$, the Riemann-Roch space $\mathcal{L}(kO)$
	has a basis $\{x^{i}y^{j}\ |\ i\in\mathbb{Z}_{\geq 0},\ j\in\{0, 1\}, 2i+3j\leq k\}$.
\label{2}
\end{lemma}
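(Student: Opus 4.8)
The plan is to exhibit the $k$ listed functions inside $\mathcal{L}(kO)$, to check that they are $\mathbb{F}_q$-linearly independent, and to compute $\ell(kO)=k$ separately by Riemann-Roch; these three facts together force the listed functions to be a basis. Throughout I would work with the first-type model $F=\mathbb{F}_q(x,y)$ with $y^2=f(x)$, $f\in\mathbb{F}_q[x]$ square-free of degree $3$ (available since $p\geq 3$), together with $v_O(x)=-2$ and $v_O(y)=-3$ as recorded in the Remark.

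First I would show that $O$ is the only pole of $x$ and of $y$, and that $\deg O=1$. For $x$, the pole divisor in the rational subfield $\mathbb{F}_q(x)$ is the single place $P_\infty$, so in $F$ the pole divisor $(x)_\infty$ is the conorm of $P_\infty$, a divisor of degree $[F:\mathbb{F}_q(x)]=2$; since $O$ lies above $P_\infty$ with $v_O(x)=-2$, this already accounts for the full degree, whence $(x)_\infty=2O$ and $\deg O=1$. Because $f(x)$ is a polynomial in $x$, the only pole of $y^2=f(x)$, hence of $y$, is $O$ as well. It follows that for $i\in\mathbb{Z}_{\geq 0}$ and $j\in\{0,1\}$ the function $x^iy^j$ is regular away from $O$ with $v_O(x^iy^j)=-(2i+3j)$, so the constraint $2i+3j\leq k$ is exactly the condition $(x^iy^j)+kO\geq 0$, i.e. $x^iy^j\in\mathcal{L}(kO)$.

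Next I would use the pole orders for linear independence: the values $2i+3j$ over the admissible index set are pairwise distinct, since $2(i-i')=3(j'-j)$ with $j,j'\in\{0,1\}$ forces $j=j'$ (otherwise the right-hand side is an odd multiple of $3$, hence not even) and then $i=i'$; functions with pairwise distinct valuations at a single place are linearly independent over $\mathbb{F}_q$. Finally, by the Riemann-Roch theorem for an elliptic function field, $\ell(kO)=\deg(kO)+\ell(W-kO)=k+\ell(W-kO)$, and $\deg(W-kO)=-k<0$ gives $\ell(W-kO)=0$, so $\ell(kO)=k$. A short count of the pairs $(i,j)$ with $i\geq 0$, $j\in\{0,1\}$, $2i+3j\leq k$ — splitting on the parity of $k$ — shows their number is also $k$, so the independent family of the previous step spans the $k$-dimensional space $\mathcal{L}(kO)$ and is therefore a basis.

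Essentially everything here is routine; the only step needing genuine care is the first one, namely confirming that $x$ (and therefore $y$) has no pole other than $O$ and that $\deg O=1$, since that is precisely what makes the arithmetic condition $2i+3j\leq k$ match the pole bound imposed by the divisor $kO$.
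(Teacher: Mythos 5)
Your proof is correct and is essentially the standard argument for this lemma (which the paper itself does not prove but delegates to reference \cite{2}): place the monomials $x^iy^j$ in $\mathcal{L}(kO)$ via their pole orders at $O$, deduce linear independence from the distinctness of the values $2i+3j$, and match the count against $\ell(kO)=k$ from Riemann--Roch. All three steps, including the preliminary verification that $O$ is the unique pole of $x$ and $y$ with $\deg O=1$, are carried out correctly.
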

It can be checked that Lemma \ref{2} also holds for the second type of elliptic curves (see \cite{1}). By Lemma \ref{1} and the Weierstrass gap theorem, there is an ascending chain of Riemann-Roch spaces with respect to the corresponding elliptic function field $F$:
\begin{displaymath}
	\mathbb{F}_{q}=\mathcal{L}(0)=	\mathcal{L}(O)\subsetneq\mathcal{L}(2O)\subsetneq\cdots\subsetneq\mathcal{L}(kO).
\end{displaymath}
By Lemma \ref{2}, we have $x^{i}\in\mathcal{L}(2iO)\setminus\mathcal{L}((2i-1)O)$ for $1\leq i\leq \lfloor\frac{k}{2}\rfloor$ and $x^{j}y\in\mathcal{L}((2j+3)O)\setminus\mathcal{L}((2j+2)O)$ for $0\leq j\leq \lfloor\frac{k-3}{2}\rfloor$ with regard to the valuations at infinity $v_{O}(x)=-2$ and  $v_{O}(y)=-3$ respectively.

By the proof in Lemma \ref{1} in \cite{1} , it can be checked that the third type of elliptic curve is birationally equivalent to the elliptic curve with Weierstrass equation:
\begin{equation}
	y_{1}^{2}+(ax_{1}+b)y_{1}=x_{1}^{3}+\epsilon_{1} x_{1}^{2}+\epsilon_{2} x_{1}+\epsilon_{3}
\end{equation}
for some $\epsilon_{i}\in\mathbb{F}_{q}$ and $i=1, 2, 3$ and the birational map can be given by
\begin{displaymath}
	\begin{cases}
		x_{1}=\xi(x)= u_{1}x+v_{1}\\
		y_{1}=\xi(y)= u_{2}(ax+b)y+v_{2}
	\end{cases}
\end{displaymath}
for some $u_{1}, u_{2}\in\mathbb{F}^{*}_{q}$ and $v_{1}, v_{2}\in\mathbb{F}_{q}$ and the valuations of $x_{1}$ and $y_{1}$ at infinity $O$ can be given by $v_{O}(x_{1})=-2$ and $v_{O}(y_{1})=-3$, respectively.

The basis of the Riemann-Roch space $\mathcal{L}(kO)$ for the third type of elliptic function field can be given by the following
\begin{displaymath}
\mathcal{L}(kO)=\left\lbrace 1, x, \cdots, x^{\lfloor\frac{k}{2}\rfloor}, y(ax+b), xy(ax+b), \cdots, x^{\lfloor\frac{k-3}{2}\rfloor}y(ax+b)\right\rbrace.
\end{displaymath}
	Now we introduce the concept of conorm and details can be checked at \cite{1}.
	\begin{definition}
	Notations as above. For a place $P\in\mathbb{P}_{\mathbb{F}_{q}(x)}$,                  its conorm is defined to be
	\begin{displaymath}
		\mathrm{Con}_{F/\mathbb{F}_{q}(x)}(P):=\sum_{P'\mid P}e(P'|P)\cdot P'
	\end{displaymath}
	where $e(P'|P)$ is the ramification index of $P'$ over $P$ and the sum runs over all places $P'\in\mathbb{P}_{F}$ lying over $P$. The conorm map is extended to a homomorphism from $\mathbb{D}_{\mathbb{F}_{q}(x)}$ to $\mathbb{D}_{F}$ by setting
	\begin{displaymath}
		\mathrm{Con}_{F/\mathbb{F}_{q}(x)}(\sum n_{P}\cdot P):=	\sum n_{P}\cdot \mathrm{Con}_{F/\mathbb{F}_{q}(x)}(P).
	\end{displaymath}
	\end{definition}

\subsection{Elliptic Curve Codes (ECCs)}
Let $P_{1}, P_{2}, \cdots, P_{n}$ be $n$ distinct $\mathbb{F}_{q}$-rational points and $D=P_{1}+P_{2}+\cdots+P_{n}$. For any divisor $G\in \mathbb{D}_{F}$ such that $\mathrm{Supp}(G)\cap \mathrm{Supp}(D)=\emptyset$, the ECC $C_{\mathcal{L}}(D, G)$ is defined to be the image of the evaluation map $ev_{\mathcal{L}}$
\begin{displaymath}
	ev_{\mathcal{L}}:\mathcal{L}(G)\to\mathbb{F}_{q}^{n};\ f\mapsto (f(P_{1}), f(P_{2}), \cdots, f(P_{n})).
\end{displaymath}
That is, $C_{\mathcal{L}}(D, G):= ev_{\mathcal{L}}(\mathcal{L}(G))$.

The parameters of ECC $\mathcal{C}_{\mathcal{L}}(D, G)$ are given by code length $n$, dimension $k=\ell(G)-\ell(G-D)$ and the minimum distance $n-\deg(G)\leq d\leq n+1-\deg(G)$.

 For any $P\in E(\mathbb{F}_{q})$, choose one local uniformizer $t$ for $P$. Then for any differential $\omega$, we can write
$\omega= udt$ with some $u\in F$. By writing the $P-$adic expansion for the $u$, we have $u=\sum_{i=i_{0}}^{\infty}a_{i}t^{i}$ for some $i_{0}\in\mathbb{Z}$ and $a_{i}\in\mathbb{F}_{q}$. The residue map of $\omega$ at point $P$ is defined as 
\begin{displaymath}
	\mathrm{res}_{P}(\omega)=\mathrm{res}_{P, t}(u)=a_{-1}.
\end{displaymath}
The residue ECC $C_{\Omega}(D, G)$ is defined to be the image of residue map $ev_{\Omega}$
\begin{displaymath}
	ev_{\Omega}:\Omega_{F}(G)\to\mathbb{F}_{q}^{n};\ \omega\mapsto\left( \mathrm{res}_{P_{1}}(\omega), \mathrm{res}_{P_{2}}(\omega),\cdots, \mathrm{res}_{P_{n}}(\omega)\right).
\end{displaymath}
That is, $C_{\Omega}(D, G):= ev_{\Omega}(\Omega_{F}(G))$.

The parameters of the residue ECCs $\mathcal{C}_{\Omega}(D, G)$ are given by the code length $n$, dimension $k=i(G-D)-i(G)$ and the minimum distance $ \deg(G)\leq d\leq \deg(G)+1$ where $i(G-D)$ and $i(G)$ are the indices of specialty of $G-D$ and $G$ respectively. By the Serre duality theorem in \cite{1}, we have $i(G-D)=\ell(W-(G-D))$ and $i(G)=\ell(W-G)$.

In this paper, we will focus on the single-point ECCs $\mathcal{C}_{\mathcal{L}}(D, kO)$ and $\mathcal{C}_{\Omega}(D, kO)$ for some non-negative integer $k$. Since the dual of an MDS code is still MDS, the minimum distances
\begin{displaymath}
d\left( \mathcal{C}_{\mathcal{L}}(D, G)\right) =n-k+1\ \text{and}\ d\left( \mathcal{C}_{\Omega}(D, G)\right) =k+1
\end{displaymath}
which means they are both MDS codes, or
 \begin{displaymath}
  d\left( \mathcal{C}_{\mathcal{L}}(D, G)\right)=n-k\ \text{and}\ d\left( \mathcal{C}_{\Omega}(D, G)\right) =k
  \end{displaymath}
which means they are both NMDS codes. 

Based on the results of Lemma \ref{2}, we can directly choose the generator matrix of the ECC $\mathcal{C}_{\mathcal{L}}(D, kO)$ as in Lemma 3.6 in \cite{2} for the first two types of ECCs

\begin{small}
\begin{displaymath}
	\begin{pmatrix}
		1& 1& \cdots& 1\\
		\alpha_{1}&\alpha_{2}&\cdots&\alpha_{n}\\
		\vdots&\vdots&\cdots&\vdots\\
		\alpha^{\lfloor\frac{k}{2}\rfloor}_{1}&	\alpha^{\lfloor\frac{k}{2}\rfloor}_{2}&\cdots&	\alpha^{\lfloor\frac{k}{2}\rfloor}_{n}\\
		\beta_{1}&\beta_{2}&\cdots&\beta_{n}\\
		\beta_{1}\alpha_{1}&\beta_{2}\alpha_{2}&\cdots&\beta_{n}\alpha_{n}\\
		\vdots&\vdots&\cdots&\vdots\\
		\beta_{1}\alpha_{1}^{\lfloor\frac{k-3}{2}\rfloor}&	\beta_{2}\alpha_{2}^{\lfloor\frac{k-3}{2}\rfloor}&\cdots&	\beta_{n}\alpha_{n}^{\lfloor\frac{k-3}{2}\rfloor}
	\end{pmatrix}
\end{displaymath}
\end{small}
and the generator matrix for the third type of ECCs
\begin{small}
\begin{displaymath}
	\begin{pmatrix}
		1& 1& \cdots& 1\\
		\alpha_{1}&\alpha_{2}&\cdots&\alpha_{n}\\
		\vdots&\vdots&\cdots&\vdots\\
		\alpha^{\lfloor\frac{k}{2}\rfloor}_{1}&	\alpha^{\lfloor\frac{k}{2}\rfloor}_{2}&\cdots&	\alpha^{\lfloor\frac{k}{2}\rfloor}_{n}\\
		\beta_{1}(a\alpha_{1}+b)&\beta_{2}(a\alpha_{2}+b)&\cdots&\beta_{n}(a\alpha_{n}+b)\\
		\beta_{1}\alpha_{1}(a\alpha_{1}+b)&\beta_{2}\alpha_{2}(a\alpha_{2}+b)&\cdots&\beta_{n}\alpha_{n}(a\alpha_{n}+b)\\
		\vdots&\vdots&\cdots&\vdots\\
		\beta_{1}\alpha_{1}^{\lfloor\frac{k-3}{2}\rfloor}(a\alpha_{1}+b)&	\beta_{2}\alpha_{2}^{\lfloor\frac{k-3}{2}\rfloor}(a\alpha_{2}+b)&\cdots&	\beta_{n}\alpha_{n}^{\lfloor\frac{k-3}{2}\rfloor}(a\alpha_{n}+b)
	\end{pmatrix}.
    \vspace{-0.25 cm}
\end{displaymath}
\end{small}
\subsection{Schur Squares of Linear Codes}
Two codes $\mathcal{C}_{1}$ and $\mathcal{C}_{2}$ are said to be (monomialy) equivalent if $\mathcal{C}_{2}$ can be obtained from $\mathcal{C}_{1}$ by a permutation of coordinates and a component-wise multiplication with some vector $\mathbf{v}=(v_{1}, v_{2}, \cdots, v_{n})\in\left(\mathbb{F}^{*}_{q} \right)^{n}$. 

\begin{definition}
	For two vector $\bm{a}$ and $\bm{b}$ in $\mathbb{F}^{n}_{q}$, the Schur product $\bm{a}\star \bm{b}$ of $\bm{a}$ and $\bm{b}$ is the component-wise product, i.e.,
	\begin{displaymath}
		\bm{a}\star \bm{b}:=(a_{1}b_{1}, a_{2}b_{2}, \cdots, a_{n}b_{n}).
	\end{displaymath}
	For two linear codes $\mathcal{C}_{1}, \mathcal{C}_{2}\subseteq\mathbb{F}_{q}^{n}$, the Schur product $\mathcal{C}_{1}\star\mathcal{C}_{2}$ of $\mathcal{C}_{1}$ and $\mathcal{C}_{2}$ is the linear subspace of $\mathbb{F}_{q}^{n}$ spanned by all the Schur products $\bm{c}_{1}\star \bm{c}_{2}$ with $\bm{c}_{1}\in\mathcal{C}_{1}$ and $\bm{c}_{2}\in\mathcal{C}_{2}$, i.e.,
	\begin{displaymath}
\mathcal{C}_{1}\star\mathcal{C}_{2}:=\mathrm{span}_{\mathbb{F}_{q}}\left \lbrace \bm{c}_{1}\star \bm{c}_{2}:\bm{c}_{1}\in\mathcal{C}_{1}, \bm{c}_{2}\in\mathcal{C}_{2} \right\rbrace. 
	\end{displaymath}
		If $\mathcal{C}_{1}=\mathcal{C}_{2}=\mathcal{C}$, then we call $\mathcal{C}^{\star 2}:=\mathcal{C}\star\mathcal{C}$ the Schur square of $\mathcal{C}$.
\end{definition}

Schur squares are introduced for cryptography analysis and are also used to distinguish a given code from random ones, such as in \cite{7} \textit{et al}.

The dimension of the Schur square of one code is invariant under the equivalence between the codes. For any $[n, k]_{q}-$linear code $\mathcal{C}$, there is an upper bound for the dimension of the Schur square $\dim{\mathcal{C}^{\star 2}}\leq\min\{n, \frac{k(k-1)}{2}\}$.

The following proposition is a criterion in determining whether a code is equivalent to a Reed-Solomon code in reference \cite{17}.
\begin{proposition}[\cite{17}]\label{3}
\vspace{-0.25 cm}
	 Let $\mathcal{C}\subseteq\mathbb{F}_{q}^{n}$ be a linear code with $\dim(\mathcal{C})\leq\frac{n}{2}$. If $\dim\left(\mathcal{C}^{\star 2} \right)\geq 2\dim(\mathcal{C})$, then the code $\mathcal{C}$ is not equivalent to a Reed-Solomon code.
    \vspace{-0.25 cm}
\end{proposition}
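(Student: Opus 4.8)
The plan is to prove the contrapositive: I will show that every code equivalent to a Reed-Solomon code of dimension $k$ with $2k\leq n$ has a Schur square of dimension exactly $2k-1$, which is strictly smaller than $2k$; hence a code $\mathcal{C}$ with $\dim(\mathcal{C}^{\star 2})\geq 2\dim(\mathcal{C})$ cannot lie in such an equivalence class. The argument rests on two ingredients: an explicit computation of the Schur square of a Reed-Solomon code, and the fact (recorded above) that $\dim(\mathcal{C}^{\star 2})$ depends only on the equivalence class of $\mathcal{C}$.

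For the computation, realize the Reed-Solomon code $RS_{k}$ of length $n$ and dimension $k$ as the image of the evaluation map $f\mapsto(f(\alpha_{1}),\cdots,f(\alpha_{n}))$ on the space of polynomials of degree at most $k-1$, where $\alpha_{1},\cdots,\alpha_{n}\in\mathbb{F}_{q}$ are distinct. Since the product of two polynomials of degree at most $k-1$ has degree at most $2k-2$, we get $RS_{k}\star RS_{k}\subseteq RS_{2k-1}$; conversely every monomial $x^{m}$ with $0\leq m\leq 2k-2$ factors as $x^{i}\cdot x^{j}$ with $0\leq i, j\leq k-1$, so $RS_{2k-1}\subseteq RS_{k}\star RS_{k}$. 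Therefore $RS_{k}^{\star 2}=RS_{2k-1}$ and $\dim(RS_{k}^{\star 2})=\min\{2k-1, n\}$. Under the hypothesis $k\leq n/2$ we have $2k-1\leq n-1<n$, so $\dim(RS_{k}^{\star 2})=2k-1$.

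Now suppose, for contradiction, that $\mathcal{C}$ is equivalent to a Reed-Solomon code $RS_{k}$, so that $\dim(\mathcal{C})=k\leq n/2$. Writing the equivalence as a permutation of coordinates followed by a component-wise multiplication by $\mathbf{v}=(v_{1},\cdots,v_{n})\in(\mathbb{F}_{q}^{*})^{n}$, one checks that permutations commute with the Schur product and that $(\mathbf{v}\star\mathcal{D})\star(\mathbf{v}\star\mathcal{D})=(\mathbf{v}\star\mathbf{v})\star\mathcal{D}^{\star 2}$ for any code $\mathcal{D}$; hence $\mathcal{C}^{\star 2}$ is again equivalent to $RS_{k}^{\star 2}$, and in particular $\dim(\mathcal{C}^{\star 2})=\dim(RS_{k}^{\star 2})=2k-1=2\dim(\mathcal{C})-1$. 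This contradicts $\dim(\mathcal{C}^{\star 2})\geq 2\dim(\mathcal{C})$, which proves the proposition.

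The only points needing a little care — rather than a genuine obstacle — are establishing the \emph{equality} $RS_{k}^{\star 2}=RS_{2k-1}$ (the inclusion $\subseteq$ is immediate, but the reverse uses the monomial factorization together with the distinctness of the evaluation points) and ensuring the dimension count does not collapse: this is exactly where the hypothesis $\dim(\mathcal{C})\leq n/2$ enters, forcing $2k-1<n$ so that the Schur square of the Reed-Solomon code is a proper subspace of $\mathbb{F}_{q}^{n}$. The degenerate case in which no Reed-Solomon code of length $n$ exists (for instance $n>q$) makes the statement vacuous and needs no separate treatment.
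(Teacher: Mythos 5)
Your proof is correct: the paper states this proposition as an imported result from \cite{17} without giving its own proof, and your argument --- showing $RS_k^{\star 2}=RS_{2k-1}$ via the monomial factorization, using $2k\leq n$ to get $\dim(RS_k^{\star 2})=2k-1<2k$, and invoking the invariance of the Schur-square dimension under permutation and diagonal scaling --- is exactly the standard proof of that cited fact. The only implicit (and harmless) assumption is $\dim(\mathcal{C})\geq 1$, since for the zero code the bound $2k-1$ is meaningless and the statement degenerates.
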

A linear MDS code is called non-RS MDS if it is not linearly equivalent to any RS code. The lemma \ref{3} illustrates that any MDS code satisfying $\dim\left(\mathcal{C}^{\star 2} \right)\geq 2k $ is a non-RS MDS code.

\subsection{Twisted Elliptic Curve Codes (TECCs)}

Now we introduce the definition of twisted elliptic curve codes (TECCs). 
\begin{definition}
	Let $D=\{P_{i}=(\alpha_{i}, \beta_{i})\ |\ i\in \{1, \cdots, n\}\}\subseteq E(\mathbb{F}_{q})\backslash\{ O\}$ be the set of $n$ distinct evaluation points on $E$. For two positive integers $\ell, k$ and $\ell\leq\min\{k, n-k\}< n$, suppose that $\mathbf{t}=(t_{1}, t_{2}, \cdots, t_{\ell})$, $1\leq t_{i}\leq\min\{k-1, n-k-1\}$ are distinct and $\mathbf{h}=(h_{1}, h_{2}, \cdots, h_{\ell})$, where $0\leq h_{i}\leq \lfloor\frac{k}{2}\rfloor$ if $k+t_{i}\equiv 1$ mod $2$ or $0\leq h_{j}\leq \lfloor\frac{k-3}{2}\rfloor$ if $k+t_{j}\equiv 0$ mod $2$. Each $h_{i},h_{j}$ are also distinct and $\bm{\eta}=(\eta_{1}, \eta_{2}, \cdots, \eta_{\ell})\in\mathbb{F}^{\ell}_{q}$. Then the twisted elliptic curve codes (TECCs) are defined to be the image of the evaluation map
    $$ev_{D}: S(\bm{h},\bm{t},\bm{\eta})\rightarrow \mathbb{F}^{n}_{q}; f\mapsto (f(P_{1}),f(P_{2}),\cdots,f(P_{n}))$$ and denoted by $\mathcal{C}(D, kO,\bm{h},\bm{t}, \bm\eta)$, where $S(\bm{h},\bm{t},\bm{\eta})$ is called the defining set and it can be divided into the followings:
    \begin{small}
	\begin{displaymath}
		\begin{aligned}
			S(\bm{h},\bm{t},\bm{\eta})=\left\lbrace \sum_{i=0}^{\lfloor\frac{k}{2}\rfloor}a_{i}x^{i}+\sum_{j=0}^{ \lfloor\frac{k-3}{2}\rfloor}b_{j}x^{j}y+\sum_{\substack{s\in\{1, 2, \cdots, \ell\} \\
					k+t_{s}\equiv 1\ \text{mod}\ 2
			}}a_{h_{s}}\eta_{s}x^{\frac{k-3+t_{s}}{2}}y+\sum_{\substack{s\in\{1, 2, \cdots, \ell\} \\
					k+t_{s}\equiv 0\ \text{mod}\ 2}} b_{h_{s}}\eta_{s}x^{ \frac{k+t_{s}}{2}}\right.\\\left.
			\bigg|\ a_{i},b_{j}\in\mathbb{F}_{q},\ \eta\in\mathbb{F}^{*}_{q}\right\rbrace
		\end{aligned}
	\end{displaymath}
    \label{112}
    \end{small}
    for the first and second types of TECCs and the third type of TECCs can be given by   
    \begin{small}
\begin{displaymath}
		\begin{aligned}
			S(\bm{h},\bm{t},\bm{\eta})=\left\lbrace \sum_{i=0}^{\lfloor\frac{k}{2}\rfloor}a_{i}x^{i}+\sum_{j=0}^{ \lfloor\frac{k-3}{2}\rfloor}b_{j}x^{j}(ax+b)y+\sum_{\substack{s\in\{1, 2, \cdots, \ell\} \\
					k+t_{s}\equiv 1\ \mod\ 2
			}}a_{h_{s}}\eta_{s}x^{\frac{k-3+t_{s}}{2}}(ax+b)y+\sum_{\substack{s\in\{1, 2, \cdots, \ell\} \\
					k+t_{s}\equiv 0\ \mod\ 2}}b_{h_{s}}\eta_{s}x^{ \frac{k+t_{s}}{2}}\right.\\\left.
			\bigg|\ a_{i},b_{j}\in\mathbb{F}_{q},\ \eta\in\mathbb{F}^{*}_{q}\right\rbrace.
		\end{aligned}
	\end{displaymath}
    \end{small}
    \end{definition}
We assume $k\geq 3$ in the rest of the discussions. By calculating the indices of the poles, it can be checked that the rational functions
\begin{small}
\begin{displaymath}
	\begin{cases}
	x^{i}& \text{for}\ i\in\{0, \cdots, \lfloor\frac{k}{2}\rfloor\}/\{h_{s}\ |\ h_{s}\in\{h_{1}, \cdots, h_{\ell}\}\ \text{with}\ k+t_{s}\equiv 1\ \text{mod}\ 2\ \}\\
	 x^{h_{s}}+\eta_{s}x^{\frac{k-3+t_{s}}{2}}y&\text{for}\ s\in\{1, \cdots, \ell\}\ \text{and}\ k+t_{s}\equiv 1\ \mod\ 2\\
	 x^{j}y&\text{for}\ j\in\{0, \cdots, \lfloor\frac{k-3}{2}\rfloor\}/\
\{h_{s}\ |\ h_{s}\in\{h_{1},\cdots,h_{\ell}\}\ \text{with}\ k+t_{s}\equiv 0\ \text{mod}\ 2\ \}\\
	 x^{h_{s}}y+\eta_{s}x^{\frac{k+t_{s}}{2}}&\text{for}\ s\in\{1, \cdots, \ell\}\ \text{and}\ k+t_{s}\equiv 0\ \mod\ 2
	 \end{cases}
	\end{displaymath}
    \end{small}
    and 
    \begin{small}
		\begin{displaymath}
		\begin{cases}
			x^{i}& \text{for}\ i\in\{0, \cdots, \lfloor\frac{k}{2}\rfloor\}/\{h_{s}\ |\ h_{s}\in\{h_{1}, \cdots, h_{\ell}\}\ \text{with}\ k+t_{s}\equiv 1\ \text{mod}\ 2\ \}\\
            x^{h_{s}}+\eta_{s}x^{\frac{k-3+t_{s}}{2}}(ax+b)y&\text{for}\ s\in\{1, \cdots, \ell\}\ \text{and}\ k+t_{s}\equiv 1\ \mod\ 2\\
			x^{j}y&\text{for}\ j\in\{0, \cdots, \lfloor\frac{k-3}{2}\rfloor\}/\{h_{s}\ |\ h_{s}\in\{h_{1}, \cdots, h_{\ell}\}\ \text{with}\ k+t_{s}\equiv 0\ \text{mod}\ 2\ \}\\
            x^{h_{s}}(ax+b)y+\eta_{s}x^{\frac{k+t_{s}}{2}}&\text{for}\ s\in\{1, \cdots, \ell\}\ \text{and}\ k+t_{s}\equiv 0\ \mod\ 2
		\end{cases}
	\end{displaymath}
    \end{small}
    are linearly independent over $\mathbb{F}_{q}$ and they form a basis for the two cases of $S(\bm{h},\bm{t},\bm{\eta})$.

In this paper, we shall focus on the following specific TECCs with one twist.  The defining set is divided into the following two cases with respect to the parity of $k$:
\begin{enumerate}
\item For odd $k$ and $0\leq\ell\leq\frac{k-3}{2}$:
\begin{small}
\begin{displaymath}
	S_{\ell}^{(1)}=\left\lbrace \sum_{i=0}^{\frac{k-1}{2}}a_{i}x^{i}+\sum_{j=0}^{ \frac{k-3}{2}}b_{j}x^{j}y+b_{\ell}\eta x^{\frac{k+1}{2}}\ \bigg|\ a_{i},b_{j}\in\mathbb{F}_{q},\ \eta\in\mathbb{F}^{*}_{q}\right\rbrace. 
\end{displaymath}
\end{small}

\item For even $k$ and $0\leq\ell\leq\frac{k}{2}$:
\begin{small}
\begin{displaymath}
	S_{\ell}^{(2)}=\left\lbrace \sum_{i=0}^{ \frac{k}{2}}a_{i}x^{i}+\sum_{j=0}^{ \frac{k-4}{2}}b_{j}x^{j}y+a_{\ell}\eta x^{\frac{k-2}{2}}y\ \bigg|\ a_{i},b_{j}\in\mathbb{F}_{q},\ \eta\in\mathbb{F}^{*}_{q}\right\rbrace. 
\end{displaymath}
\end{small}
\end{enumerate}
As for the TECCs constructed over the third type of elliptic curves, the defining sets are divided:
\begin{enumerate}
\item For odd $k$ and $0\leq\ell\leq\frac{k-3}{2}$:
\begin{small}
	\begin{displaymath}
		S_{\ell}^{(1)}=\left\lbrace \sum_{i=0}^{\frac{k-1}{2}}a_{i}x^{i}+\sum_{j=0}^{ \frac{k-3}{2}}b_{j}x^{j}(ax+b)y+ b_{\ell}\eta x^{\frac{k+1}{2}}\ \bigg|\ a_{i},b_{j}\in\mathbb{F}_{q},\ \eta\in\mathbb{F}^{*}_{q}\right\rbrace. 
	\end{displaymath}
    \end{small}
\item For even $k$ and $0\leq\ell\leq\frac{k}{2}$:
    \begin{small}
	\begin{displaymath}
		S_{\ell}^{(2)}=\left\lbrace \sum_{i=0}^{ \frac{k}{2}}a_{i}x^{i}+\sum_{j=0}^{ \frac{k-4}{2}}b_{j}x^{j}(ax+b)y+a_{\ell}\eta x^{\frac{k-2}{2}}(ax+b)y\ \bigg|\ a_{i},b_{j}\in\mathbb{F}_{q},\ \eta\in\mathbb{F}^{*}_{q}\right\rbrace. 
	\end{displaymath}
    \end{small}
    \end{enumerate}

    In the following discussions, we consider the TECCs with one twist:
\begin{displaymath}
	\mathcal{C}^{(i)}(D, kO,\ell, \eta)=\{ev_{D}(f(x)):\ f(x)\in S_{\ell}^{(i)} \}\ \text{ for $i=1,2$.}
\end{displaymath}

\section{The Explicit Construction for the Parity-Check Matrices of ECCs}
\label{sec:4}

In this section, we will first give an explicit construction for the parity-check matrices of ECCs which is important in computing the dual codes of TECCs. 

By Lemma \ref{1}, we know that the elliptic function fields belong to Kummer extensions and Artin-Schreier extensions. The following proposition calculates the differents between the elliptic function fields and rational function fields.

\begin{proposition} 
(see \cite{1}) Notations as above. 
\begin{enumerate}
\item  $p$ is odd. Suppose that $F=\mathbb{F}_{q}(x, y)$ with
\begin{displaymath}
	y^{2}=f(x)\in \mathbb{F}_{q}[x],
\end{displaymath}
where $f(x)$ is a square-free polynomial of degree $3$. Consider the decomposition $f(x)=c\prod_{i=1}^{r}p_{i}(x)$ of $f(x)$ into irreducible monic polynomials $p_{i}(x)\in \mathbb{F}_{q}[x]$ with $0\neq c\in \mathbb{F}_{q}$. Denote by $Q_{i}\in \mathbb{P}_{\mathbb{F}_{q}(x)}$ the place of $\mathbb{F}_{q}(x)$ corresponding to $p_{i}(x)$ and $Q_{\infty}$ the pole of $x$. Then the following hold:
\begin{enumerate}
	\item  $\mathbb{F}_{q}$ is the full constant field of $F$ and $F/\mathbb{F}_{q}(x)$ is an elliptic function field.
	\item  The extension $F/\mathbb{F}_{q}(x)$ is cyclic of degree $2$. The places $Q_{1},\cdots, Q_{r}$ and $Q_{\infty}$ are ramified in $F/\mathbb{F}_{q}(x)$; each of them has exactly one extension in $F$, say $S_{1}, \cdots, S_{r}$ and $S_{\infty}$, and we have the ramification indices of the places $e(S_{i}|Q_{j})=e(O|Q_{\infty})=2$, $\deg S_{j}=\deg Q_{j}$ and $\deg O=1$.
	\item  $Q_{1},\cdots, Q_{r}$ and $Q_{\infty}$ are the only places of $\mathbb{F}_{q}(x)$ which are ramified in $F/\mathbb{F}_{q}(x)$, and the different of $F/\mathbb{F}_{q}(x)$ is 
	\begin{displaymath}
		\mathrm{Diff}(F/\mathbb{F}_{q}(x))=S_{1}+\cdots+S_{r}+O.
	\end{displaymath}
\end{enumerate}

\item  $p$ is even. Suppose that $F=\mathbb{F}_{q}(x, y)$ with
	\begin{displaymath}
	y^{2}+y=f(x)\ \text{and}\deg f=3,
\end{displaymath}
or
\begin{displaymath}
	y^{2}+y=x+\frac{1}{ax+b}\ \text{for some}\ a,b\in \mathbb{F}_{q}\ \text{and}\ a\neq 0.
\end{displaymath}
 Denote by $Q_{\infty}\in \mathbb{P}_{\mathbb{F}_{q}(x)}$ the pole of $x$ in $\mathbb{F}_{q}(x)$ and by $Q'\in \mathbb{P}_{\mathbb{F}_{q}(x)}$ the place corresponding to the polynomial $ax+b$ in $\mathbb{F}_{q}(x)$. Then the following hold:
\begin{enumerate}
	\item $\mathbb{F}_{q}$ is the full constant field of $F$ and $F/\mathbb{F}_{q}(x)$ is an elliptic function field.
	\item The extension $F/\mathbb{F}_{q}(x)$ is cyclic of degree $2$. The only places of $\mathbb{F}_{q} (x)$ which ramify in $F/\mathbb{F}_{q}(x)$ are
	\begin{displaymath}
		\begin{cases}
			Q_{\infty},& \text{in\ case\ (2)}\\
			Q_{\infty}\ \text{and}\ Q',& \text{in\ case\ (3)}.
		\end{cases}
	\end{displaymath}
	Let $O$ (resp. $S'$) be the place of $F$ lying over $Q_{\infty}$ (resp. $Q'$). Then $\deg O=\deg S'=1$ and 
	\begin{displaymath}
			\mathrm{Diff}(F/\mathbb{F}_{q}(x))=
		\begin{cases}
	4O,&\text{in\ case\ (2)}\\
	2O+2S',&\text{in\ case\ (3)}.
		\end{cases}
	\end{displaymath}
\end{enumerate}
\end{enumerate}
\end{proposition}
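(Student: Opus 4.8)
The plan is to treat part~1 (the Kummer case, $p$ odd) and part~2 (the two Artin--Schreier cases, $p=2$) separately, and in each to apply the standard structure theorem for such degree-$2$ extensions from \cite{1}: this identifies which places of $\mathbb{F}_q(x)$ ramify, with what ramification index and residue degree, and gives the different exponent at each ramified place directly from the local data. A Riemann--Hurwitz computation will then reproduce $g_F=1$ and, together with the fact that $\mathbb{F}_q$ is already the full constant field of $\mathbb{F}_q(x)$, show that the constant field does not grow in $F$, which settles part~(a) in each case.

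For part~1 I would start from $F=\mathbb{F}_q(x)(y)$ with $y^2=f(x)$ and $\gcd(2,p)=1$. Since $f$ is square-free of odd degree $3$, it is not a square in $\mathbb{F}_q(x)$, so $F/\mathbb{F}_q(x)$ is cyclic Kummer of degree $2$. The Kummer ramification criterion says $Q$ ramifies iff $2\nmid v_Q(f)$, and then totally; square-freeness gives $v_{Q_i}(f)=1$ for each $i$ and $v_{Q_\infty}(f)=-\deg f=-3$, all odd, which produces the unique extensions $S_1,\dots,S_r,O$ with $e=2$ and --- by $ef=[F:\mathbb{F}_q(x)]=2$ --- relative degree $1$, hence $\deg S_j=\deg Q_j$ and $\deg O=1$. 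These places are tamely ramified ($p\nmid 2$), so Dedekind's different theorem gives different exponent $e-1=1$ at each, i.e.\ $\mathrm{Diff}(F/\mathbb{F}_q(x))=S_1+\cdots+S_r+O$; since $\sum_j\deg Q_j=\deg f=3$ its degree is $4$, and Riemann--Hurwitz gives $2g_F-2=2(-2)+4=0$, confirming $g_F=1$.

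For part~2, $p=2$ and $F=\mathbb{F}_q(x)(y)$ with $y^2+y=u$ (an Artin--Schreier equation of degree $2$), where $u=f(x)$ with $\deg f=3$ in case~(2) and $u=x+(ax+b)^{-1}$ in case~(3). The structure theorem says a place at which $u$ is regular --- possibly after an allowed substitution $u\mapsto u+(w^2-w)$ --- is unramified, while a place with reduced pole order $m=-v_Q(u)$ coprime to $2$ is totally ramified with different exponent $(2-1)(m+1)=m+1$. In case~(2) the only pole of $u$ is $Q_\infty$, with $m=3$, so $\mathrm{Diff}(F/\mathbb{F}_q(x))=4O$. In case~(3) one checks $v_{Q_\infty}(u)=-1$ (the simple pole of $x$ survives, $(ax+b)^{-1}$ being regular at $Q_\infty$) and $v_{Q'}(u)=-1$ (the simple pole of $(ax+b)^{-1}$, with $x$ regular at $Q'$), all other places having $v_Q(u)\ge0$; hence $O$ and $S'$ ramify, each with exponent $2$, so $\mathrm{Diff}(F/\mathbb{F}_q(x))=2O+2S'$. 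Total ramification gives $\deg O=\deg S'=1$, so $\deg\mathrm{Diff}=4$ and Riemann--Hurwitz again yields $g_F=1$.

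The step I expect to be the main obstacle is confirming, in the Artin--Schreier cases, that the pole orders recorded above are already the reduced ones --- i.e.\ that no substitution $u\mapsto u+(w^2-w)$ can lower the pole of $u$ at $Q_\infty$ or $Q'$ or make $u$ regular at some other place. This comes down to the elementary observation that $w^2-w$ has even (or no) pole order at every place, so it cannot cancel the odd-order poles ($-1,-1$ in case~(3), $-3$ in case~(2)) of $u$; once this is granted, the rest is the bookkeeping above together with the cited structure theorems and Riemann--Hurwitz.
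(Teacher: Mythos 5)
The paper does not actually prove this proposition; it is quoted verbatim from Stichtenoth \cite{1} (the Kummer and Artin--Schreier structure theorems specialized to degree $2$), so there is no in-paper proof to compare against. Your argument is precisely the standard proof underlying that citation --- the Kummer ramification criterion with tame different exponent $e-1=1$, the Artin--Schreier different exponent $(p-1)(m+1)$ at places of reduced pole order $m$, the residue-degree and constant-field conclusions from total ramification over a rational place, and Riemann--Hurwitz to get $g_F=1$ --- and it is correct, including the one genuinely delicate point you flag: that $z^2-z$ can only have even pole order, so the odd pole orders $3$ and $1,1$ of $u$ are already reduced and cannot be removed by a substitution $u\mapsto u+(z^2-z)$.
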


\begin{theorem}
\cite{15}
Let $\omega\in\Omega_{F}$ be a canonical differential such that $v_{P_{i}}(\omega)=-1$ for $i=1, 2, \cdots, n$. Let $H:=D-G+(\omega)$ and $\bm\gamma:=(\mathrm{res}_{P_{1}}(\omega), \cdots, \mathrm{res}_{P_{n}}(\omega))$. Then 
\begin{displaymath}
	\mathcal{C}_{\mathcal{L}}(D, G)^{\perp}=\bm\gamma\star\mathcal{C}_{\mathcal{L}}(D, H).
    \label{11}
    \vspace{-0.25 cm}
\end{displaymath}

\end{theorem}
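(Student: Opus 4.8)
The plan is to prove the two inclusions by a dimension count, realizing $\bm\gamma\star\mathcal{C}_{\mathcal{L}}(D,H)$ essentially as the residue code and then invoking classical AG duality. The hypothesis that $\omega$ is a canonical differential with $v_{P_{i}}(\omega)=-1$ for every $i$ is exactly what makes this work, and its existence is part of the statement, so I will not address it (it can be obtained from a dimension argument on $\Omega_{F}(-D)$). Throughout write $W=(\omega)$, so that $H=D-G+W$, and observe $v_{P_{i}}(H)=v_{P_{i}}(D)-v_{P_{i}}(G)+v_{P_{i}}(\omega)=1-0-1=0$; hence $\mathrm{Supp}(H)\cap\mathrm{Supp}(D)=\emptyset$, so $\mathcal{C}_{\mathcal{L}}(D,H)$ is well defined, and every $f\in\mathcal{L}(H)$ is regular at each $P_{i}$. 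Also $\gamma_{i}=\mathrm{res}_{P_{i}}(\omega)\neq 0$ for every $i$, precisely because the pole of $\omega$ at $P_{i}$ is simple.

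\textbf{Step 1 (the inclusion $\bm\gamma\star\mathcal{C}_{\mathcal{L}}(D,H)\subseteq\mathcal{C}_{\mathcal{L}}(D,G)^{\perp}$).} Take $f\in\mathcal{L}(G)$ and $g\in\mathcal{L}(H)$. Both are regular at each $P_{i}$, and by Proposition \ref{0}(2), $(fg\omega)=(f)+(g)+(\omega)\geq -G-H+W=-D$, so $fg\omega$ has at worst simple poles and these lie in $\mathrm{Supp}(D)$. The residue theorem $\sum_{P\in\mathbb{P}_{F}}\mathrm{res}_{P}(fg\omega)=0$ then reads $0=\sum_{i=1}^{n}\mathrm{res}_{P_{i}}(fg\omega)=\sum_{i=1}^{n}f(P_{i})g(P_{i})\gamma_{i}$, using the local identity $\mathrm{res}_{P_{i}}(fg\omega)=f(P_{i})g(P_{i})\mathrm{res}_{P_{i}}(\omega)$, valid since $fg$ is regular at $P_{i}$. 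This says $\langle ev_{\mathcal{L}}(f),\ \bm\gamma\star ev_{\mathcal{L}}(g)\rangle=0$, i.e.\ $\bm\gamma\star\mathcal{C}_{\mathcal{L}}(D,H)\subseteq\mathcal{C}_{\mathcal{L}}(D,G)^{\perp}$.

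\textbf{Step 2 (equality of dimensions).} Since all $\gamma_{i}\neq 0$, $\dim\bigl(\bm\gamma\star\mathcal{C}_{\mathcal{L}}(D,H)\bigr)=\dim\mathcal{C}_{\mathcal{L}}(D,H)=\ell(H)-\ell(H-D)$. By Proposition \ref{0}(2) and Serre duality, $\ell(H)=\ell(W-(G-D))=i(G-D)$ and $\ell(H-D)=\ell(W-G)=i(G)$. On the other hand $\dim\mathcal{C}_{\mathcal{L}}(D,G)^{\perp}=n-\bigl(\ell(G)-\ell(G-D)\bigr)$, and Riemann--Roch in genus $1$ applied to $G$ and to $G-D$ gives $\ell(G)=\deg G+i(G)$ and $\ell(G-D)=\deg G-n+i(G-D)$; subtracting shows $\dim\mathcal{C}_{\mathcal{L}}(D,G)^{\perp}=i(G-D)-i(G)$. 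Thus both sides of the claimed identity have the same dimension, and together with Step 1 they coincide. Equivalently, one may first identify $\bm\gamma\star\mathcal{C}_{\mathcal{L}}(D,H)$ with the residue code via the isomorphism $\mathcal{L}(H)\to\Omega_{F}(G-D)$, $f\mapsto f\omega$ (whose effect on residues is $f\mapsto(f(P_{1})\gamma_{1},\dots,f(P_{n})\gamma_{n})$), and then quote the classical duality $\mathcal{C}_{\mathcal{L}}(D,G)^{\perp}=\mathcal{C}_{\Omega}(D,G)$.

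The only genuinely delicate points are the local residue identity $\mathrm{res}_{P_{i}}(h\omega)=h(P_{i})\mathrm{res}_{P_{i}}(\omega)$ for $h$ regular at $P_{i}$, and keeping the valuation bookkeeping straight so that $H$ and $D$ have disjoint support and all the products appearing have only simple poles on $\mathrm{Supp}(D)$; the remaining dimension computation is a routine use of Riemann--Roch and Serre duality in genus $1$.
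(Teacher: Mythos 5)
Your proof is correct. Note that the paper itself does not prove this statement — it is quoted from reference \cite{15} without proof — so there is no in-paper argument to compare against; your two-step argument (the residue-theorem computation $\sum_i f(P_i)g(P_i)\gamma_i=0$ for the inclusion $\bm\gamma\star\mathcal{C}_{\mathcal{L}}(D,H)\subseteq\mathcal{C}_{\mathcal{L}}(D,G)^{\perp}$, followed by the Riemann--Roch/Serre-duality dimension count showing both sides have dimension $i(G-D)-i(G)$) is the standard proof of this classical duality, and it is exactly the machinery the paper itself deploys later in the proof of its orthogonality relations (the map $f\mapsto f\omega$ onto $\Omega_F(G-D)$ and the residue theorem). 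The valuation bookkeeping ($v_{P_i}(H)=0$, $\gamma_i\neq 0$ from the simple pole, the local identity $\mathrm{res}_{P_i}(h\omega)=h(P_i)\mathrm{res}_{P_i}(\omega)$) is all handled correctly.
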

Note that such a canonical differential in the above theorem exists (see \cite{15}).

	The structure of rational points over the elliptic curves is important for the constructions of ECCs. To overcome the limits of the code lengths of GRS codes, we need to construct ECCs over the elliptic curves with sufficient rational points. The well-known Hasse-Weil bound for the numbers of the rational points $\#E(\mathbb{F}_{q})$ over the elliptic curve $E$ states $\left| \# E(\mathbb{F}_{q})-\left( 1+q\right) \right| \leq 2\sqrt{q}$. 
    
    An algebraic curve is called maximal if it's number of rational points attains Hasse-Weil upper bound $q+1+2\sqrt{q}$. 
    In this case, the Zeta function 
	is completely determined by
   	\begin{displaymath}
		Z(t)=\frac{1+at+qt^{2}}{(1-t)(1-qt)}=\frac{(1+\sqrt{q}t)^{2}}{(1-t)(1-qt)}
	\end{displaymath}      
where $a=\#E(\mathbb{F}_{q})-(q+1)$ and $t=q^{-s}$ for some integer $s$ .

Denote by $\mathcal{O}_{E}$ the algebraic integral ring of elliptic function field $E$. Then any place $P\in\mathbb{P}_{\mathbb{F}_{q}(x)}$ have the following decomposition in $E$.
\begin{enumerate}
    \item Splitting completely. $P\mathcal{O}_{E}=P_{1}P_{2}$ for some $P_{1}\neq P_{2}\in\mathbb{P}_{E}$.
    \item Ramification. $P\mathcal{O}_{E}=P^{2}_{1}$ for some $P_{1}\in\mathbb{P}_{E}$.
    \item Staying inertia. $P\mathcal{O}_{E}=P_{1}$ for some $P_{1}\in\mathbb{P}_{E}$.
    \end{enumerate}

Denote by $T$ the set of rational places in rational function field $\mathbb{F}_{q}(x)$ such that the any place $P\in T$ splits completely or being inertia in the elliptic function field $F$ and we take $D=\mathrm{Con}_{F/\mathbb{F}_{q}(x)}(\sum_{P\in T} P)$. For the simplicity of the discussions and the study of the self-duality, we let $\#\mathrm{Supp}(D)$ be even for the constructions of the TECCs.

\begin{remark}
    If we take $q\geq 4^{n}\times n^{2}$ and $q\equiv 1\ (\text{mod}\ 4)$, then there exists $n$ distinct elements $\alpha_{1},\cdots,\alpha_{n}$ such that $\alpha_{i}-\alpha_{j}$ are nonzero squares in $\mathbb{F}_{q}$ for all $1\leq i<j\leq n$ and each $\alpha_{i}$ gives two points $(\alpha_{i},\beta_{i})$ and $(\alpha_{i},-\beta_{i})$ for some $\beta_{i}\in\mathbb{F}_{q}$, see \cite{10}. Then we have the divisor $D$ with the pre-image splitting completely in the elliptic function fields.
\end{remark}

Denote by $x(E(\mathbb{F}_{q}))$ the $x-$component of rational points on elliptic curve $E$ except the point at infinity $O$. When $p$ is odd, we have the classifications:
\begin{enumerate}
	\item $r=1$: We consider the subset $T\subseteq x(E(\mathbb{F}_{q}));$
	\item $r=2$: We denote $\rho_{1}$ as the zero of the polynomials $p_{1}(x)$ and $\deg(p_{1}(x))=1$ in $\mathbb{F}_{q}(x)$ and consider the subset $T\subseteq x(E(\mathbb{F}_{q}))\setminus\{\rho_{1}\};$
	\item $r=3$: We denote $\rho_{i}$ as the zero of the polynomials $p_{i}(x)$ and $\deg(p_{i}(x))=1$ in $\mathbb{F}_{q}(x)$ for $i=1,2 ,3$ and consider the subset $T\subseteq x(E(\mathbb{F}_{q}))\setminus \{\rho_{1}, \rho_{2}, \rho_{3}\}$. 
\end{enumerate}

When $p$ is even, for the first case, we consider arbitrary a subset $T\subseteq x(E(\mathbb{F}_{q}))$ and we denote by $\rho'$ the place corresponding to the polynomial $ax+b$ in $\mathbb{F}_{q}(x)$ for the second case, then we have $T\subseteq x(E(\mathbb{F}_{q}))\setminus\{\rho'\}$.

Consider the local uniformizer $t=\prod_{\alpha\in T }(x-\alpha)$ satisfying $v_{P_{i}}(t)=1$  for $i=1, 2,\cdots, n$.  By the Riemann-Hurwitz formula (see \cite{1}), the differential divisor $(dx)$ can be given by 
\begin{displaymath}
	(dx)=-2(x)_{\infty}+\text{Diff}(F/\mathbb{F}_{q}(x)).
\end{displaymath}
Notice that $(x)_{\infty}=2O$ and $(y)_{\infty}=3O$ for the first and second types and $(x)_{\infty}=2O$ and $(y)_{\infty}=O$ for the third type. Then the differential divisor $(dx)$ can be classified in the following:
\begin{enumerate}
	\item The first type: $(dx)=-2(x)_{\infty}+\text{Diff}(F/\mathbb{F}_{q}(x))=S_{1}+\cdots+S_{r}-3O=(y);$
	\item The second type: $(dx)=-2(x)_{\infty}+\text{Diff}(F/\mathbb{F}_{q}(x))=0;$
	\item The third type: $(dx)=-2(x)_{\infty}+\text{Diff}(F/\mathbb{F}_{q}(x))=2S'-2O=(ax+b)$.
\end{enumerate}

Based on the results above, the canonical divisor $(\omega)=(dx/t)$ can be given by the following
\begin{enumerate}
	\item The first type: the differential divisor $(dx/t)=(dx)-(t)=-D+nO+(y)$ and the divisor $D-G+(dx/t)=(y)+(n-k)O;$
	\item The second type: the differential divisor $(dx/t)=(dx)-(t)=-D+nO$ and the divisor $D-G+(dx/t)=(n-k)O;$
	\item The third type: the differential divisor $(dx/t)=(dx)-(t)=-D+nO+(ax+b)$ and the divisor $D-kO+(dx/t)=(ax+b)+(n-k)O$.
\end{enumerate}

\begin{remark}
    For the second type of elliptic function fields, we have $(dx/t)=0$. Then the differential $\omega=dx/t$ is a canonical divisor defined in Proposition \ref{0}. Then the residue at all rational valuation places $P\in D$ are equal and we can assume $\mathrm{res}_{P}(\omega)=c\in\mathbb{F}_{q}$. By the residue theorem on the algebraic curves in \cite{1}, we have $n\cdot c\equiv 0$ (mod $2$). For code length satisfying $n=2k$, we have $\mathrm{res}_{P}(\omega)=1$ for each $P\in D$ and the following Corollary holds immediately.
\end{remark}

\begin{corollary}\label{Cor:2ndSelfdual}
The ECCs constructed over the second type of elliptic curves are definitely self-dual for $n=2k$.
\end{corollary}

The three types of parity-check matrices of ECC $\mathcal{C}_{\mathcal{L}}(D, kO)$ shall be deduced from following lemma.
\begin{lemma}
	\label{11}
Notations as above. For any integer $k\geq 1$,
	\begin{enumerate}
		\item The basis of Riemann-Roch space $\mathcal{L}(D-G+(dx/t))=\mathcal{L}((y)+(n-k)O)$ over the first type of elliptic function fields can be given by
		\begin{displaymath}
			\begin{split}
				\bigg\{1/y, x/y, \cdots,  x^{\lfloor\frac{n-k}{2}\rfloor}/y, 1, \cdots, x^{\lfloor\frac{n-k-3}{2}\rfloor}\bigg\}.
			\end{split}
		\end{displaymath}
		\item The basis of Riemann-Roch space $\mathcal{L}(D-G+(dx/t))=\mathcal{L}((n-k)O)$ over the second type of elliptic function fields can be given by 
		\begin{displaymath}
			\bigg\{1, x, \cdots,  x^{\lfloor\frac{n-k}{2}\rfloor}, y, \cdots, x^{\lfloor\frac{n-k-3}{2}\rfloor}y\bigg\}.
		\end{displaymath}
		\item The basis of Riemann-Roch space $\mathcal{L}(D-G+(dx/t))=\mathcal{L}((ax+b)+(n-k)O)$ over the third type of elliptic function fields can be given by
		\begin{displaymath}
			\bigg\{1/(ax+b), x/(ax+b), \cdots, x^{\lfloor\frac{n-k}{2}\rfloor}/(ax+b), y, xy, \cdots, x^{\lfloor\frac{n-k-3}{2}\rfloor}y\bigg\}.
            \end{displaymath}
	\end{enumerate}
\end{lemma}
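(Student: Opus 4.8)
\emph{Proof proposal.} The plan is to deduce all three bases from the single already-known space $\mathcal{L}((n-k)O)$ by multiplying through by one fixed rational function. The only ingredient needed is the elementary observation that for any $0\neq z\in F$ the map $f\mapsto fz$ is an $\mathbb{F}_{q}$-linear isomorphism $\mathcal{L}(G)\to\mathcal{L}(G-(z))$ for every divisor $G$: indeed $(fz)=(f)+(z)$, so $(f)+G\geq 0$ if and only if $(fz)+\bigl(G-(z)\bigr)\geq 0$, and multiplication by the nonzero element $z$ of $F$ is evidently $\mathbb{F}_{q}$-linear and invertible. Since an $\mathbb{F}_{q}$-linear isomorphism carries a basis to a basis, it suffices to pick, in each of the three cases, the right $z$ and transport the basis of $\mathcal{L}((n-k)O)$ supplied by Lemma~\ref{2} (for the first and second types) and by the displayed formula for $\mathcal{L}(kO)$ (for the third type).

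For the \emph{second type}, the excerpt has already computed $D-G+(dx/t)=(n-k)O$, so no transport is needed: the asserted set is exactly the Lemma~\ref{2} basis of $\mathcal{L}(mO)$ with $m=n-k$, read off from $v_{O}(x)=-2,\ v_{O}(y)=-3$. For the \emph{first type}, where $D-G+(dx/t)=(y)+(n-k)O$, take $z=1/y$ (so $(z)=-(y)$); this gives an isomorphism $\mathcal{L}((n-k)O)\to\mathcal{L}((y)+(n-k)O)$, $f\mapsto f/y$, and applying it to the basis $\{1,x,\dots,x^{\lfloor (n-k)/2\rfloor},y,xy,\dots,x^{\lfloor (n-k-3)/2\rfloor}y\}$ of $\mathcal{L}((n-k)O)$ produces precisely $\{1/y,x/y,\dots,x^{\lfloor (n-k)/2\rfloor}/y,1,x,\dots,x^{\lfloor (n-k-3)/2\rfloor}\}$. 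For the \emph{third type}, recall from the divisor computation in the excerpt that $(dx)=(ax+b)=2S'-2O$, hence $D-G+(dx/t)=(ax+b)+(n-k)O$; take $z=1/(ax+b)$, giving an isomorphism $\mathcal{L}((n-k)O)\to\mathcal{L}((ax+b)+(n-k)O)$, $f\mapsto f/(ax+b)$, and apply it to the basis $\{1,x,\dots,x^{\lfloor (n-k)/2\rfloor},(ax+b)y,x(ax+b)y,\dots,x^{\lfloor (n-k-3)/2\rfloor}(ax+b)y\}$ of $\mathcal{L}((n-k)O)$ to obtain exactly $\{1/(ax+b),\dots,x^{\lfloor (n-k)/2\rfloor}/(ax+b),y,xy,\dots,x^{\lfloor (n-k-3)/2\rfloor}y\}$.

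Since these maps are bijections onto the full spaces, the listed sets are genuine bases, each with $\ell((n-k)O)=n-k$ elements; this is consistent with $\deg\bigl(D-G+(dx/t)\bigr)=n-k$ and Riemann--Roch, a divisor of positive degree on an elliptic curve being non-special because $\deg W=0$. In the write-up I would give the first type in full and note that the other two are identical up to the choice of $z$. I do not expect a substantive obstacle: the only point requiring care is the edge behaviour of the floor-function index ranges when $n-k$ is small (for $n-k\le 2$ some monomials in the lists are vacuous), where one simply observes that the isomorphism argument is insensitive to which of the listed terms actually occur; and in the third type one should confirm, exactly as the excerpt does via the Riemann--Hurwitz computation, that $(ax+b)=(dx)$ so that $z=1/(ax+b)$ has the claimed divisor.
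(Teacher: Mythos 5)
Your proposal is correct and follows essentially the same route as the paper: the paper also transports the known basis of $\mathcal{L}((n-k)O)$ via the multiplication isomorphisms $f\mapsto fy$ (first type) and $f\mapsto (ax+b)f$ (third type), which are just the inverses of your maps $f\mapsto f/y$ and $f\mapsto f/(ax+b)$, and reads off the second type directly from Lemma~\ref{2}. Your explicit justification that $f\mapsto fz$ is an isomorphism $\mathcal{L}(G)\to\mathcal{L}(G-(z))$ and the dimension count via Riemann--Roch are slightly more complete than what the paper writes out, but there is no substantive difference.
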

\begin{proof}
\begin{enumerate}
\item For the first type of elliptic function fields, the divisor $(y)$ can be classified by
\begin{displaymath}
	(y)=
	\begin{cases}
		S_{1}-3O,\ &\text{for}\ r=1\\
		S_{1}+S_{2}-3O,\ & \text{for}\ r=2\\
		S_{1}+S_{2}+S_{3}-3O,\ &\text{for}\ r=3.\\
	\end{cases}
\end{displaymath}
Substitute the three cases $S_{1}$, $S_{1}+S_{2}$ and $S_{1}+S_{2}+S_{3}$ as $(y)+3O$, then we have Riemann-Roch space $\mathcal{L}(D-G+(dx/t))=\mathcal{L}((y)+(n-k)O)$. 

For any $0\neq f\in\mathcal{L}((y)+(n-k)O)$, we consider the mapping $\psi:\ f\mapsto fy$ and it can be easily checked that $\psi$ is bijective, then there is an induced mapping between two Riemann-Roch spaces $\psi:\ \mathcal{L}(D-G+(dx/t))=\mathcal{L}((y)+(n-k)O)\to \mathcal{L}((n-k)O)$. By Lemma \ref{1}, the basis of $\mathcal{L}((y)+(n-k)O)$ can be given by
$\{1/y, x/y, \cdots,  x^{\lfloor\frac{n-k}{2}\rfloor}/y, 1, \cdots, x^{\lfloor\frac{n-k-3}{2}\rfloor}\}.
$

\item For the second type of elliptic function fields, we have that $\mathcal{L}(D-G+(dx/t))=\mathcal{L}((n-k)O)$ and the basis can be given by Lemma \ref{1} directly.

\item For the third type of elliptic function fields, we have $(ax+b)=2S'-2O$ and $2S'+(n-k-2)O=(ax+b)+(n-k)O$. For any $f\in\mathcal{L}((ax+b)+(n-k)O)$, we consider the mapping $\phi:f\mapsto (ax+b)f$ and it can be checked $\phi$ is bijective, then there is an induced mapping between two Riemann-Roch spaces $\phi:\mathcal{L}(D-G+(dx/t))=\mathcal{L}((ax+b)+(n-k)O)\to\mathcal{L}((n-k)O)$. Then, the basis of the Riemann-Roch space $\mathcal{L}((ax+b)+(n-k)O)$ can be given by 
$\{1/(ax+b), x/(ax+b), \cdots, x^{\lfloor\frac{n-k}{2}\rfloor}/(ax+b), y, xy, \cdots, x^{\lfloor\frac{n-k-3}{2}\rfloor}y\}.$
\end{enumerate}
\end{proof}

\begin{theorem}[Orthogonality Relations]\label{cc}
		Notations as above. Denote by $\mathrm{res}_{P_{i}}(\omega)=\gamma_{i}$ the residue of differential $\omega$ at valuation place $P_{i}$ for $i=1,\cdots, n$. Then we have the following orthogonality relations for ECCs:
		\begin{enumerate}
			\item The first type: \begin{enumerate}
				\item $\sum_{j=1}^{n}\frac{\gamma_{j}\alpha^{i}_{j}}{\beta_{j}}=0$ for $0\leq i\leq \lfloor\frac{k}{2}\rfloor+\lfloor\frac{n-k}{2}\rfloor;$
				\item $\sum_{j=1}^{n}\gamma_{j}\alpha^{i}_{j}=0$ for $0\leq i\leq\frac{n}{2}-2; $
				\item $\sum_{j=1}^{n}\gamma_{j}\alpha^{i}_{j}\beta_{j}=0$ for $0\leq i\leq\lfloor\frac{k-3}{2}\rfloor+\lfloor\frac{n-k-3}{2}\rfloor$.
			\end{enumerate}
			\item The second type: 
			\begin{enumerate}
				\item $\sum_{j=1}^{n}\gamma_{j}\alpha^{i}_{j}=0$ for $0\leq i\leq \lfloor\frac{k}{2}\rfloor+\lfloor\frac{n-k}{2}\rfloor;$
				\item $\sum_{j=1}^{n}\gamma_{j}\alpha^{i}_{j}\beta_{j}=0$ for $0\leq i\leq\frac{n}{2}-2; $
				\item $\sum_{j=1}^{n}\gamma_{j}\alpha^{i}_{j}\beta^{2}_{j}=0$ for $0\leq i\leq \lfloor\frac{k-3}{2}\rfloor+\lfloor\frac{n-k-3}{2}\rfloor$.
			\end{enumerate}
			\item The third type:
			\begin{enumerate}
				\item $\sum_{j=1}^{n}\frac{\gamma_{j}\alpha^{i}_{j}}{a\alpha_{j}+b}=0$ for $0\leq i\leq \lfloor\frac{k}{2}\rfloor+\lfloor\frac{n-k}{2}\rfloor;$
				\item $\sum_{j=1}^{n}\gamma_{j}\alpha^{i}_{j}\beta_{j}=0$ for $0\leq i\leq \frac{n}{2}-2;$
				\item $\sum_{j=1}^{n}\gamma_{j}\alpha^{i}_{j}(a\alpha_{i}+b)\beta^{2}_{j}=0$ for $0\leq i\leq \lfloor\frac{k-3}{2}\rfloor+\lfloor\frac{n-k-3}{2}\rfloor$.
			\end{enumerate}
		\end{enumerate}
		\begin{proof} Choose one differential divisor $(\omega)$ and $g\in\mathcal{L}(kO)$. 
		By the Serre duality theorem in \cite{1}, we have the isomorphism
		\begin{displaymath}
			\mu: \mathcal{L}((\omega)-(kO-D))\to\Omega_{F}(kO-D);\ f\mapsto f\omega.
		\end{displaymath}
		By the residue theorem in \cite{1}, we have 
        \begin{displaymath}
        \sum_{P\in\mathbb{P}_{F}}fg\omega_{ P}=\sum_{i=1}^{n}f(P_{i})g(P_{i})\mathrm{res}_{P_{i}}(\omega)=0.
\end{displaymath}
		By the results in Lemma \ref{11}, then we have the orthogonality relations.
        \end{proof}
	\end{theorem}
Based on the Lemma \ref{11} and Theorem \ref{cc}, we have the following Corollary.

\begin{corollary}\label{c1}
Notation as above. The parity check matrices of the ECC $\mathcal{C}_{\mathcal{L}}(D, kO)$ can be given in the following three cases.
\begin{itemize}
	\item The first type:
\begin{small}
	\begin{displaymath}
      G(D,\omega, (y)+(n-k)O)=		\begin{pmatrix}
			\frac{	\gamma_{1}}{\beta_{1}}&	\frac{\gamma_{2}}{\beta_{2}}&\cdots&		\frac{\gamma_{n}}{\beta_{n}}\\
			\frac{	\gamma_{1}\alpha_{1}}{\beta_{1}}&	\frac{\gamma_{2}\alpha_{2}}{\beta_{2}}&\cdots&		\frac{\gamma_{n}\alpha_{n}}{\beta_{n}}\\
			\vdots&\vdots&\cdots&\vdots\\
			\frac{\gamma_{1}\alpha^{\lfloor\frac{n-k}{2}\rfloor}_{1}}{\beta_{1}}&		\frac{\gamma_{2}\alpha^{\lfloor\frac{n-k}{2}\rfloor}_{2}}{\beta_{2}}&\cdots&		\frac{\gamma_{n}\alpha^{\lfloor\frac{n-k}{2}\rfloor}_{n}}{\beta_{n}}\\
			\gamma_{1}&	\gamma_{2}&\cdots&		\gamma_{n}\\
			\gamma_{1}\alpha_{1}&\gamma_{2}\alpha_{2}&\cdots&\gamma_{n}\alpha_{n}\\
			\vdots&\vdots&\cdots&\vdots\\
			\gamma_{1}\alpha^{\lfloor\frac{n-k-3}{2}\rfloor}_{1}&		\gamma_{2}\alpha^{\lfloor\frac{n-k-3}{2}\rfloor}_{2}&\cdots&		\gamma_{n}\alpha^{\lfloor\frac{n-k-3}{2}\rfloor}_{n}\\
		\end{pmatrix}.
	\end{displaymath}
    \end{small}
	\item The second type:
    \begin{small}
	\begin{displaymath}
     G(D,\omega, (n-k)O)=
     \begin{pmatrix}
			\gamma_{1}& \gamma_{2}& \cdots& \gamma_{n}\\
			\gamma_{1}\alpha_{1}&\gamma_{2}\alpha_{2}&\cdots&\gamma_{n}\alpha_{n}\\
			\vdots&\vdots&\cdots&\vdots\\
			\gamma_{1}\alpha^{\lfloor\frac{n-k}{2}\rfloor}_{1}&	\gamma_{2}\alpha^{\lfloor\frac{n-k}{2}\rfloor}_{2}&\cdots&	\gamma_{n}\alpha^{\lfloor\frac{n-k}{2}\rfloor}_{n}\\
			\gamma_{1}\beta_{1}&\gamma_{2}\beta_{2}&\cdots&\gamma_{n}\beta_{n}\\
			\gamma_{1}\alpha_{1}\beta_{1}&\gamma_{2}\alpha_{2}\beta_{2}&\cdots&\gamma_{n}\alpha_{n}\beta_{n}\\
			\vdots&\vdots&\cdots&\vdots\\
			\gamma_{1}\alpha_{1}^{\lfloor\frac{n-k-3}{2}\rfloor}\beta_{1}&	\gamma_{2}\alpha_{2}^{\lfloor\frac{n-k-3}{2}\rfloor}\beta_{2}&\cdots&	\gamma_{n}\alpha_{n}^{\lfloor\frac{n-k-3}{2}\rfloor}\beta_{n}
		\end{pmatrix}.
		\end{displaymath}
        \end{small}
	\item The third type:
    \begin{small}
		\begin{displaymath}
          G(D,\omega, (ax+b)+(n-k)O)=		\begin{pmatrix}
			\frac{\gamma_{1}}{a\alpha_{1}+b}& \frac{\gamma_{2}}{a\alpha_{2}+b}& \cdots& \frac{\gamma_{n}}{a\alpha_{n}+b}\\
			\frac{\gamma_{1}\alpha_{1}}{a\alpha_{1}+b}& \frac{\gamma_{2}\alpha_{2}}{a\alpha_{2}+b}& \cdots& \frac{\gamma_{n}\alpha_{n}}{a\alpha_{n}+b}\\
			\vdots&\vdots&\cdots&\vdots\\
			\frac{\gamma_{1}\alpha^{\lfloor\frac{n-k}{2}\rfloor}_{1}}{a\alpha_{1}+b}& 	\frac{\gamma_{1}\alpha^{\lfloor\frac{n-k}{2}\rfloor}_{2}}{a\alpha_{2}+b}& \cdots& 	\frac{\gamma_{1}\alpha^{\lfloor\frac{n-k}{2}\rfloor}_{n}}{a\alpha_{n}+b}\\
		\gamma_{1}\beta_{1}&\gamma_{2}\beta_{2}&\cdots&	\gamma_{n}\beta_{n}\\
		\gamma_{1}\beta_{1}\alpha_{1}&	\gamma_{2}\beta_{2}\alpha_{2}&\cdots&	\gamma_{n}\beta_{n}\alpha_{n}\\
			\vdots&\vdots&\cdots&\vdots\\
			\gamma_{1}\beta_{1}\alpha_{1}^{\lfloor\frac{n-k-3}{2}\rfloor}&		\gamma_{2}\beta_{2}\alpha_{2}^{\lfloor\frac{n-k-3}{2}\rfloor}&\cdots&\gamma_{n}\beta_{n}\alpha_{n}^{\lfloor\frac{n-k-3}{2}\rfloor}
		\end{pmatrix}.
	\end{displaymath}
    \end{small}
\end{itemize}
\end{corollary}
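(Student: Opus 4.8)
The plan is to obtain the three parity-check matrices by specializing the duality relation $\mathcal{C}_{\mathcal{L}}(D,G)^{\perp}=\bm\gamma\star\mathcal{C}_{\mathcal{L}}(D,H)$ from the dual-code theorem above (with $H=D-G+(\omega)$) to $G=kO$ and to the explicit canonical differential $\omega=dx/t$ constructed in this section, and then to read off a generator matrix of $\mathcal{C}_{\mathcal{L}}(D,H)$ from the bases exhibited in Lemma~\ref{11}.

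First I would check that $\omega=dx/t$, with $t=\prod_{\alpha\in T}(x-\alpha)$, satisfies the hypothesis $v_{P_i}(\omega)=-1$ of that theorem. Since $T$ consists only of places of $\mathbb{F}_q(x)$ that split completely or stay inert in $F$ (the ramified ones were discarded in the construction), $x-\alpha_i$ is a uniformizer at each $P_i$, so $v_{P_i}(dx)=0$, and as $t$ has a simple zero at $P_i$ we get $v_{P_i}(dx/t)=-1$; in particular $\gamma_i=\mathrm{res}_{P_i}(\omega)\neq 0$, so $\bm\gamma\star(\cdot)$ is a bona fide code isomorphism. The same choice of $T$ also guarantees $\beta_j\neq 0$ in the first type and $a\alpha_j+b\neq 0$ in the third, so that the entries $\gamma_j/\beta_j$ and $\gamma_j/(a\alpha_j+b)$ appearing below make sense. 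Hence $\mathcal{C}_{\mathcal{L}}(D,kO)^{\perp}=\bm\gamma\star\mathcal{C}_{\mathcal{L}}(D,H)$ with $H=D-kO+(\omega)$.

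Next I would substitute the three values of $(\omega)=(dx/t)$ computed just before Lemma~\ref{11}, namely $(dx/t)=-D+nO+(y)$, $-D+nO$, and $-D+nO+(ax+b)$ for the first, second and third types, giving $H=(y)+(n-k)O$, $H=(n-k)O$, and $H=(ax+b)+(n-k)O$. A generator matrix of $\mathcal{C}_{\mathcal{L}}(D,H)$ has as rows the images under $ev_{\mathcal{L}}$ of an $\mathbb{F}_q$-basis of $\mathcal{L}(H)$; inserting the bases of Lemma~\ref{11} and writing $P_j=(\alpha_j,\beta_j)$, the basis functions become the row vectors $(\alpha_j^{i}/\beta_j)_{j}$ and $(\alpha_j^{i})_{j}$ in the first case, $(\alpha_j^{i})_{j}$ and $(\alpha_j^{i}\beta_j)_{j}$ in the second, and $(\alpha_j^{i}/(a\alpha_j+b))_{j}$ and $(\alpha_j^{i}\beta_j)_{j}$ in the third, over the exponent ranges $0\le i\le\lfloor\frac{n-k}{2}\rfloor$ and $0\le i\le\lfloor\frac{n-k-3}{2}\rfloor$ respectively. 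Multiplying each of these matrices componentwise by $\bm\gamma=(\gamma_1,\dots,\gamma_n)$ reproduces exactly the displayed matrices $G(D,\omega,(y)+(n-k)O)$, $G(D,\omega,(n-k)O)$ and $G(D,\omega,(ax+b)+(n-k)O)$.

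Finally I would confirm that these are honest parity-check matrices, i.e.\ that each has full row rank $n-k$. By Proposition~\ref{0} a canonical divisor has degree $0$, so $\deg H=\deg D-k=n-k$; since $n-k\ge 1$ and $k\ge 3$, both $\deg(W-H)=k-n$ and $\deg(H-D)=-k$ are negative, and Riemann-Roch gives $\ell(H)=n-k$ and $\ell(H-D)=0$. Thus $ev_{\mathcal{L}}$ is injective on $\mathcal{L}(H)$, the listed rows are linearly independent, and $\dim\mathcal{C}_{\mathcal{L}}(D,H)=n-k=n-\dim\mathcal{C}_{\mathcal{L}}(D,kO)$, so $\bm\gamma\star\mathcal{C}_{\mathcal{L}}(D,H)$ is the full dual. (As a consistency check, orthogonality of each of these rows to each row of the generator matrix of $\mathcal{C}_{\mathcal{L}}(D,kO)$ is precisely the list of identities in Theorem~\ref{4}.) I do not expect a real obstacle here; the only point needing care is the bookkeeping --- matching the two blocks of each Lemma~\ref{11} basis, hence the two exponent ranges, to the two blocks of the target matrix, and attaching the correct scalar ($\gamma_j$, $\gamma_j/\beta_j$, or $\gamma_j/(a\alpha_j+b)$) to each row.
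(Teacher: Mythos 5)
Your proposal is correct and follows essentially the same route as the paper: it combines the explicit computation of $(dx/t)$ via Riemann--Hurwitz with the bases of $\mathcal{L}(D-kO+(dx/t))$ from Lemma~\ref{11}, the only cosmetic difference being that you invoke the packaged duality statement $\mathcal{C}_{\mathcal{L}}(D,G)^{\perp}=\bm\gamma\star\mathcal{C}_{\mathcal{L}}(D,H)$ and add an explicit rank count, whereas the paper reaches the same matrices through the row-by-row orthogonality relations of Theorem~\ref{4} (themselves proved by the same Serre-duality/residue argument) and leaves the dimension count implicit. Your verification that $v_{P_i}(dx/t)=-1$, that $\gamma_i\neq 0$, and that $\beta_j$ and $a\alpha_j+b$ are nonzero on $\mathrm{Supp}(D)$ is a welcome addition that the paper does not spell out.
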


\section{The Dual Codes of TECCs}
\label{sec:5}
In this section, we first give the parity-check matrices for the TECCs and then determine the self-dual conditions for the TECCs.  

In the following, we take the conorm of the rational divisors in $\mathbb{F}_{q}(x)$ which split or stay inertia in the elliptic function fields. In particular, we take even $n$ and denote by $n=2s$ for some positive integer $s\in\mathbb{Z}_{>0}$ for better discussions of self-duality.

For any two codewords $\mathbf{a}=(a_{1}, a_{2}, \cdots, a_{n}), \mathbf{b}=(b_{1}, b_{2}, \cdots, b_{n})\in\mathbb{F}_{q}^{n}$, we consider the Euclidean inner product of $\mathbf{a}$ and $\mathbf{b}$. Then the dual code of a linear code $\mathcal{C}$ is defined to be 
\begin{displaymath}
	\mathcal{C}^{\perp}=\left\lbrace \mathbf{a}\in\mathbb{F}_{q}^{n} :\ \mathbf{a}\cdot \mathbf{c}=\sum_{i=1}^{n}c_{i}a_{i}=0\ \text{for\ all}\ \mathbf{c}\in\mathcal{C}\right\rbrace.
\end{displaymath}

\begin{theorem} 
Notations as above. The parity check matrix of the TECC $\mathcal{C}(D, kO,\ell, \eta)$ constructed over the first type of elliptic curves can be classified as follows.
\begin{enumerate}
	\item 
	The parity check matrix of $\mathcal{C}(D, kO,\ell, \eta)$ with odd $k$: 
	
	Let $a_{0}^{(1)}, a_{1}^{(1)},\cdots, a^{(1)}_{\frac{k-3}{2}-\ell}, b^{(1)}_{0}, b^{(1)}_{1}\cdots, b^{(1)}_{\frac{k-3}{2}-\ell}\in\mathbb{F}_{q}$ be defined by the following recursion
    \begin{small}
	\begin{displaymath}
		\begin{cases}
			a^{(1)}_{r}=-\frac{\sum_{m=1}^{n}\frac{\gamma_{m}}{\beta_{m}}\sum_{i=0}^{r-1}a_{i}^{(1)}\alpha^{s-\ell-1+\frac{k-1}{2}-i}_{m}+\sum_{m=1}^{n}\gamma_{m}\sum_{j=0}^{r-1}b_{j}^{(1)}\alpha^{s-\ell-3+\frac{k-1}{2}-j}_{m}}{\sum_{m=1}^{n}\frac{\gamma_{m}}{\beta_{m}}\alpha_{m}^{s-\ell-1+\frac{k-1}{2}-r}}\\
			b^{(1)}_{r-1}=-\frac{\sum_{m=1}^{n}\gamma_{m}\sum_{i=0}^{r}a_{i}^{(1)}\alpha^{s-\ell-1+\frac{k-3}{2}-i}_{m}+\sum_{m=1}^{n}\gamma_{m}\sum_{j=0}^{r-2}b_{j}^{(1)}\alpha^{s-\ell-3+\frac{k-3}{2}-j}_{m}\beta_{m}}{\sum_{m=1}^{n}\gamma_{m}\alpha_{m}^{s-\ell-3+\frac{k-3}{2}-r}\beta_{m}}\\
        	\end{cases}
	\end{displaymath}
    \end{small}
   for $r=1, 2, \cdots, \frac{k-3}{2}-\ell$ with
   \begin{small}
		\begin{displaymath}
		\begin{cases}
			a^{(1)}_{0}=1\\
			b^{(1)}_{\frac{k-3}{2}-\ell}=-\frac{\sum_{m=1}^{n}\gamma_{m}\alpha^{s-1}_{m}+\eta\sum_{m=1}^{n}\frac{\gamma_{m}}{\beta_{m}}\sum^{\frac{k-3}{2}-\ell}_{i=0}a^{(1)}_{i}\alpha^{s-\ell-1+\frac{k+1}{2}-i}_{m}+\eta\sum_{m=1}^{n}\gamma_{m}\sum^{\frac{k-3}{2}-\ell-1}_{j=0}b_{j}^{(1)}\alpha^{s-\ell-3+\frac{k+1}{2}-j}_{m}}{\eta\sum_{m=1}^{n}\gamma_{m}\alpha^{s-1}_{m}},\\
		\end{cases}
	\end{displaymath}
    \end{small}
then the parity check matrix for this case can be given by 
\begin{small}
	\begin{displaymath}
		\begin{pmatrix}
				\frac{	\gamma_{1}}{\beta_{1}}f_{1}^{(1)}(\alpha_{1}, \beta_{1})&		\frac{	\gamma_{2}}{\beta_{2}}f_{1}^{(1)}(\alpha_{2}, \beta_{2})&\cdots&			\frac{	\gamma_{n}}{\beta_{n}}f_{1}^{(1)}(\alpha_{n}, \beta_{n})\\
                \hline\\
		& G(D,\omega, (y)+(n-k-1)O) &
		\end{pmatrix}
	\end{displaymath}
    \end{small}
where $\small{f_{1}^{(1)}(x, y)=x^{s-\ell-1}+\cdots+ a^{(1)}_{\frac{k-3}{2}-\ell}x^{s-\frac{k-1}{2}}+b^{(1)}_{0}x^{s-\ell-3}y+\cdots+b^{(1)}_{\frac{k-3}{2}-\ell}x^{s-\frac{k+3}{2}}y}$.
	
	\item The parity check matrix of $\mathcal{C}(D, kO,\ell, \eta)$ with even $k$: 
	
	Let $  e^{(1)}_{0}, e^{(1)}_{1}, \cdots, e^{(1)}_{\frac{k}{2}-\ell}, h^{(1)}_{0},h^{(1)}_{1},\cdots, h^{(1)}_{\frac{k}{2}-\ell}\in\mathbb{F}_{q}$ be defined by the following recursion
		\begin{displaymath}
		\begin{cases}
			e^{(1)}_{r-1}=-\frac{\sum_{m=1}^{n}\frac{\gamma_{m}}{\beta_{m}}\sum_{i=0}^{r-2}e_{i}^{(1)}\alpha^{s-\ell+\frac{k}{2}-i}_{m}+\sum_{m=1}^{n}\gamma_{m}\sum_{j=0}^{r}h_{j}^{(1)}\alpha^{s-\ell-1+\frac{k}{2}-j}_{m}}{\sum_{m=1}^{n}\frac{\gamma_{m}}{\beta_{m}}\alpha_{m}^{s-\ell+\frac{k}{2}-r+1}}\\
			h^{(1)}_{r}=-\frac{\sum_{m=1}^{n}\gamma_{m}\sum_{i=0}^{r-1}e_{i}^{(1)}\alpha^{s-\ell+\frac{k-4}{2}-i}_{m}+\sum_{m=1}^{n}\gamma_{m}\sum_{j=0}^{r-1}h_{j}^{(1)}\alpha^{s-\ell-1+\frac{k-4}{2}-j}_{m}\beta_{m}}{\sum_{m=1}^{n}\gamma_{m}\alpha_{m}^{s-\ell-1+\frac{k-4}{2}-r}\beta_{m}}\\
    \end{cases}
	\end{displaymath}
    for $	r=1, 2, \cdots, \frac{k}{2}-\ell$ with
    \begin{small}
	\begin{displaymath}
		\begin{cases}
			h^{(1)}_{0}=1\\
			e^{(1)}_{\frac{k}{2}-\ell}=-\frac{\sum_{m=1}^{n}\gamma_{m}\alpha^{s-1}_{m}+\eta\sum_{m=1}^{n}\gamma_{m}\sum^{\frac{k}{2}-\ell-1}_{i=0}e^{(1)}_{i}\alpha^{s-\ell-i+\frac{k-2}{2}}_{m}+\eta\sum^{n}_{m=1}\gamma_{m}\sum^{\frac{k}{2}-\ell-1}_{j=0}h_{j}^{(1)}\alpha^{s-\ell-1-j+\frac{k-2}{2}}_{m}\beta_{m}}{\eta\sum_{m=1}^{n}\gamma_{m}\alpha_{m}^{s-1}},\\
		\end{cases}
	\end{displaymath}
    \end{small}
	 then the parity check matrix for this case can be given by
     \begin{small}
	\begin{displaymath}
		\begin{pmatrix}
				\frac{\gamma_{1}}{\beta_{1}}	f_{2}^{(1)}(\alpha_{1}, \beta_{1})&\frac{\gamma_{2}}{\beta_{2}}f_{2}^{(1)}(\alpha_{2}, \beta_{2})&\cdots&\frac{\gamma_{n}}{\beta_{n}}f_{2}^{(1)}(\alpha_{n}, \beta_{n})\\
			\hline\\
            & G(D,\omega, (y)+(n-k-1)O)		\end{pmatrix}
	\end{displaymath}
    \end{small}
	where $f_{2}^{(1)}(x, y)=x^{s-\ell-1}y+\cdots+h^{(1)}_{\frac{k}{2}-\ell}x^{s-\frac{k}{2}-1}y+e^{(1)}_{0}x^{s-\ell}+\cdots+e^{(1)}_{\frac{k}{2}-\ell}x^{s-\frac{k}{2}}$.
    \end{enumerate}
	\label{7}
	\end{theorem}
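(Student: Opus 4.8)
The plan is to start from the duality formula of Theorem~\ref{11}, namely $\mathcal{C}_{\mathcal{L}}(D,kO)^{\perp}=\bm\gamma\star\mathcal{C}_{\mathcal{L}}(D,H)$ with $H=D-kO+(\omega)$, and adapt it to the twisted setting. For the TECC $\mathcal{C}^{(1)}(D,kO,\ell,\eta)$, a generating set of $S_{\ell}^{(1)}$ is obtained from the basis $\{1,x,\dots,x^{\frac{k-1}{2}},y,xy,\dots,x^{\frac{k-3}{2}}y\}$ of $\mathcal{L}(kO)$ by replacing the single vector $x^{\ell}y$ with the twisted vector $x^{\ell}y+\eta x^{\frac{k+1}{2}}$. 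Hence $\mathcal{C}^{(1)}(D,kO,\ell,\eta)$ sits inside the ECC $\mathcal{C}_{\mathcal{L}}(D,(k+1)O)$ as a codimension-one subspace (it is the span of the untwisted basis vectors $\{x^i y^j\}$ with $x^\ell y$ removed, plus the one twisted vector). A word $\bm c$ lies in $\mathcal{C}^{(1)}(D,kO,\ell,\eta)^{\perp}$ iff $\bm c$ is orthogonal to each of these generators. First I would record that orthogonality to the untwisted generators $\{x^i:0\le i\le\frac{k-1}{2}\}\cup\{x^jy:0\le j\le\frac{k-3}{2},\ j\ne\ell\}$ is governed, via the orthogonality relations of Theorem~\ref{4} and the basis of $\mathcal{L}((y)+(n-k)O)$ from Lemma~\ref{11}, by exactly the rows of the ECC parity-check matrix $G(D,\omega,(y)+(n-k-1)O)$ — one degree lower in the pole order at $O$ because we must kill the extra function $x^{\frac{k+1}{2}}$ only on the twisted generator, not on all of $\mathcal{L}((k+1)O)$. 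This explains the block $G(D,\omega,(y)+(n-k-1)O)$ in the asserted matrix.

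The substance is then to produce the single extra row $\bigl(\tfrac{\gamma_1}{\beta_1}f_1^{(1)}(\alpha_1,\beta_1),\dots,\tfrac{\gamma_n}{\beta_n}f_1^{(1)}(\alpha_n,\beta_n)\bigr)$ that accounts for the twisted generator $x^\ell y+\eta x^{\frac{k+1}{2}}$. The idea is to look for a Weil differential of the form $\bigl(g/y\bigr)\omega$ with $g\in\mathcal{L}((y)+(n-k-1)O+ (\text{correction at }O))$, equivalently to look for a function $f_1^{(1)}(x,y)$ in a suitable Riemann–Roch space such that $\bm\gamma\star\bigl(\tfrac{1}{\beta_j}f_1^{(1)}(\alpha_j,\beta_j)\bigr)_j$ is orthogonal to every untwisted generator of $S_\ell^{(1)}$ and also to the twisted one. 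Writing $f_1^{(1)}(x,y)=x^{s-\ell-1}+a^{(1)}_1 x^{s-\ell-2}+\dots+a^{(1)}_{\frac{k-3}{2}-\ell}x^{s-\frac{k-1}{2}}+b^{(1)}_0 x^{s-\ell-3}y+\dots+b^{(1)}_{\frac{k-3}{2}-\ell}x^{s-\frac{k+3}{2}}y$, the orthogonality conditions against $x^{\frac{k-1}{2}}$, then $xy\cdot x^{\frac{k-3}{2}-1}$, etc., become a triangular linear system in the unknown coefficients $a^{(1)}_r, b^{(1)}_r$: each new condition determines one new coefficient in terms of the previously fixed ones, using that the relevant power-sum $\sum_m \tfrac{\gamma_m}{\beta_m}\alpha_m^{\,\cdot}$ (resp.\ $\sum_m \gamma_m\alpha_m^{\,\cdot}\beta_m$) is the leading term and is nonzero by the MDS/orthogonality structure already established. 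I would verify by induction on $r$ that the recursion in the statement is exactly the solution of this triangular system, with the normalization $a^{(1)}_0=1$ fixing the scaling and the final equation for $b^{(1)}_{\frac{k-3}{2}-\ell}$ encoding orthogonality to the twisted generator $x^\ell y+\eta x^{\frac{k+1}{2}}$ (this is where the $\eta$ enters, balancing the contribution $\sum_m\gamma_m\alpha_m^{s-1}$ coming from $x^{\frac{k+1}{2}}$). A dimension count then finishes: the displayed matrix has $n-k$ rows (one more than $G(D,\omega,(y)+(n-k-1)O)$, which has $n-k-1$ rows), its rows lie in $\mathcal{C}^{(1)}(D,kO,\ell,\eta)^{\perp}$ by construction, and they are linearly independent because the bottom block already has full rank $n-k-1$ and the top row has a pole order at $O$ strictly larger than any row of the bottom block (it involves $x^{s-\ell-1}$ with $s-\ell-1 > $ the exponents appearing below once one accounts for the $1/\beta$ twist), so it is not in their span; since $\dim\mathcal{C}^{(1)}(D,kO,\ell,\eta)=k+1$... wait, it is $k$ generators but spanning dimension $k$ — more precisely $\dim = k$ giving $\dim\mathcal{C}^{\perp}=n-k$, matching the row count. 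The even-$k$ case is entirely parallel: the twisted generator is now $x^\ell+\eta x^{\frac{k-2}{2}}y$ (a twist inside the "$x^i$" part instead of the "$x^j y$" part), so the roles of the $e^{(1)}_r$ and $h^{(1)}_r$ coefficients and of the two families of orthogonality relations are swapped, and the same triangular-elimination argument applies.

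The main obstacle I anticipate is bookkeeping rather than conceptual: one must match the exponent ranges in $f_1^{(1)}$ and in $G(D,\omega,(y)+(n-k-1)O)$ precisely against the exponents appearing in the orthogonality relations of Theorem~\ref{4}, and check carefully that at each step of the recursion the denominator power-sum is nonzero and that no "gap" at $O$ (arising from the Weierstrass gap at $O$, i.e.\ the absence of a function with a simple pole) is violated — in particular one has to confirm that $f_1^{(1)}/y$ really defines an element of the correct Riemann–Roch space and a legitimate Weil differential $f_1^{(1)}\omega/y$ whose residue vector is the claimed row. Verifying the boundary coefficient $b^{(1)}_{\frac{k-3}{2}-\ell}$ — the one that mixes the contribution of $x^{\frac{k+1}{2}}$ with $\eta$ — is the delicate endpoint of the induction, since it is the only place where orthogonality is imposed against a genuinely twisted (two-term) generator. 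Once the triangular system is set up correctly, solving it is forced, so the proof reduces to setting up the indices without error and invoking Theorem~\ref{11}, Theorem~\ref{4}, and Lemma~\ref{11}.
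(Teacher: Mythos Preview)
Your approach is essentially the same as the paper's: both use the chain $\mathcal{C}_{\mathcal{L}}(D,(k+1)O)^{\perp}\subsetneq\mathcal{C}(D,kO,\ell,\eta)^{\perp}$ with codimension one, take $G(D,\omega,(y)+(n-k-1)O)$ as the bottom block, and produce the missing row by writing down the orthogonality conditions for a candidate $f_{1}^{(1)}$ and solving them recursively. Two small points where the paper is sharper: (i) for linear independence of the extra row from the bottom block, the paper argues that any coincidence would force $g-f_{1}^{(1)}\in\mathcal{L}((n-2)O)$ to vanish at all $n$ points of $D$, hence be identically zero, contradicting the leading term $x^{s-\ell-1}$ --- your pole-order argument needs exactly this injectivity-of-evaluation step to be complete; (ii) the paper does not assert the recursion denominators are nonzero a priori, but instead argues that if $a_{0}^{(1)}=0$ one cascades down to $f_{1}^{(1)}\equiv 0$, so the normalization $a_{0}^{(1)}=1$ is legitimate --- this is what your flagged ``denominator nonzero'' check should become.
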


\begin{proof} Notice that
\begin{displaymath}
	\mathcal{C}_{\mathcal{L}}(D, (2\ell+2)O)\subsetneq\mathcal{C}(D, kO,\ell, \eta)\subsetneq\mathcal{C}_{\mathcal{L}}(D, (k+1)O)
\end{displaymath}
for odd $k$ and
\begin{displaymath}
	\mathcal{C}_{\mathcal{L}}(D, (2\ell-1)O)\subsetneq\mathcal{C}(D, kO,\ell, \eta)\subsetneq\mathcal{C}_{\mathcal{L}}(D, (k+1)O)
\end{displaymath}
for even $k$, then we have 
\begin{displaymath}
	\mathcal{C}_{\mathcal{L}}(D, (k+1)O)^{\perp}\subsetneq\mathcal{C}(D, kO,\ell, \eta)^{\perp}\subsetneq\mathcal{C}_{\mathcal{L}}(D, (2\ell+2)O)^{\perp}
\end{displaymath}
and
\begin{displaymath}
	\mathcal{C}_{\mathcal{L}}(D, (k+1)O)^{\perp}\subsetneq\mathcal{C}(D, kO, \ell,\eta)^{\perp}\subsetneq\mathcal{C}_{\mathcal{L}}(D, (2\ell-1)O)^{\perp}.
\end{displaymath}

It can also be checked that $\mathcal{C}_{\mathcal{L}}(D, (k+1)O)^{\perp}$ has co-dimension 1 on $\mathcal{C}(D, kO, \ell,\eta)^{\perp}$, which allows us to consider a special kind of polynomial and the coefficients are to be determined.

Since $\mathcal{C}(D, kO,\ell,\eta)^{\perp}\subsetneq\mathcal{C}_{\mathcal{L}}(D, (2\ell+2)O)^{\perp}$, first, we consider the vector:
\begin{displaymath}
	(\frac{\gamma_{1}}{\beta_{1}}f_{1}^{(1)}(P_{1}), \frac{\gamma_{2}}{\beta_{2}}f_{1}^{(1)}(P_{2}),\cdots, \frac{\gamma_{n}}{\beta_{n}}f_{1}^{(1)}(P_{n})) 
\end{displaymath}
where $f^{(1)}_{1}(x, y)
=\sum_{i=0}^{\frac{k-3}{2}-\ell}a^{(1)}_{i}x^{s-\ell-1-i}+\sum_{j=0}^{\frac{k-3}{2}-\ell}b^{(1)}_{j}x^{s-\ell-3-j}y$.

 We claim that the vector does not belong to $\mathcal{C}_{\mathcal{L}}(D, (k+1)O)^{\perp}$, if not, there is a polynomial $g(x, y)=c_{0}x^{s-\frac{k+1}{2}} +c_{1}x^{s-\frac{k+1}{2}-1}+\cdots+c_{s-\frac{k+3}{2}}x+c_{s-\frac{k+1}{2}}+d_{0}x^{s-\frac{k+3}{2}}y+\cdots+d_{s-\frac{k+3}{2}}y$ such that $\frac{\gamma_{i}}{\beta_{i}}g(\alpha_{i}, \beta_{i})=\frac{\gamma_{i}}{\beta_{i}}f_{1}^{(1)}(\alpha_{i}, \beta_{i})$ for $i=1, 2, \cdots, n$, which implies that the polynomial $g(x, y)-f_{1}^{(1)}(x, y)$ has at least $n=2s$ different roots. On the other hand, it can be easily checked that $g(x, y)-f_{1}^{(1)}(x, y)\in\mathcal{L}((n-2)O)$. However, there are $n$ rational places such that $g(\alpha_{i}, \beta_{i})-f_{1}^{(1)}(\alpha_{i}, \beta_{i})=0$ for $i=1, 2, \cdots, n$, which is a contradiction.

The vector 
$(\frac{\gamma_{1}}{\beta_{1}}f_{1}^{(1)}(P_{1}), \frac{\gamma_{2}}{\beta_{2}}f_{1}^{(1)}(P_{2}),\cdots, \frac{\gamma_{n}}{\beta_{n}}f_{1}^{(1)}(P_{n}))$
belongs to $\mathcal{C}(D, kO,\ell, \eta)^{\perp}$ if and only if the following linear equation system holds

\begin{small}
\begin{displaymath}
	\begin{cases}
		\sum_{i=1}^{n}\frac{\gamma_{i}}{\beta_{i}}f_{1}^{(1)}(\alpha_{i},\beta_{i})\alpha_{i}^{\ell+2}= 0\\
		\cdots\\
		\sum_{i=1}^{n}\frac{\gamma_{i}}{\beta_{i}}f_{1}^{(1)}(\alpha_{i},\beta_{i})\alpha_{i}^{\frac{k-1}{2}}= 0\\
		\sum_{i=1}^{n}\frac{\gamma_{i}}{\beta_{i}}f_{1}^{(1)}(\alpha_{i},\beta_{i})\left( \beta_{i}\alpha_{i}^{\ell}+\eta\alpha_{i}^{\frac{k+1}{2}}\right) = 0\\
        \sum_{i=1}^{n}\gamma_{i}f_{1}^{(1)}(\alpha_{i},\beta_{i})\alpha_{i}^{\ell+1} = 0\\
        \cdots\\
		\sum_{i=1}^{n}\gamma_{i}f_{1}^{(1)}(\alpha_{i},\beta_{i})\alpha_{i}^{\frac{k-3}{2}}= 0,\\
	\end{cases}
\end{displaymath}
\end{small}
which is equivalent that
\begin{small}
\begin{displaymath}
	\begin{cases}
		a_{0}^{(1)}\sum_{m=1}^{n}\frac{\gamma_{m}}{\beta_{m}}\alpha_{m}^{s}+  
        a_{1}^{(1)}\sum_{m=1}^{n}\frac{\gamma_{m}}{\beta_{m}}\alpha_{m}^{s+1}        +b^{(1)}_{0}\sum_{m=1}^{n}\gamma_{m}\alpha_{m}^{s-1}= 0\\
		\cdots\\
		\sum_{m=1}^{n}\frac{\gamma_{m}}{\beta_{m}}\sum_{i=0}^{\frac{k-3}{2}-\ell}a^{(1)}_{i}\alpha_{m}^{s-\ell-1+\frac{k-1}{2}-i}+\sum_{m=1}^{n}\gamma_{m}\sum_{j=0}^{\frac{k-5}{2}-l}b^{(1)}_{j}\alpha_{m}^{s-\ell-3+\frac{k-1}{2}-j}= 0\\
\sum_{m=1}^{n}\gamma_{m}\alpha_{m}^{s-1}+\eta\sum_{m=1}^{n}\frac{\gamma_{m}}{\beta_{m}}\sum_{i=0}^{\frac{k-3}{2}-\ell}a^{(1)}_{i}\alpha_{m}^{s-\ell-1+\frac{k+1}{2}-i}
	+\eta\sum_{m=1}^{n}\gamma_{m}\sum_{j=0}^{\frac{k-3}{2}-\ell}b^{(1)}_{j}\alpha_{m}^{s-\ell-3+\frac{k+1}{2}-j}= 0\\
		
        a^{(1)}_{0}\sum^{n}_{m=1}\gamma_{m}\alpha_{m}^{s-1}+a^{(1)}_{1}\sum_{m=1}^{n}\gamma_{m}\alpha_{m}^{s}+b^{(1)}_{0}\sum^{n}_{m=1}\gamma_{m}\alpha^{s-2}_{m}\beta_{m}=0\\
        \cdots\\
        \sum_{m=1}^{n}\gamma_{m}\sum_{i=0}^{\frac{k-3}{2}-l}a^{(1)}_{i}\alpha_{m}^{s-\ell-1+\frac{k-3}{2}-i}+\sum_{m=1}^{n}\gamma_{m}\sum_{j=0}^{\frac{k-5}{2}-\ell}b^{(1)}_{j}\alpha_{m}^{s-\ell-3+\frac{k-3}{2}-j}\beta_{m}= 0.\\
	\end{cases}
\end{displaymath}
\end{small}
Notice that $a^{(1)}_{0}\neq 0$, so we can assume $a^{(1)}_{0}=1$ by linearity. If $a^{(1)}_{0}=0$, then by the linearity, we can also suppose $b^{(1)}_{0}=1$ if $\sum_{m=1}^{n}\gamma_{m}\alpha_{m}^{s-1}\neq 0$, otherwise, we have $b^{(1)}_{0}=0$, which directly leads to $a^{(1)}_{1}=0$ if $\sum_{m=1}^{n}\frac{\gamma_{m}}{\beta_{m}}\alpha_{m}^{s+1}\neq 0$ and the similar discussions can also be applied to the other equalities. As a consequence, we have $a^{(1)}_{i}= b^{(1)}_{i}=0$ for $i=0, 1, \cdots, \frac{k-3}{2}-\ell$ and $f^{(1)}_{1}(x, y)=0$ which contradicts the assumption that $f_{1}^{(1)}(x, y)$ is nonzero.

By the assumption $a^{(1)}_{0}=1$, we have a such recursive process, we can obtain all the coefficients of polynomial $f_{1}^{(1)}(x, y)$.

 For the TECC $\mathcal{C}(D, kO,\ell, \eta)$ with even dimension $k$, we consider the polynomial 
\begin{displaymath}
	\begin{split}
		f_{2}^{(1)}(x,y)
		&=\textstyle\sum_{i=0}^{\frac{k}{2}-\ell}e^{(1)}_{i}x^{s-\ell-i}+\sum_{j=0}^{\frac{k}{2}-\ell}h^{(1)}_{j}x^{s-\ell-1-j}y
	\end{split}
\end{displaymath}
then the codeword
\begin{displaymath}
 (\frac{\gamma_{1}}{\beta_{1}}f_{2}^{(1)}(P_{1}), \frac{\gamma_{2}}{\beta_{2}}f_{2}^{(1)}(P_{2}),\cdots,
\frac{\gamma_{n}}{\beta_{n}}f_{2}^{(1)}(P_{n})) 
\end{displaymath}
belongs to $\mathcal{C}(D, kO, \ell, \eta)^{\perp}$ if and only if the following linear equation system holds

\begin{small}
\begin{displaymath}
	\begin{cases}
		\sum_{i=1}^{n}\frac{\gamma_{i}}{\beta_{i}}f_{2}^{(1)}(\alpha_{i},\beta_{i}) \left( \alpha_{i}^{\ell}+\eta\beta_{i}\alpha_{i}^{\frac{k-2}{2}}\right)= 0\\
        \sum_{i=1}^{n}\frac{\gamma_{i}}{\beta_{i}}f_{2}^{(1)}(\alpha_{i},\beta_{i}) \alpha_{i}^{\ell+1}= 0\\
        \cdots\\
		\sum_{i=1}^{n}\frac{\gamma_{i}}{\beta_{i}}f_{2}^{(1)}(\alpha_{i},\beta_{i})\alpha_{i}^{\frac{k}{2}}= 0\\
		\sum_{i=1}^{n}\gamma_{i}f_{2}^{(1)}(\alpha_{i},\beta_{i})\alpha_{i}^{\ell-1}= 0\\
		\cdots\\
		\sum_{i=1}^{n}\gamma_{i}f_{2}^{(1)}(\alpha_{i},\beta_{i})\alpha_{i}^{\frac{k-4}{2}}= 0,\\
	\end{cases}
\end{displaymath}
\end{small}
which is equivalent that
\begin{small}
\begin{displaymath}
	\begin{cases}
		h^{(1)}_{0}\sum_{m=1}^{n}\gamma_{m}\alpha_{m}^{s-1}+
\eta\sum_{m=1}^{n}\gamma_{m}\sum_{i=0}^{\frac{k}{2}-\ell}e^{(1)}_{i}\alpha_{m}^{s-\ell-i+\frac{k-2}{2}}
+\eta\sum_{m=1}^{n}\gamma_{m}\sum_{j=0}^{\frac{k}{2}-\ell}h^{(1)}_{j}\alpha_{m}^{s-\ell-1-j+\frac{k-2}{2}}\beta_{m}=0\\
e_{0}^{(1)}\sum_{m=1}^{n}\frac{\gamma_{m}}{\beta_{m}}\alpha_{m}^{s+1}+	\sum_{m=1}^{n}\gamma_{m}h^{(1)}_{0}\alpha_{m}^{s-1}+\sum_{m=1}^{n}\gamma_{m}h^{(1)}_{1}\alpha_{m}^{s}=0\\
		\cdots\\
		\sum_{m=1}^{n}\frac{\gamma_{m}}{\beta_{m}}\sum_{i=0}^{\frac{k}{2}-\ell-1}e^{(1)}_{i}\alpha_{m}^{s-\ell+\frac{k}{2}-i}+	\sum_{m=1}^{n}\gamma_{m}\sum_{j=0}^{\frac{k}{2}-\ell}h^{(1)}_{j}\alpha_{m}^{s-\ell-1+\frac{k}{2}-j}=0\\
e^{(1)}_{0}\sum_{m=1}^{n}\gamma_{m}\alpha_{m}^{s-1}+	\sum_{m=1}^{n}\gamma_{m}h^{(1)}_{0}\alpha_{m}^{s-3}\beta_{m}+\sum_{m=1}^{n}\gamma_{m}h^{(1)}_{1}\alpha_{m}^{s-2}\beta_{m}=0\\
		\cdots\\
		\sum_{m=1}^{n}\gamma_{m}\sum_{i=0}^{\frac{k}{2}-\ell-1}e^{(1)}_{i}\alpha_{m}^{s-\ell-i+\frac{k-4}{2}}+	\sum_{m=1}^{n}\gamma_{m}\sum_{j=0}^{\frac{k}{2}-\ell}h^{(1)}_{j}\alpha_{m}^{s-\ell-1-j+\frac{k-4}{2}}\beta_{m}=0.\\
	\end{cases}
\end{displaymath}
\end{small}
Notice that $h^{(1)}_{0}\neq 0$, so we can assume $h^{(1)}_{0}=1$ by linearity. If $h^{(1)}_{0}=0$, then by the linearity we can suppose $e^{(1)}_{0}=1$ if $\sum_{m=1}^{n}\frac{\gamma_{m}}{\beta_{m}}\alpha_{m}^{s+1}\neq 0$, otherwise we have $e^{(1)}_{0}=0$, which directly leads to $h^{(1)}_{1}=0$ if $\sum_{m=1}^{n}\gamma_{m}\alpha_{m}^{s}\neq 0$ and the similar discussion can also be applied to the other equalities. As a consequence, we have $e^{(1)}_{i}= h^{(1)}_{i}=0$ for $i=0, 1, \cdots, \frac{k}{2}-\ell$ and $f^{(1)}_{2}(x, y)=0$ which contradicts the assumption that $f_{2}^{(1)}(x)$ is nonzero.

By assumption $h^{(1)}_{0}=1$, we have a recursive process, we can obtain all the coefficients of the polynomial $f_{2}^{(1)}(x, y)$.
\end{proof}
\begin{remark}
\vspace{-0.25 cm}
Each recursion $a^{(1)}_{r}, b^{(1)}_{r}, e^{(1)}_{r}, h^{(1)}_{r}$ for  $r=0, 1, \cdots, \frac{k-3}{2}-\ell$ is actually deduced from the linear equation systems in Theorem \ref{4}. After arranging the linear equation system by the orthogonality relations, we shall delete the parameters in $a^{(1)}_{r}, b^{(1)}_{r}, e^{(1)}_{r}, h^{(1)}_{r}$ whose coefficients in the linear equation systems are equal to zero, if not, we have the standard recursion process as in Theorem \ref{4}. For the other two types of TECCs, the recursion processes are also operated as the first type.
\vspace{-0.25 cm}
\end{remark}

By the similar discussions, we have the results for the TECC $\mathcal{C}(D, kO,\ell, \eta)$ constructed over the other two types of elliptic curves.

\begin{theorem} 
Notations as above. The parity check matrix of the TECC $\mathcal{C}(D, kO,\ell, \eta)$ constructed over the second type of elliptic curves can be classified as follows.
\begin{enumerate}
    \item 
	The parity check matrix of $\mathcal{C}(D, kO,\ell, \eta)$ for odd $k$:
	
		Let $a^{(2)}_{0},a_{1}^{(2)},\cdots, a^{(2)}_{\frac{k-3}{2}-\ell}, b^{(2)}_{0}, b^{(2)}_{1},\cdots, b^{(2)}_{\frac{k-3}{2}-\ell}\in\mathbb{F}_{q}$ be defined by the following recursion
	\begin{displaymath}
		\begin{cases}
			a^{(2)}_{r}=-\frac{\sum_{m=1}^{n}\gamma_{m}\sum_{i=0}^{r-1}a_{i}^{(2)}\alpha^{s-\ell-1+\frac{k-1}{2}-i}_{m}+\sum_{m=1}^{n}\gamma_{m}\sum_{j=0}^{r-1}b_{j}^{(2)}\alpha^{s-\ell-3+\frac{k-1}{2}-j}_{m}\beta_{m}}{\sum_{m=1}^{n}\gamma_{m}\alpha_{m}^{s-\ell-1+\frac{k-1}{2}-r}}\\
			b^{(2)}_{r-1}=-\frac{\sum_{m=1}^{n}\gamma_{m}\sum_{i=0}^{r}a_{i}^{(2)}\alpha^{s-\ell-1+\frac{k-3}{2}-i}_{m}\beta_{m}+\sum_{m=1}^{n}\gamma_{m}\sum_{j=0}^{r-2}b_{j}^{(2)}\alpha^{s-\ell-3+\frac{k-3}{2}-j}_{m}\beta^{2}_{m}}{\sum_{m=1}^{n}\gamma_{m}\alpha_{m}^{s-\ell-3+\frac{k-3}{2}-r}\beta^{2}_{m}}\\
        	\end{cases}
	\end{displaymath}
    for $r=1, 2, \cdots, \frac{k-3}{2}-\ell$ with
    \begin{small}
		\begin{displaymath}
		\begin{cases}
			a^{(2)}_{0}=1\\
			b^{(2)}_{\frac{k-3}{2}-\ell}=-\frac{\sum_{m=1}^{n}\gamma_{m}\alpha^{s-1}_{m}\beta_{m}+\eta\sum_{m=1}^{n}\gamma_{m}\sum^{\frac{k-3}{2}-\ell}_{i=0}a^{(2)}_{i}\alpha^{s-\ell-1+\frac{k+1}{2}-i}_{m}+\eta\sum_{m=1}^{n}\gamma_{m}\sum^{\frac{k-3}{2}-\ell-1}_{j=0}b_{j}^{(2)}\alpha^{s-\ell-3+\frac{k+1}{2}-j}_{m}\beta_{m}}{\eta\sum_{m=1}^{n}\gamma_{m}\alpha^{s-1}_{m}\beta_{m}},\\
		\end{cases}
	\end{displaymath}
    \end{small}
    then the parity check matrix for this case can be given by 
    \begin{small}
	\begin{displaymath}
			\begin{pmatrix}
				\gamma_{1}f_{1}^{(2)}(\alpha_{1}, \beta_{1})&	\gamma_{2}f_{1}^{(2)}(\alpha_{2}, \beta_{2})&\cdots&\gamma_{n}f_{1}^{(2)}(\alpha_{n}, \beta_{n})\\
			\hline\\
             &G(D,\omega, (n-k-1)O)		
             \end{pmatrix}
	\end{displaymath}
    \end{small}
	where $f_{1}^{(2)}(x, y)= x^{s-\ell-1}+\cdots+ a^{(2)}_{\frac{k-3}{2}-\ell}x^{s-\frac{k-1}{2}}+b^{(2)}_{0}x^{s-\ell-3}y+\cdots+b^{(2)}_{\frac{k-3}{2}-\ell}x^{s-\frac{k+3}{2}}y$.
	
	\item The parity check matrix of $\mathcal{C}(D, kO,\ell, \eta)$ for even $k$:
	
		Let $  e^{(2)}_{0},e^{(2)}_{1},\cdots, e^{(2)}_{\frac{k}{2}-\ell}, h^{(2)}_{0},h^{(2)}_{1},\cdots, h^{(2)}_{\frac{k}{2}-\ell}\in\mathbb{F}_{q}$ be defined by the following recursion
	\begin{displaymath}
		\begin{cases}
			e^{(2)}_{r-1}=-\frac{\sum_{m=1}^{n}\gamma_{m}\sum_{i=0}^{r-2}e_{i}^{(2)}\alpha^{s-\ell+\frac{k}{2}-i}_{m}+\sum_{m=1}^{n}\gamma_{m}\sum_{j=0}^{r}h_{j}^{(2)}\alpha^{s-\ell-1+\frac{k}{2}-j}_{m}\beta_{m}}{\sum_{m=1}^{n}\gamma_{m}\alpha_{m}^{s-\ell+\frac{k}{2}-r+1}}\\
			h^{(2)}_{r}=-\frac{\sum_{m=1}^{n}\gamma_{m}\sum_{i=0}^{r-1}e_{i}^{(2)}\alpha^{s-\ell+\frac{k-4}{2}-i}_{m}\beta_{m}+\sum_{m=1}^{n}\gamma_{m}\sum_{j=0}^{r-1}h_{j}^{(2)}\alpha^{s-\ell-1+\frac{k-4}{2}-j}_{m}\beta^{2}_{m}}{\sum_{m=1}^{n}\gamma_{m}\alpha_{m}^{s-\ell-1+\frac{k-4}{2}-r}\beta^{2}_{m}}\\
    \end{cases}
	\end{displaymath}
    for $r=1, 2, \cdots, \frac{k}{2}-\ell$ with
    \begin{small}
	\begin{displaymath}
		\begin{cases}
			h^{(2)}_{0}=1\\
			e^{(2)}_{\frac{k}{2}-\ell}=-\frac{\sum_{m=1}^{n}\gamma_{m}\alpha^{s-1}_{m}\beta_{m}+\eta\sum_{m=1}^{n}\gamma_{m}\sum^{\frac{k}{2}-\ell-1}_{i=0}e^{(2)}_{i}\alpha^{s-\ell-i+\frac{k-2}{2}}_{m}\beta_{m}+\eta\sum^{n}_{m=1}\gamma_{m}\sum^{\frac{k}{2}-\ell-1}_{j=0}h_{j}^{(2)}\alpha^{s-\ell-1-j+\frac{k-2}{2}}_{m}\beta^{2}_{m}}{\eta\sum_{m=1}^{n}\gamma_{m}\alpha_{m}^{s-1}\beta_{m}},\\
		\end{cases}
	\end{displaymath}
    \end{small}
    then the parity check matrix for this case can be given by 
    \begin{small}
	\begin{displaymath}
		\begin{pmatrix}
				\gamma_{1}f_{2}^{(2)}(\alpha_{1}, \beta_{1})&	\gamma_{2}f_{2}^{(2)}(\alpha_{2}, \beta_{2})&\cdots&\gamma_{n}f_{2}^{(2)}(\alpha_{n}, \beta_{n})\\
			\hline\\
             &G(D,\omega, (n-k-1)O)\\		\end{pmatrix}
		\end{displaymath}
        \end{small}
	where $f_{2}^{(2)}(x, y)=x^{s-\ell-1}y+\cdots+h^{(2)}_{\frac{k}{2}-\ell}x^{s-\frac{k}{2}-1}y+e^{(2)}_{0}x^{s-\ell}+\cdots+e^{(2)}_{\frac{k}{2}-\ell}x^{s-\frac{k}{2}}$.
    \end{enumerate}
    \label{55}
\end{theorem}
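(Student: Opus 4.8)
The plan is to follow the template of Theorem~\ref{7} line for line, the only substitutions being: the first-type orthogonality relations get replaced by the second-type ones of Theorem~\ref{4}, and the weights $\gamma_i/\beta_i$ together with the block $G(D,\omega,(y)+(n-k-1)O)$ get replaced by the weights $\gamma_i$ and the block $G(D,\omega,(n-k-1)O)$. This replacement is forced by the fact that for the second type $(dx/t)=-D+nO$ (rather than $-D+nO+(y)$), so that $\mathcal{C}_{\mathcal{L}}(D,(k+1)O)^{\perp}=\bm\gamma\star\mathcal{C}_{\mathcal{L}}(D,(n-k-1)O)$ and $G(D,\omega,(n-k-1)O)$ is already a parity-check matrix of $\mathcal{C}_{\mathcal{L}}(D,(k+1)O)$.

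First I would record the sandwiching of the TECC between two single-point ECCs. Reading off the pole orders at $O$ in the basis of $S_\ell^{(1)}$ (odd $k$), one sees that the twisted monomial $x^{(k+1)/2}$ has pole $k+1$ and that the largest Riemann--Roch space contained in the defining set is $\mathcal{L}((2\ell+2)O)$, whence
\[
\mathcal{C}_{\mathcal{L}}(D,(2\ell+2)O)\subsetneq\mathcal{C}(D,kO,\ell,\eta)\subsetneq\mathcal{C}_{\mathcal{L}}(D,(k+1)O);
\]
similarly, for even $k$ the twisted monomial $x^{(k-2)/2}y$ has pole $k+1$ and
\[
\mathcal{C}_{\mathcal{L}}(D,(2\ell-1)O)\subsetneq\mathcal{C}(D,kO,\ell,\eta)\subsetneq\mathcal{C}_{\mathcal{L}}(D,(k+1)O).
\]
Taking duals and comparing dimensions (all three spaces have known dimension by Riemann--Roch), $\mathcal{C}_{\mathcal{L}}(D,(k+1)O)^{\perp}$ is a hyperplane in $\mathcal{C}(D,kO,\ell,\eta)^{\perp}$. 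Hence it suffices to adjoin to the block $G(D,\omega,(n-k-1)O)$ one further row lying in $\mathcal{C}(D,kO,\ell,\eta)^{\perp}$ but not in the row space of that block.

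Next I would take the candidate row $(\gamma_1 f(\alpha_1,\beta_1),\dots,\gamma_n f(\alpha_n,\beta_n))$, where $f=f_1^{(2)}$ for odd $k$ and $f=f_2^{(2)}$ for even $k$ has the shape prescribed in the statement with coefficients still to be found. Since $f$ carries a monomial of pole order exceeding $n-k-1$ (as $\ell\leq\frac{k-3}{2}$, resp.\ $\ell\leq\frac{k}{2}$), and its leading coefficient will turn out to be nonzero, the root-counting step of Theorem~\ref{7} applies: if this vector lay in $\bm\gamma\star ev_{\mathcal{L}}(\mathcal{L}((n-k-1)O))$, then $f-g$ would vanish at all $n$ points $P_i$ for some $g\in\mathcal{L}((n-k-1)O)$, forcing $f\equiv g$ by the degree bound on that Riemann--Roch space, which contradicts the pole order. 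Imposing orthogonality of the candidate row to the remaining basis functions of $S_\ell^{(1)}$ (resp.\ $S_\ell^{(2)}$) then produces a linear system in the coefficients of $f$; rewriting each equation through the power sums $\sum_m\gamma_m\alpha_m^{\,u}$, $\sum_m\gamma_m\alpha_m^{\,u}\beta_m$, $\sum_m\gamma_m\alpha_m^{\,u}\beta_m^2$ and discarding those sums that vanish identically by the second-type orthogonality relations on their admissible ranges, one is left with exactly the triangular recursion in the statement, plus the closed form for the final coefficient ($b^{(2)}_{(k-3)/2-\ell}$, resp.\ $e^{(2)}_{k/2-\ell}$), which is the one equation in which the single twisted basis function appears and hence the only place where $\eta$ enters. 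An argument identical to the one in Theorem~\ref{7} shows the top coefficient $a^{(2)}_0$ (resp.\ $h^{(2)}_0$) cannot vanish: otherwise linearity propagates zeros down the recursion and forces $f\equiv0$; so the normalization $a^{(2)}_0=1$ (resp.\ $h^{(2)}_0=1$) is legitimate and the recursion then pins down $f$ uniquely.

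I expect the main obstacle to be purely computational, namely the exponent bookkeeping: one must check that the exponents $s-\ell-1+\frac{k-1}{2}-i$, $s-\ell-3+\frac{k-3}{2}-j$, $s-\ell+\frac{k}{2}-i$, and so on, occurring in the linear system all fall inside the ranges $0\leq i\leq\lfloor\frac{k}{2}\rfloor+\lfloor\frac{n-k}{2}\rfloor$, $0\leq i\leq\frac{n}{2}-2$, $0\leq i\leq\lfloor\frac{k-3}{2}\rfloor+\lfloor\frac{n-k-3}{2}\rfloor$ of the second-type relations, so that precisely the intended power sums drop out; and that the degenerate sub-cases, in which some of these sums vanish, are absorbed by the same ``delete the coefficient whose sum is zero'' convention used in Theorem~\ref{7} and recalled in the remark following it. No new idea beyond the first-type proof is needed.
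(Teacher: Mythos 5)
Your proposal is correct and is exactly the argument the paper intends: the paper proves only the first-type case (Theorem~\ref{7}) in detail and disposes of Theorem~\ref{55} with ``by the similar discussions,'' and your adaptation --- replacing the weights $\gamma_i/\beta_i$ by $\gamma_i$ and the block $G(D,\omega,(y)+(n-k-1)O)$ by $G(D,\omega,(n-k-1)O)$ because $(dx/t)=-D+nO$ for the second type, then rerunning the sandwich/codimension-one, root-counting, and recursion steps with the second-type orthogonality relations --- is precisely that similar discussion made explicit.
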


\begin{theorem}
Notations as above. The parity check matrix of the TECC $\mathcal{C}(D, kO,\ell, \eta)$ constructed over the third type of elliptic curves can be classified as follows.
\begin{enumerate}
    \item The parity check matrix of $\mathcal{C}(D, kO,\ell, \eta)$ for odd $k$:
		
		Let $a^{(3)}_{0}, a_{1}^{(3)}\cdots, a^{(3)}_{\frac{k-3}{2}-\ell}, b^{(3)}_{0}, b^{(3)}_{1},\cdots, b^{(3)}_{\frac{k-3}{2}-\ell}\in\mathbb{F}_{q}$ be defined by the following recursion
	\begin{displaymath}
		\begin{cases}
			a^{(3)}_{r}=-\frac{\sum_{m=1}^{n}\frac{\gamma_{m}}{a\alpha_{m}+b}\sum_{i=0}^{r-1}a_{i}^{(3)}\alpha^{s-\ell-1+\frac{k-1}{2}-i}_{m}+\sum_{m=1}^{n}\gamma_{m}\sum_{j=0}^{r-1}b_{j}^{(3)}\alpha^{s-\ell-3+\frac{k-1}{2}-j}_{m}\beta_{m}}{\sum_{m=1}^{n}\frac{\gamma_{m}}{a\alpha_{m}+b}\alpha_{m}^{s-\ell-1+\frac{k-1}{2}-r}}\\
			b^{(3)}_{r-1}=-\frac{\sum_{m=1}^{n}\gamma_{m}\sum_{i=0}^{r}a_{i}^{(3)}\alpha^{s-\ell-1+\frac{k-3}{2}-i}_{m}\beta_{m}+\sum_{m=1}^{n}\gamma_{m}(a\alpha_{m}+b)\sum_{j=0}^{r-2}b_{j}^{(3)}\alpha^{s-\ell-3+\frac{k-3}{2}-j}_{m}\beta^{2}_{m}}{\sum_{m=1}^{n}\gamma_{m}(a\alpha_{m}+b)\alpha_{m}^{s-\ell-3+\frac{k-3}{2}-r}\beta^{2}_{m}}\\
        	\end{cases}
	\end{displaymath}
    for
    $r=1, 2, \cdots, \frac{k-3}{2}-\ell$ with
		\begin{displaymath}
		\begin{cases}
			a^{(3)}_{0}=1\\
			b^{(3)}_{\frac{k-3}{2}-\ell}=-\frac{\sum\limits_{m=1}^{n}\frac{\gamma_{m}}{a\alpha_{m}+b}\alpha^{s-1}_{m}+\eta\sum\limits_{m=1}^{n}\frac{\gamma_{m}}{a\alpha_{m}+b}\sum\limits^{\frac{k-3}{2}-\ell}_{i=0}a^{(3)}_{i}\alpha^{s-\ell-1+\frac{k+1}{2}-i}_{m}+\eta\sum\limits_{m=1}^{n}\gamma_{m}\sum\limits^{\frac{k-3}{2}-\ell-1}_{j=0}b_{j}^{(3)}\alpha^{s-\ell-3+\frac{k+1}{2}-j}_{m}\beta_{m}}{\eta\sum\limits_{m=1}^{n}\frac{\gamma_{m}}{a\alpha_{m}+b}\alpha^{s-1}_{m}},\\
		\end{cases}
	\end{displaymath}	
    then the parity check matrix for this case can be given by 
    \begin{small}
		\begin{displaymath}
			\begin{pmatrix}
			\frac{\gamma_{1}}{a\alpha_{1}+b}f_{1}^{(3)}(\alpha_{1}, \beta_{1})&	\frac{\gamma_{2}}{a\alpha_{2}+b}f_{1}^{(3)}(\alpha_{2}, \beta_{2})&\cdots&		\frac{\gamma_{n}}{a\alpha_{n}+b}f_{1}^{(3)}(\alpha_{n}, \beta_{n})\\
			\hline\\
             & G(D,\omega, (ax+b)+(n-k-1)O)			\end{pmatrix}
		\end{displaymath}
        \end{small}
        where $
		    f_{1}^{(3)}(x, y)=x^{s-\ell-1}+\cdots+ a^{(3)}_{\frac{k-3}{2}-\ell}x^{s-\frac{k-1}{2}}+b^{(3)}_{0}x^{s-\ell-3}(ax+b)y+\cdots+ b^{(3)}_{\frac{k-3}{2}-\ell}x^{s-\frac{k+3}{2}}
         (ax+b)y$.
		
		\item The parity check matrix of $\mathcal{C}(D, kO,\ell, \eta)$ for even $k$:
		
		Let $  e^{(3)}_{0},e^{(3)}_{1},\cdots, e^{(3)}_{\frac{k}{2}-\ell}, h^{(3)}_{0},h^{(3)}_{1},\cdots, h^{(3)}_{\frac{k}{2}-\ell}\in\mathbb{F}_{q}$ be defined by the following recursion
	\begin{displaymath}
		\begin{cases}
			e^{(3)}_{r-1}=-\frac{\sum_{m=1}^{n}\frac{\gamma_{m}}{a\alpha_{m}+b}\sum_{i=0}^{r-2}e_{i}^{(1)}\alpha^{s-\ell+\frac{k}{2}-i}_{m}+\sum_{m=1}^{n}\gamma_{m}\sum_{j=0}^{r}h_{j}^{(1)}\alpha^{s-\ell-1+\frac{k}{2}-j}_{m}\beta_{m}}{\sum_{m=1}^{n}\frac{\gamma_{m}}{a\alpha_{m}+b}\alpha_{m}^{s-\ell+\frac{k}{2}-r+1}}\\
			h^{(3)}_{r}=-\frac{\sum_{m=1}^{n}\gamma_{m}\sum_{i=0}^{r-1}e_{i}^{(1)}\alpha^{s-\ell+\frac{k-4}{2}-i}_{m}\beta_{m}+\sum_{m=1}^{n}\gamma_{m}(a\alpha_{m}+b)\sum_{j=0}^{r-1}h_{j}^{(1)}\alpha^{s-\ell-1+\frac{k-4}{2}-j}_{m}\beta^{2}_{m}}{\sum_{m=1}^{n}\gamma_{m}(a\alpha_{m}+b)\alpha_{m}^{s-\ell-1+\frac{k-4}{2}-r}\beta^{2}_{m}}\\
    \end{cases}
	\end{displaymath}
    for $	r=1, 2, \cdots, \frac{k}{2}-\ell$ with
    \begin{small}
	\begin{displaymath}
		\begin{cases}
			h^{(3)}_{0}=1\\
			e^{(3)}_{\frac{k}{2}-\ell}=-\frac{\sum\limits_{m=1}^{n}\gamma_{m}\alpha^{s-1}_{m}\beta_{m}+\eta\sum\limits_{m=1}^{n}\gamma_{m}\sum\limits^{\frac{k}{2}-\ell-1}_{i=0}e^{(3)}_{i}\alpha^{s-\ell-i+\frac{k-2}{2}}_{m}\beta_{m}+\eta\sum\limits^{n}_{m=1}\gamma_{m}(a\alpha_{m}+b)\sum\limits^{\frac{k}{2}-\ell-1}_{j=0}h_{j}^{(3)}\alpha^{s-\ell-1-j+\frac{k-2}{2}}_{m}\beta^{2}_{m}}{\eta\sum\limits_{m=1}^{n}\gamma_{m}\alpha_{m}^{s-1}\beta_{m}},\\
		\end{cases}
	\end{displaymath}
    \end{small}
    then the parity check matrix for this case can be given by 
    \begin{small}
		\begin{displaymath}
			\begin{pmatrix}
				\frac{\gamma_{1}}{a\alpha_{1}+b}	f_{2}^{(3)}(\alpha_{1}, \beta_{1})&\frac{\gamma_{2}}{a\alpha_{2}+b}f_{2}^{(3)}(\alpha_{2}, \beta_{2})&\cdots&\frac{\gamma_{n}}{a\alpha_{n}+b}f_{2}^{(3)}(\alpha_{n}, \beta_{n})\\
				\hline\\
                & G(D,\omega, (ax+b)+(n-k-1)O)			
                 \end{pmatrix}
		\end{displaymath}
        \end{small}
		where $f_{2}^{(3)}(x, y)=x^{s-\ell-1}(ax+b)y+\cdots+h^{(3)}_{\frac{k}{2}-\ell}x^{s-\frac{k}{2}-1}(ax+b)y+e^{(3)}_{0}x^{s-\ell}+\cdots+e^{(3)}_{\frac{k}{2}-\ell}x^{s-\frac{k}{2}}$.
	\end{enumerate}
    \label{56}
\end{theorem}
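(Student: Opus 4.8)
The plan is to follow the proofs of Theorems~\ref{7} and~\ref{55} line by line, with $1/\beta_i$ (respectively $\beta_i$) replaced throughout by $1/(a\alpha_i+b)$, the monomials $x^jy$ replaced by $x^j(ax+b)y$, and part~(3) of Theorem~\ref{4} used in place of parts~(1)--(2). First I would record, for the third type, the chain of subcodes: for odd $k$,
\[
\mathcal{C}_{\mathcal{L}}(D,(2\ell+2)O)\subsetneq\mathcal{C}(D,kO,\ell,\eta)\subsetneq\mathcal{C}_{\mathcal{L}}(D,(k+1)O),
\]
and the variant with $(2\ell-1)O$ in place of $(2\ell+2)O$ for even $k$; both are read off by comparing the pole orders at $O$ of the basis functions of $S_\ell^{(1)},S_\ell^{(2)}$ with those of $\mathcal{L}(mO)$, using $v_O(x)=-2$, $v_O(y)=-1$, $v_O(ax+b)=-2$. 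Dualizing with the duality theorem (Theorem~\ref{11}) and the identity $(\omega)=(dx/t)=-D+nO+(ax+b)$ yields
\[
\mathcal{C}_{\mathcal{L}}(D,(k+1)O)^{\perp}\subsetneq\mathcal{C}(D,kO,\ell,\eta)^{\perp}\subsetneq\gamma\star\mathcal{C}_{\mathcal{L}}(D,(ax+b)+(n-2\ell-2)O)
\]
for odd $k$, and the analogue with $(ax+b)+(n-2\ell+1)O$ for even $k$. Since $\mathcal{C}_{\mathcal{L}}(D,(k+1)O)^{\perp}$ has codimension $1$ in $\mathcal{C}(D,kO,\ell,\eta)^{\perp}$, the parity-check matrix consists of the explicit parity-check matrix $G(D,\omega,(ax+b)+(n-k-1)O)$ of $\mathcal{C}_{\mathcal{L}}(D,(k+1)O)$ (obtained from the corollary following Theorem~\ref{4} with $k$ replaced by $k+1$) together with exactly one additional row.

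Next I would determine that extra row. By Lemma~\ref{11}(3), every element of $\gamma\star\mathcal{C}_{\mathcal{L}}(D,(ax+b)+mO)$ has the shape $\bigl(\tfrac{\gamma_i}{a\alpha_i+b}f(\alpha_i,\beta_i)\bigr)_{1\le i\le n}$ with $f$ a linear combination of the functions $x^i$ and $x^j(ax+b)y$; one takes $f=f_1^{(3)}$ for odd $k$ and $f=f_2^{(3)}$ for even $k$. By Theorem~\ref{11} such a row automatically lies in the dual of the smaller ECC, hence is orthogonal to all of its generator rows. Requiring in addition orthogonality against the remaining generator rows of $\mathcal{C}(D,kO,\ell,\eta)$ --- the pure powers $x^i$, the rows $x^j(ax+b)y$, and the twisted row $\beta_i\alpha_i^{\ell}(a\alpha_i+b)+\eta\alpha_i^{\frac{k+1}{2}}$ for odd $k$, resp. $\alpha_i^{\ell}+\eta\beta_i\alpha_i^{\frac{k-2}{2}}(a\alpha_i+b)$ for even $k$ --- then expanding $f_1^{(3)}(\alpha_m,\beta_m)$ or $f_2^{(3)}(\alpha_m,\beta_m)$ and collapsing the resulting power sums via the orthogonality relations of Theorem~\ref{4}(3) should produce precisely the stated linear system in the unknowns $a^{(3)}_i,b^{(3)}_j$ (resp. $e^{(3)}_i,h^{(3)}_j$). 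Solving this system from the top down gives the displayed recursions, with the final equation --- the one carrying the factor $\eta$ --- yielding the closed formula for $b^{(3)}_{\frac{k-3}{2}-\ell}$ (resp. $e^{(3)}_{\frac{k}{2}-\ell}$).

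Finally I would verify that a nonzero such $f$ exists, equivalently that the adjoined row genuinely lies outside $\mathcal{C}_{\mathcal{L}}(D,(k+1)O)^{\perp}$; this simultaneously justifies the normalizations $a^{(3)}_0=1$ and $h^{(3)}_0=1$. As in Theorem~\ref{7}, this is a root-counting argument: if the row equalled $\bigl(\tfrac{\gamma_i}{a\alpha_i+b}g(\alpha_i,\beta_i)\bigr)_i$ for some $g$ in the basis of the larger dual Riemann-Roch space, then $\tfrac{f-g}{ax+b}$ would be a function in a Riemann-Roch space of degree $n-2\ell-2<n$ vanishing at all $n$ points $P_1,\dots,P_n$, hence identically zero, contradicting the degree of the leading term of $f$ unless its coefficient can be normalized to $1$. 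I expect the main obstacle to be the bookkeeping rather than anything conceptual: one must carry the extra factor $a\alpha_m+b$ that multiplies the $\beta_m^2$-terms in Theorem~\ref{4}(3)(c) and appears in the denominators in Theorem~\ref{4}(3)(a), check at each step of the recursion that the power sum appearing in the denominator --- of the form $\sum_m\tfrac{\gamma_m}{a\alpha_m+b}\alpha_m^{\bullet}$ or $\sum_m\gamma_m(a\alpha_m+b)\alpha_m^{\bullet}\beta_m^2$ --- is nonzero, and when it vanishes discard the corresponding parameter and shift the normalization exactly as in the Remark following Theorem~\ref{7}.
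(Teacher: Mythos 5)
Your proposal is correct and takes essentially the same route as the paper: the paper gives no separate proof of this theorem, stating only that it follows ``by the similar discussions'' from the first-type case (Theorem~\ref{7}), and your plan is precisely that adaptation --- the subcode chain, the codimension-one argument, the orthogonality relations of Theorem~\ref{4}(3), the root-counting exclusion of the extra row from $\mathcal{C}_{\mathcal{L}}(D,(k+1)O)^{\perp}$, and the normalization/recursion handling all match. The only remaining work is the bookkeeping of the factor $a\alpha_m+b$ in the power sums, which you already flag.
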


To construct self-dual codes from TECCs, we need to consider a generalized version of TECCs with scaling coefficients. 



\begin{definition}
\vspace{-0.25 cm}
Let $P_{1}, \cdots, P_{n}, O$ be distinct $\mathbb{F}_{q}$-rational points on elliptic curve $E$. Put $D=\sum^{n}_{i=1}P_i$. The twisted generalized elliptic curve codes (TGECCs) $\mathcal{C}(D, kO, \eta, \mathbf{v})$ is defined to be
\begin{displaymath}
\mathcal{C}(D, kO,\ell, \eta, \mathbf{v}):=\{(v_{1}f(P_{1}), v_{2}f(P_{2}), \cdots, v_{n}f(P_{n})): f\in S_{\ell}\}
\end{displaymath}
where $\mathbf{v}=(v_{1}, v_{2}, \cdots, v_{n})\in\mathbb{F}^{n}_{q}$ with $v_{i}\neq 0$ for $1\leq i\leq n$.
\vspace{-0.25 cm}
\end{definition}

\begin{remark}
    We can also define a generalized version of ECCs by substituting the Riemann-Roch spaces ($\mathcal{L}(kO)$ et. al.) for the defining sets of TECCs as the operations in \cite{10} and such constructions are called the generalized elliptic curve codes (GECCs). 
\end{remark}

Now we shall assume $\ell=\frac{k-3}{2}$ for odd $k$ and $\ell=\frac{k}{2}$ for even $k$, respectively in the following of this section.

\begin{corollary}\label{101}
		The parity-check matrices of TGECCs $\mathcal{C}(D, kO,\ell, \eta, \mathbf{v})$ constructed over the first type of elliptic curves can be classified as follows:
		\begin{enumerate}
			\item For odd $k$, 
			the parity-check matrix of $\mathcal{C}(D, kO,\ell, \eta, \mathbf{v})$ can be given by:
			\begin{small}
				\begin{displaymath}
					\begin{pmatrix}
						\frac{	\gamma_{1}}{v_{1}\beta_{1}}f_{1}^{(1)}(\alpha_{1}, \beta_{1})&		\frac{	\gamma_{2}}{v_{2}\beta_{2}}f_{1}^{(1)}(\alpha_{2}, \beta_{2})&\cdots&			\frac{	\gamma_{n}}{v_{n}\beta_{n}}f_{1}^{(1)}(\alpha_{n}, \beta_{n})\\
						\hline\\
						& \mathbf{\frac{1}{v}}\star G(D,\omega, (y)+(n-k-1)O) &
					\end{pmatrix}
				\end{displaymath}
			\end{small}
			where 
			\begin{small}
				\begin{displaymath}
					f^{(1)}_{1}(x, y)=            x^{s-\frac{k-1}{2}}-\frac{\sum_{i=1}^{n}\gamma_{i}\alpha_{i}^{s-1}+\eta\sum_{i=1}^{n}\frac{\gamma_{i}}{\beta_{i}}\alpha_{i}^{s+1}}{\eta\sum_{i=1}^{n}\gamma_{i}\alpha_{i}^{s-1}}x^{s-\frac{k+3}{2}}y.
				\end{displaymath}
			\end{small}
			\item For even $k$, the parity-check matrix of $\mathcal{C}(D, kO,\ell, \eta, \mathbf{v})$ can be given by:
			\begin{small}
				\begin{displaymath}
					\begin{pmatrix}
						\frac{\gamma_{1}}{v_{1}\beta_{1}}	f_{2}^{(1)}(\alpha_{1}, \beta_{1})&\frac{\gamma_{2}}{v_{2}\beta_{2}}f_{2}^{(1)}(\alpha_{2}, \beta_{2})&\cdots&\frac{\gamma_{n}}{v_{n}\beta_{n}}f_{2}^{(1)}(\alpha_{n}, \beta_{n})\\
						\hline\\
						& \mathbf{\frac{1}{v}}\star G(D,\omega, (y)+(n-k-1)O) &
					\end{pmatrix}
				\end{displaymath}
			\end{small}
			where 
			\begin{small}
				\begin{displaymath}
					f^{(1)}_{2}(x, y)=
					x^{s-\frac{k}{2}-1}y-\frac{\sum_{i=1}^{n}\gamma_{i}\alpha_{i}^{s-1}+\eta\sum_{i=1}^{n}\gamma_{i}\alpha_{i}^{s-2}\beta_{i}}{\eta\sum_{i=1}^{n}\gamma_{i}\alpha_{i}^{s-1}}x^{s-\frac{k}{2}}.
				\end{displaymath}
			\end{small}
		\end{enumerate}
	\end{corollary}
	\begin{remark}
		Note that the sum $\sum_{i=1}^{n}\gamma_{i}\alpha_{i}^{s-1}\neq 0$, otherwise, we shall obtain $\mathcal{C}_{\mathcal{L}}(D,kO)$ has a $n-k+1$ dimensional parity-check matrix which is a contradiction to the residue theorem.
	\end{remark}

\begin{corollary}
		The parity-check matrices of TGECCs $\mathcal{C}(D, kO,\ell, \eta, \mathbf{v})$ constructed over the second type of elliptic curves can be classified as follows:
		\begin{enumerate}
			\item For odd $k$, 
			the parity-check matrix of $\mathcal{C}(D, kO,\ell, \eta, \mathbf{v})$ can be given by:
			\begin{small}
				\begin{displaymath}
					\begin{pmatrix}
						\frac{	\gamma_{1}}{v_{1}}f_{1}^{(1)}(\alpha_{1}, \beta_{1})&		\frac{	\gamma_{2}}{v_{2}}f_{1}^{(1)}(\alpha_{2}, \beta_{2})&\cdots&			\frac{	\gamma_{n}}{v_{n}}f_{1}^{(1)}(\alpha_{n}, \beta_{n})\\
						\hline\\
						& \mathbf{\frac{1}{v}}\star G(D,\omega, (n-k-1)O) &
					\end{pmatrix}
				\end{displaymath}
			\end{small}
			where 
			 \begin{small}
     \begin{displaymath}
     f^{(2)}_{1}(x, y)=
           \eta\sum\limits_{i=1}^{n}\gamma_{i}\alpha_{i}^{s-1}\beta_{i}x^{s-\frac{k-1}{2}}+\left(\sum\limits_{i=1}^{n}\gamma_{i}\alpha_{i}^{s-1}\beta_{i}+\eta\sum\limits_{i=1}^{n}\gamma_{i}\alpha_{i}^{s+1}\right)x^{s-\frac{k+3}{2}}y.
           \end{displaymath}
           \end{small}
			\item For even $k$, the parity-check matrix of $\mathcal{C}(D, kO,\ell, \eta, \mathbf{v})$ can be given by:
			\begin{small}
				\begin{displaymath}
					\begin{pmatrix}
						\frac{\gamma_{1}}{v_{1}\beta_{1}}	f_{2}^{(1)}(\alpha_{1}, \beta_{1})&\frac{\gamma_{2}}{v_{2}\beta_{2}}f_{2}^{(1)}(\alpha_{2}, \beta_{2})&\cdots&\frac{\gamma_{n}}{v_{n}\beta_{n}}f_{2}^{(1)}(\alpha_{n}, \beta_{n})\\
						\hline\\
						& \mathbf{\frac{1}{v}}\star G(D,\omega, (y)+(n-k-1)O) &
					\end{pmatrix}
				\end{displaymath}
			\end{small}
			\begin{small}
    \begin{displaymath}
    f^{(2)}_{2}(x, y)=
\eta\sum\limits_{i=1}^{n}\gamma_{i}\alpha_{i}^{s-1}\beta_{i}x^{s-\frac{k}{2}-1}y+\left(\sum\limits_{i=1}^{n}\gamma_{i}\alpha_{i}^{s-1}\beta_{i}+\eta\sum\limits_{i=1}^{n}\gamma_{i}\alpha_{i}^{s-2}\beta^{2}_{i}\right)x^{s-\frac{k}{2}}.
\end{displaymath}
\end{small}
		\end{enumerate}
	\end{corollary}

\begin{corollary}
		The parity-check matrices of TGECCs $\mathcal{C}(D, kO,\ell, \eta, \mathbf{v})$ constructed over the first type of elliptic curves can be classified as follows:
		\begin{enumerate}
			\item For odd $k$, 
			the parity-check matrix of $\mathcal{C}(D, kO,\ell, \eta, \mathbf{v})$ can be given by:
			\begin{small}
				\begin{displaymath}
					\begin{pmatrix}
						\frac{	\gamma_{1}}{v_{1}(a\alpha_{1}+b)}f_{1}^{(1)}(\alpha_{1}, \beta_{1})&		\frac{	\gamma_{2}}{v_{2}(a\alpha_{2}+b)}f_{1}^{(1)}(\alpha_{2}, \beta_{2})&\cdots&			\frac{	\gamma_{n}}{v_{n}(a\alpha_{n}+b)}f_{1}^{(1)}(\alpha_{n}, \beta_{n})\\
						\hline\\
						& \mathbf{\frac{1}{v}}\star G(D,\omega, (ax+b)+(n-k-1)O) &
					\end{pmatrix}
				\end{displaymath}
			\end{small}
			 where 
     \begin{small}
     \begin{displaymath}
     f^{(3)}_{1}(x, y)=
\eta\sum\limits_{i=1}^{n}\gamma_{i}\alpha_{i}^{s-1}\beta_{i}x^{s-\frac{k-1}{2}}+\left(\sum\limits_{i=1}^{n}\gamma_{i}\alpha_{i}^{s-1}\beta_{i}+\eta\sum\limits_{i=1}^{n}\frac{\gamma_{i}\alpha^{s+1}_{i}}{a\alpha_{i}+b}\right)  
x^{s-\frac{k+3}{2}}
(ax+b)y.
\end{displaymath}
\end{small}
			\item For even $k$, the parity-check matrix of $\mathcal{C}(D, kO,\ell, \eta, \mathbf{v})$ can be given by:
			\begin{small}
				\begin{displaymath}
					\begin{pmatrix}
						\frac{\gamma_{1}}{v_{1}\beta_{1}}	f_{2}^{(1)}(\alpha_{1}, \beta_{1})&\frac{\gamma_{2}}{v_{2}\beta_{2}}f_{2}^{(1)}(\alpha_{2}, \beta_{2})&\cdots&\frac{\gamma_{n}}{v_{n}\beta_{n}}f_{2}^{(1)}(\alpha_{n}, \beta_{n})\\
						\hline\\
						& \mathbf{\frac{1}{v}}\star G(D,\omega, (y)+(n-k-1)O) &
					\end{pmatrix}
				\end{displaymath}
			\end{small}
			where 
    \begin{small}
    \begin{displaymath}
    f^{(3)}_{2}(x, y)=
\eta\sum\limits_{i=1}^{n}\gamma_{i}\alpha_{i}^{s-1}\beta_{i}x^{s-\frac{k}{2}-1}(ax+b)y+\left(\sum\limits_{i=1}^{n}\gamma_{i}\alpha_{i}^{s-1}\beta_{i}+\eta\sum\limits_{i=1}^{n}\gamma_{i}(a\alpha_{i}+b)\alpha_{i}^{s-2}\beta_{i}\right)x^{s-\frac{k}{2}}.
\end{displaymath}
\end{small}
			\label{100}
		\end{enumerate}
	\end{corollary}

\begin{lemma}\label{111}
    Let $n=2k$ ($k\geq 3$). Let $G$ and $H$ be the generator matrix and parity-check matrix of TECC $\mathcal{C}(D, kO, \ell,\eta,\mathbf{v})$ with $\ell=\frac{k-3}{2}$ for odd $k$ or $\frac{k}{2}$ for even $k$, respectively. Let $\mathbf{g}_{i}$ and $\mathbf{h}_{i}$ denote the $i$-th row of $G$ and $H$, respectively. For $\eta\in\mathbb{F}^{*}_{q}$, then $\mathrm{Span}\{\mathbf{g}_{0}, \mathbf{g}_{1}, \cdots, \mathbf{g}_{k-1}\}=\mathrm{Span}\{\mathbf{h}_{0}, \mathbf{h}_{1}, \cdots, \mathbf{h}_{k-1}\}$
    if and only if the following conditions hold:
    \begin{enumerate}
        \item $\mathrm{Span}\{\mathbf{g}_{0}, \mathbf{g}_{1}, \cdots, \mathbf{g}_{k-2}\}=\mathrm{Span}\{\mathbf{h}_{1}, \mathbf{h}_{1}, \cdots, \mathbf{h}_{k-1}\};$
        \item $\mathbf{g}_{k-1}=a\cdot\mathbf{h}_{0}$ for some $a\in \mathbb{F}^{*}_{q}.$
    \end{enumerate}
    
    \begin{proof} Without loss of the generality, we only prove for the first type of TECC $\mathcal{C}(D, kO,\frac{k-3}{2}, \eta, \mathbf{v})$. It is obvious that we only need to show the necessary part. 

    First, we give the proof for the case of dimension $k\geq 5$. 
    
    Since the two vectors $\mathbf{g}_{0}, \mathbf{g}_{\frac{k-3}{2}}\in\mathrm{Span}\{\mathbf{h}_{0}, \mathbf{h}_{1}, \cdots, \mathbf{h}_{k-1}\}$, we have
    \begin{itemize}
	\item there exists $ c_{0}, c_{1}, c_{2}, \cdots, c_{k-1}\in\mathbb{F}_{q}$ such that
    \begin{small}
	\begin{displaymath}
		\frac{v^{2}_{i}\beta_{i}}{\gamma_{i}}=c_{0}+c_{1}\alpha_{i}+\cdots+c_{\frac{k-1}{2}}\alpha_{i}^{\frac{k-1}{2}}+c_{\frac{k+1}{2}}\beta_{i}+\cdots+c_{k-2}\alpha^{\frac{k-5}{2}}_{i}\beta_{i}+c_{k-1}f^{(1)}_{1}(\alpha_{i}, \beta_{i})
        \end{displaymath}
        \end{small}
	for any $i= 1, 2, \cdots, n$.
	\item there exists $ d_{0}, d_{1}, d_{2}, \cdots, d_{k-1}\in\mathbb{F}_{q}$ such that
    \begin{small}
	\begin{displaymath}
		\frac{v^{2}_{i}\beta_{i}}{\gamma_{i}}\alpha^{\frac{k-3}{2}}_{i}=d_{0}+d_{1}\alpha_{i}+\cdots+d_{\frac{k-1}{2}}\alpha_{i}^{\frac{k-1}{2}}+d_{\frac{k+1}{2}}\beta_{i}+\cdots+d_{k-2}\alpha^{\frac{k-5}{2}}_{i}\beta_{i}+d_{k-1}f^{(1)}_{1}(\alpha_{i}, \beta_{i})
        \end{displaymath}
        \end{small}
	for any $i=1, 2, \cdots, n$.
\end{itemize}

Consider the polynomials $h_{c}(x, y)=c_{0}+c_{1}x+\cdots+c_{k-2}x^{\frac{k-5}{2}}y+c_{k-1}f^{(1)}_{1}(x, y)$ and $h_{d}(x, y)=d_{0}+d_{1}x+\cdots+d_{k-2}x^{\frac{k-5}{2}}y+d_{k-1}f^{(1)}_{1}(x, y)\in\mathbb{F}_{q}[x, y]$. Then the polynomial $ h_{c}(x, y)x^{\frac{k-3}{2}}-h_{d}(x, y)$ has $n$ different roots $\{(\alpha_{i}, \beta_{i}),\,i=1, 2, \cdots, n\}$. But $h_{c}(x, y)x^{\frac{k-3}{2}}-h_{d}(x, y)\in\mathcal{L}((n-2)O)$, which means $h_{c}(x, y)x^{\frac{k-3}{2}}-h_{d}(x, y)\equiv 0$. 
Comparing the coefficients, we obtain: 
 $$\begin{cases}
 d_{0}=\cdots=d_{\frac{k-5}{2}}=d_{\frac{k+1}{2}}=\cdots=d_{k-2}=0\\
 c_{3}=\cdots=c_{\frac{k-1}{2}}=c_{\frac{k+3}{2}}=\cdots=c_{k-1}=0
 \end{cases}
 $$ and the following equalities:
\begin{displaymath}
\begin{cases}
	 d_{\frac{k-3}{2}}=c_{0}\\
     d_{\frac{k-1}{2}}=c_{1}\\
     d_{k-1}=c_{\frac{k+1}{2}}\\
     \eta d_{k-1}=c_{2}.
\end{cases}
\end{displaymath}
Then we have $h_{c}(x, y)=c_{0}+c_{1}x+c_{\frac{k+1}{2}}(y+\eta x^{2})$.

Since $\mathbf{g}_{\frac{k-1}{2}}\in\mathrm{Span}\{\mathbf{h}_{0}, \mathbf{h}_{1}, \cdots, \mathbf{h}_{k-1}\}$, there exists a polynomial $h_{e}(x)=e_{0}+e_{1}x+\cdots+e_{k-1}f^{(1)}_{1}(x, y)\in\mathbb{F}_{q}[x, y]$ such that
    \begin{small}
	\begin{displaymath}
		\frac{v^{2}_{i}\beta_{i}}{\gamma_{i}}\alpha^{\frac{k-1}{2}}_{i}=e_{0}+e_{1}\alpha_{i}+\cdots+e_{\frac{k-1}{2}}\alpha_{i}^{\frac{k-1}{2}}+e_{\frac{k+1}{2}}\beta_{i}+\cdots+e_{k-2}\alpha^{\frac{k-5}{2}}_{i}\beta_{i}+e_{k-1}f^{(1)}_{1}(\alpha_{i}, \beta_{i})
        \end{displaymath}
        \end{small}
	for any $i=1, 2, \cdots, n$. The polynomial $h_{c}(x, y)x^{\frac{k-1}{2}}-h_{e}(x,y)$ has $n$ different roots $\{(\alpha_{i}, \beta_{i}),\,i=1, 2, \cdots, n\}$. But $h_{c}(x, y)x^{\frac{k-1}{2}}-h_{e}(x,y)\in\mathcal{L}((k+3)O)\subseteq\mathcal{L}((n-1)O)$. So we have
    \begin{displaymath}
        h_{c}(x, y)x^{\frac{k-1}{2}}-h_{e}(x,y)\equiv 0.
    \end{displaymath}
Hence, by comparing the coefficients, we obtain $h_{c}(x, y)=c_{0}=e_{\frac{k-1}{2}}\in\mathbb{F}^{*}_{q}$, which means 
\begin{displaymath}
\frac{v^{2}_{i}\beta_{i}}{\gamma_{i}}=c_{0}\quad \text{for all } i=1, 2, \cdots, n.
\end{displaymath}
By plugging the above relation in the generator matrix, we get 
\[
\mathrm{Span}\{\mathbf{g}_{0}, \mathbf{g}_{1}, \cdots, \mathbf{g}_{k-2}\}=\mathrm{Span}\{\mathbf{h}_{1}, \mathbf{h}_{1}, \cdots, \mathbf{h}_{k-1}\}.
\]
Combining the above equality with $\mathrm{Span}\{\mathbf{g}_{0}, \mathbf{g}_{1}, \cdots, \mathbf{g}_{k-1}\}=\mathrm{Span}\{\mathbf{h}_{0}, \mathbf{h}_{1}, \cdots, \mathbf{h}_{k-1}\}$, there exists some $a\in \mathbb{F}^{*}_{q}$ such that $\mathbf{g}_{k-1}=a\cdot\mathbf{h}_{0}$.

Now we prove the theorem for the case with dimension $k=3$ and code length $n=6$. Since
the two vectors $\mathbf{g}_{0}, \mathbf{g}_{1}\in\mathrm{Span}\{\mathbf{h}_{0}, \mathbf{h}_{1}, \mathbf{h}_{2}\}$, we have
    \begin{itemize}
	\item there exists $ c_{0}, c_{1}, c_{2}\in\mathbb{F}_{q}$ such that
    \begin{small}
	\begin{displaymath}
		\frac{v^{2}_{i}\beta_{i}}{\gamma_{i}}=c_{0}+c_{1}\alpha_{i}+c_{2}f^{(1)}_{1}(\alpha_{i}, \beta_{i})
        \end{displaymath}
        \end{small}
	for any $i= 1, 2, \cdots, 6$.
	\item there exists $ e_{0}, e_{1}, e_{2}\in\mathbb{F}_{q}$ such that
    \begin{small}
	\begin{displaymath}
		\frac{v^{2}_{i}\beta_{i}}{\gamma_{i}}\alpha_{i}=e_{0}+e_{1}\alpha_{i}+e_{2}f^{(1)}_{1}(\alpha_{i}, \beta_{i})
        \end{displaymath}
        \end{small}
	for any $i=1, 2, \cdots, 6$.
\end{itemize}
Let $h_{c}(x, y)=c_{0}+c_{1}x+c_{2}f^{(1)}_{1}(x, y)$ and $h_{e}(x, y)=e_{0}+e_{1}x+e_{2}f^{(1)}_{1}(x, y)$. Then the polynomial $h_{c}(x, y)x-h_{e}(x,y)$ has $6$ different roots $\{(\alpha_{i}, \beta_{i}),\,i=1, 2, \cdots, 6\}$. Moreover, $h_{c}(x, y)x-h_{e}(x, y)\in\mathcal{L}(5O)$ if $c_2= 0$ and $h_{c}(x, y)x-h_{e}(x, y)\in\mathcal{L}(6O)\setminus \mathcal{L}(5O)$ if $c_2\neq 0$. For the former case, the polynomial $h_{c}(x, y)x-h_{e}(x, y)$ must be zero as before. For the latter case, from the choice of divisor $D$, we know that there exists $\alpha_{1},\alpha_{2},\alpha_{3}\in\mathbb{F}^{*}_{q}$ such that the polynomial $(x-\alpha_{i_{1}})(x-\alpha_{i_{2}})(x-\alpha_{i_{3}})$ vanishes on $\{(\alpha_{i}, \beta_{i}),\,i=1, 2, \cdots, 6\}$. Then the polynomial $h_{c}(x, y)x-h_{e}(x,y)-c_2\eta(x-\alpha_{i_{1}})(x-\alpha_{i_{2}})(x-\alpha_{i_{3}})$ has $6$ different roots $\{(\alpha_{i}, \beta_{i}),\,i=1, 2, \cdots, 6\}$. Note that $h_{c}(x, y)x-h_{e}(x,y)-c_2\eta(x-\alpha_{i_{1}})(x-\alpha_{i_{2}})(x-\alpha_{i_{3}})\in \mathcal{L}(5O)$. So
$$h_{c}(x, y)x-h_{e}(x,y)-c_2\eta(x-\alpha_{i_{1}})(x-\alpha_{i_{2}})(x-\alpha_{i_{3}})\equiv 0.$$
The coefficient of the term $xy$ in $h_{c}(x, y)x-h_{e}(x,y)-c_2\eta(x-\alpha_{i_{1}})(x-\alpha_{i_{2}})(x-\alpha_{i_{3}})$ is $c_2$, and hence $c_2=0$ which contradicts to the assumption $c_2\neq 0.$ In conclusion, we have 
$$h_{c}(x, y)x-h_{e}(x, y)\equiv 0.$$
So $e_{0}=e_{2}=c_{1}=c_{2}=0$ and $e_{1}=c_{0}$. The remaining proof is the same as the case $k\geq 5.$ 
\end{proof}


 \end{lemma}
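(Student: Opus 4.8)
The plan is to first dispose of the easy direction. When $n=2k$, the parity-check matrix $H$ has exactly $k$ rows and they are linearly independent, and likewise $G$ has $k$ independent rows. If conditions (1) and (2) hold, then $\mathrm{Span}\{\mathbf{g}_{0},\dots,\mathbf{g}_{k-1}\}$ contains both $\mathrm{Span}\{\mathbf{g}_{0},\dots,\mathbf{g}_{k-2}\}=\mathrm{Span}\{\mathbf{h}_{1},\dots,\mathbf{h}_{k-1}\}$ and $\mathbf{g}_{k-1}=a\mathbf{h}_{0}$, hence contains all of $\mathrm{Span}\{\mathbf{h}_{0},\dots,\mathbf{h}_{k-1}\}$; since both spaces have dimension $k$, they coincide. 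Because the parity-check matrices of the three families computed in Theorems \ref{7}, \ref{55}, \ref{56} and Corollary \ref{101} have parallel shapes, it then suffices to establish necessity for the first type with odd $k$ and $\ell=\frac{k-3}{2}$; the even-$k$ case and the other two types are handled identically.

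For necessity, assume $\mathrm{Span}\{\mathbf{g}_{0},\dots,\mathbf{g}_{k-1}\}=\mathrm{Span}\{\mathbf{h}_{0},\dots,\mathbf{h}_{k-1}\}$. The crucial structural observation is that each $\mathbf{g}_{j}$ is $v_{i}$ times the evaluation of a basis function of $S_{\ell}^{(1)}\subseteq\mathcal{L}((k+1)O)$, while each $\mathbf{h}_{j}$ is $\tfrac{\gamma_{i}}{v_{i}\beta_{i}}$ (or $\tfrac{\gamma_{i}}{v_{i}}$) times the evaluation of a basis function of the dual space from Corollary \ref{101}, all of whose elements have pole order at most $k+1$ at $O$. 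First I would expand the two rows $\mathbf{g}_{0}$ and $\mathbf{g}_{(k-3)/2}$ in the basis $\mathbf{h}_{0},\dots,\mathbf{h}_{k-1}$; clearing the diagonal rescaling turns these into polynomial identities
\[
\tfrac{v_{i}^{2}\beta_{i}}{\gamma_{i}}=h_{c}(\alpha_{i},\beta_{i}),\qquad\tfrac{v_{i}^{2}\beta_{i}}{\gamma_{i}}\,\alpha_{i}^{(k-3)/2}=h_{d}(\alpha_{i},\beta_{i})\qquad(1\le i\le n)
\]
with $h_{c},h_{d}\in\mathcal{L}((k+1)O)$. Then $h_{c}(x,y)x^{(k-3)/2}-h_{d}(x,y)$ vanishes at the $n$ distinct points $(\alpha_{i},\beta_{i})$ yet lies in $\mathcal{L}((n-2)O)$, so it is identically zero, and comparing monomials kills all but three coefficients, giving $h_{c}(x,y)=c_{0}+c_{1}x+c_{(k+1)/2}(y+\eta x^{2})$. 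Expanding $\mathbf{g}_{(k-1)/2}$ the same way produces $h_{c}(x,y)x^{(k-1)/2}-h_{e}(x,y)\in\mathcal{L}((k+3)O)\subseteq\mathcal{L}((n-1)O)$ with $n$ zeros, hence zero, which collapses $h_{c}$ to a nonzero constant $c_{0}\in\mathbb{F}_{q}^{*}$; that is, $\tfrac{v_{i}^{2}\beta_{i}}{\gamma_{i}}=c_{0}$ for all $i$.

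With this pointwise relation in hand, substituting $\gamma_{i}=c_{0}^{-1}v_{i}^{2}\beta_{i}$ into the parity-check matrix of Corollary \ref{101} makes $\mathbf{h}_{j+1}=c_{0}^{-1}\mathbf{g}_{j}$ for $0\le j\le k-2$, which is precisely condition (1). Condition (2) then follows by a dimension count: $\mathbf{g}_{k-1}$ lies in $\mathrm{Span}\{\mathbf{h}_{0},\dots,\mathbf{h}_{k-1}\}$ but not in $\mathrm{Span}\{\mathbf{h}_{1},\dots,\mathbf{h}_{k-1}\}=\mathrm{Span}\{\mathbf{g}_{0},\dots,\mathbf{g}_{k-2}\}$ (otherwise $G$ would not have rank $k$), so it must be a nonzero scalar multiple of $\mathbf{h}_{0}$.

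The main obstacle is the degenerate base case $k=3$, $n=6$: there $\mathbf{g}_{0}=\mathbf{g}_{(k-3)/2}$, so only $\mathbf{g}_{0}$ and $\mathbf{g}_{1}$ are available, and the pole-order slack vanishes, since $h_{c}(x,y)x-h_{e}(x,y)$ may lie in $\mathcal{L}(6O)\setminus\mathcal{L}(5O)$ when the $f_{1}^{(1)}$-coefficient of $h_{c}$ is nonzero, so having $n=6$ zeros no longer forces vanishing. The remedy exploits the special shape of $D$: it consists of pairs $(\alpha_{i},\pm\beta_{i})$, so a cubic $(x-\alpha_{i_{1}})(x-\alpha_{i_{2}})(x-\alpha_{i_{3}})$ in $x$ alone vanishes on all six points, and subtracting an appropriate scalar multiple of it brings the difference into $\mathcal{L}(5O)$; comparing the coefficient of $xy$ then forces that scalar, hence the $f_{1}^{(1)}$-coefficient, to be zero, returning us to the generic situation, after which the $k=3$ argument concludes as for $k\ge 5$. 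I expect the careful bookkeeping of pole orders at $O$ — for the twisted basis function, for $h_{c}$, and for the products $h_{c}x^{j}$ — together with this $k=3$ trick to be the only delicate points; everything else is routine once the identity $n=2k$ is used.
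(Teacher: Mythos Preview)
Your proposal is correct and follows essentially the same approach as the paper's proof: the same reduction to the first type with odd $k$, the same use of $\mathbf{g}_{0}$, $\mathbf{g}_{(k-3)/2}$, $\mathbf{g}_{(k-1)/2}$ to produce functions $h_{c},h_{d},h_{e}$ whose differences lie in $\mathcal{L}((n-2)O)$ or $\mathcal{L}((n-1)O)$ and hence vanish identically, and the same cubic-subtraction trick for the degenerate case $k=3$. Your final deduction of condition (2) by a ``dimension count'' is phrased a bit loosely (knowing $\mathbf{g}_{k-1}\notin\mathrm{Span}\{\mathbf{h}_{1},\dots,\mathbf{h}_{k-1}\}$ only gives $\mathbf{g}_{k-1}=a\mathbf{h}_{0}+w$ with $w$ in that span), but the paper makes the identical leap, and it is easily justified by noting that $\mathbf{g}_{k-1}-a\mathbf{h}_{0}-w$ evaluates a function in $\mathcal{L}((k+1)O)$ vanishing at $n=2k$ points.
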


\begin{theorem} 
Notations as above. For $n=2k$ ($k\geq 3$), we consider the TGECCs constructed over the first type of elliptic curves.
\begin{enumerate}
\item For odd $k$, TGECC $\mathcal{C}(D, kO,\ell, \eta, \mathbf{v})$ is self-dual if and only if 
\begin{enumerate}
\item $(\sum_{i=1}^{n}\frac{\gamma_{i}}{\beta_{i}}\alpha_{i}^{k+1})\eta+2\sum_{i=1}^{n}\gamma_{i}\alpha_{i}^{k-1}=0;$
\item there exists some $\lambda\in\mathbb{F}_{q}^{*}$ such that $v^{2}_{i}=\frac{\lambda\gamma_{i}}{\beta_{i}}$ for $ i=1, 2, \cdots, n$.
\end{enumerate}

\item For even $k$, TGECC $\mathcal{C}(D, kO,\ell, \eta, \mathbf{v})$ is self-dual if and only if 
\begin{enumerate}
\item $(\sum_{i=1}^{n}\gamma_{i}\beta_{i}\alpha_{i}^{k-2})\eta+2\sum_{i=1}^{n}\gamma_{i}\alpha_{i}^{k-1}=0;$
\item there exists some $\mu\in\mathbb{F}_{q}^{*}$ such that $v^{2}_{i}=\frac{\mu\gamma_{i}}{\beta_{i}}$ for $i=1, 2, \cdots, n$.
\end{enumerate}
\end{enumerate}
\label{59}
\vspace{-0.25 cm}
\end{theorem}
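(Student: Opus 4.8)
The plan is to apply Lemma~\ref{111}: for $n=2k$ and the extremal twist parameter ($\ell=\frac{k-3}{2}$ when $k$ is odd, $\ell=\frac{k}{2}$ when $k$ is even) it reduces the self-duality $\mathcal{C}(D,kO,\ell,\eta,\mathbf{v})=\mathcal{C}(D,kO,\ell,\eta,\mathbf{v})^{\perp}$ to two conditions on the generator matrix $G$ (obtained by scaling the columns of the TECC generator matrix of Section~\ref{sec:2} by the $v_{i}$, with its twisted row placed last) and on the parity-check matrix $H$ of Corollary~\ref{101} specialized to $s=k$: namely (i) $\mathrm{Span}\{\mathbf{g}_{0},\dots,\mathbf{g}_{k-2}\}=\mathrm{Span}\{\mathbf{h}_{1},\dots,\mathbf{h}_{k-1}\}$, and (ii) $\mathbf{g}_{k-1}=a\,\mathbf{h}_{0}$ for some $a\in\mathbb{F}_{q}^{*}$. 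I would then show that (i) is equivalent to condition~(b) of the theorem and, granting (b), that (ii) is equivalent to condition~(a); the two implications of the theorem follow at once, and both cases are treated in parallel.

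For (i): reading off the rows of $G$ and of $H$ and matching exponents against the bases in Lemma~\ref{2} and Lemma~\ref{11} (the floor functions behaving slightly differently for odd and even $k$), one identifies $\mathrm{Span}\{\mathbf{g}_{0},\dots,\mathbf{g}_{k-2}\}=\mathbf{v}\star\mathcal{C}_{\mathcal{L}}(D,(k-1)O)$ and $\mathrm{Span}\{\mathbf{h}_{1},\dots,\mathbf{h}_{k-1}\}=\tfrac{\bm\gamma}{\mathbf{v}}\star\mathcal{C}_{\mathcal{L}}(D,(y)+(k-1)O)$, where $\bm\gamma=(\gamma_{1},\dots,\gamma_{n})$. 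Since $f\mapsto fy$ is a bijection $\mathcal{L}((y)+(k-1)O)\to\mathcal{L}((k-1)O)$ (as in the proof of Lemma~\ref{11}), evaluating yields $\bm\beta\star\mathcal{C}_{\mathcal{L}}(D,(y)+(k-1)O)=\mathcal{C}_{\mathcal{L}}(D,(k-1)O)$ with $\bm\beta=(\beta_{1},\dots,\beta_{n})$, so (i) is equivalent to $\mathbf{w}\star\mathcal{C}_{\mathcal{L}}(D,(k-1)O)=\mathcal{C}_{\mathcal{L}}(D,(k-1)O)$ with $w_{i}=v_{i}^{2}\beta_{i}/\gamma_{i}$. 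The crux is that such a $\mathbf{w}$ is forced to be a constant vector: $\mathcal{C}_{\mathcal{L}}(D,(k-1)O)$ contains the all-ones codeword and is the injective image of $\mathcal{L}((k-1)O)$ (as $\deg((k-1)O-D)=-(k+1)<0$), so $\mathbf{w}=\mathrm{ev}_{\mathcal{L}}(g)$ and $\mathbf{w}^{-1}=\mathrm{ev}_{\mathcal{L}}(g')$ for unique $g,g'\in\mathcal{L}((k-1)O)$; then $gg'-1$ vanishes at all $n=2k$ points, hence lies in $\mathcal{L}((2k-2)O-D)$, a space of degree $-2$, so $gg'=1$; as $(g)=-(g')$ is then supported only at $O$ and has degree $0$, we get $g\in\mathbb{F}_{q}^{*}$. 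Therefore (i) holds iff $v_{i}^{2}=\lambda\gamma_{i}/\beta_{i}$ for a single $\lambda\in\mathbb{F}_{q}^{*}$, which is exactly condition~(b).

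For (ii), I would assume (b), so that $\gamma_{i}/(v_{i}\beta_{i})=v_{i}/\lambda$. Substituting this into $\mathbf{h}_{0}=\bigl(\tfrac{\gamma_{i}}{v_{i}\beta_{i}}f^{(1)}(\alpha_{i},\beta_{i})\bigr)_{i}$, with $f^{(1)}=f_{1}^{(1)}$ for odd $k$ and $f^{(1)}=f_{2}^{(1)}$ for even $k$ as in Corollary~\ref{101}, and cancelling the common factor $v_{i}\alpha_{i}^{(k-3)/2}$ (resp.\ $v_{i}\alpha_{i}^{(k-2)/2}$) wherever $\alpha_{i}\neq 0$, the equality $\mathbf{g}_{k-1}=a\,\mathbf{h}_{0}$ becomes $\beta_{i}+\eta\alpha_{i}^{2}=\tfrac{a}{\lambda}\bigl(\eta A\alpha_{i}^{2}-(A+\eta B)\beta_{i}\bigr)$ for odd $k$, with $A=\sum_{i}\gamma_{i}\alpha_{i}^{k-1}$ and $B=\sum_{i}\tfrac{\gamma_{i}}{\beta_{i}}\alpha_{i}^{k+1}$ (and the parallel identity with $\alpha_{i}$ in place of $\alpha_{i}^{2}$ and $B=\sum_{i}\gamma_{i}\alpha_{i}^{k-2}\beta_{i}$ for even $k$). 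At most two evaluation points satisfy $\alpha_{i}=0$, and at those the equality is trivially true (both sides vanish, or, when $k=3$, it is just the same identity with $\alpha_{i}=0$); hence the relation holds at more than $4$ (resp.\ more than $3$) of the $P_{i}$. Since a nonzero element of $\mathcal{L}(4O)$ (resp.\ $\mathcal{L}(3O)$) vanishes at no more than $4$ (resp.\ $3$) points, and $y,x^{2}$ (resp.\ $y,x$) are $\mathbb{F}_{q}$-linearly independent, the relation forces $\eta-\tfrac{a}{\lambda}\eta A=0$ and $1+\tfrac{a}{\lambda}(A+\eta B)=0$; the first gives $A\neq 0$ and $\tfrac{a}{\lambda}=A^{-1}$, whence the second yields $2A+\eta B=0$, i.e.\ condition~(a). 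Conversely, assuming (a) and (b) these steps reverse: (b) produces the span identity (i), and $A+\eta B=-A$ gives $\mathbf{g}_{k-1}=\tfrac{\lambda}{A}\mathbf{h}_{0}$ (here $A\neq 0$, since otherwise $f^{(1)}\equiv 0$, $\mathbf{h}_{0}=\mathbf{0}$, and $H$ could not have rank $k$), so Lemma~\ref{111} yields self-duality.

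The step I expect to be the main obstacle is the first one: carefully matching the displayed rows of $G$ and $H$, with all their $\lfloor\cdot\rfloor$-exponents in the two parity cases, to the Riemann--Roch codes $\mathcal{C}_{\mathcal{L}}(D,(k-1)O)$ and $\mathcal{C}_{\mathcal{L}}(D,(y)+(k-1)O)$, and then extracting cleanly that a vector $\mathbf{w}$ with $\mathbf{w}\star\mathcal{C}_{\mathcal{L}}(D,(k-1)O)=\mathcal{C}_{\mathcal{L}}(D,(k-1)O)$ must be a unit in the coordinate ring and hence constant. Once that is in place, condition~(ii) is a short linear-algebra computation, and the even-$k$ case differs from the odd-$k$ case only in bookkeeping (the twisted row is $\alpha_{i}^{k/2}+\eta\beta_{i}\alpha_{i}^{(k-2)/2}$ and one uses $f_{2}^{(1)}$ instead of $f_{1}^{(1)}$).
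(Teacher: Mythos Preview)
Your proposal is correct and follows essentially the same route as the paper: both reduce self-duality via Lemma~\ref{111} to the two conditions (i) the span identity on the untwisted rows, and (ii) the proportionality $\mathbf{g}_{k-1}=a\,\mathbf{h}_{0}$, and then identify these with conditions (b) and (a) of the theorem. The paper's treatment of (i) simply points back to the explicit polynomial-matching computations in the proof of Lemma~\ref{111}, whereas you recast (i) as a stabilizer condition $\mathbf{w}\star\mathcal{C}_{\mathcal{L}}(D,(k-1)O)=\mathcal{C}_{\mathcal{L}}(D,(k-1)O)$ and then dispose of it by the clean Riemann--Roch observation $gg'=1\Rightarrow g\in\mathbb{F}_{q}^{*}$; this is a slightly more conceptual packaging of the same underlying zero-counting argument. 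Your handling of (ii) via the $\mathcal{L}(4O)$ degree bound is likewise just a more explicit version of the paper's one-line ratio comparison.
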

\begin{proof} We only give a proof for odd $k$ part. For the even $k$ part, the proof is similar.

The TGECC $\mathcal{C}(D, kO,\ell, \eta, \mathbf{v})$ is self-dual
if and only if $$\mathrm{Span}\{\mathbf{g}_{0}, \mathbf{g}_{1}, \cdots, \mathbf{g}_{k-1}\}=\mathrm{Span}\{\mathbf{h}_{0}, \mathbf{h}_{1}, \cdots, \mathbf{h}_{k-1}\}$$
 by Lemma~\ref{111} which is equivalent to the two following conditions:
    \begin{enumerate}
        \item $\mathrm{Span}\{\mathbf{g}_{0}, \mathbf{g}_{1}, \cdots, \mathbf{g}_{k-2}\}=\mathrm{Span}\{\mathbf{h}_{1}, \mathbf{h}_{1}, \cdots, \mathbf{h}_{k-1}\};$
        \item $\mathbf{g}_{k-1}=a\cdot\mathbf{h}_{0}$ for some $a\in \mathbb{F}^{*}_{q}.$
    \end{enumerate}
    
From the proof of Lemma~\ref{111} we have seen that the first condition $\mathrm{Span}\{\mathbf{g}_{0}, \mathbf{g}_{1}, \cdots, \mathbf{g}_{k-2}\}=\mathrm{Span}\{\mathbf{h}_{1}, \mathbf{h}_{1}, \cdots, \mathbf{h}_{k-1}\}$ is equivalent to that there exists some $\lambda\in\mathbb{F}_{q}^{*}$ such that $v^{2}_{i}=\frac{\lambda\gamma_{i}}{\beta_{i}}$ for $ i=1, 2, \cdots, n$.

The second condition $\mathbf{g}_{k-1}=a\cdot\mathbf{h}_{0}$ for some $a\in \mathbb{F}^{*}_{q}$ is equivalent to that the two ratios of coefficients equal:
\[
 \eta: 1=\left(-{\eta\sum^{n}_{i=1}\gamma_{i}\alpha^{k-1}_{i}}\right):\left({\sum^{n}_{i=1}\gamma_{i}\alpha^{k-1}_{i}+\eta\sum_{i=1}^{n}\frac{\gamma_{i}}{\beta_{i}}\alpha^{k+1}_{i}}\right)
\]
which is equivalent to $(\sum_{i=1}^{n}\frac{\gamma_{i}}{\beta_{i}}\alpha_{i}^{k+1})\eta+2\sum_{i=1}^{n}\gamma_{i}\alpha_{i}^{k-1}=0$ since $\eta\neq 0$.
\end{proof}

By the similar discussions as above, we have the results for the other two types of TGECCs.

\begin{corollary} 
Notations as above. For $n=2k$, we consider the TGECCs constructed over the second type of elliptic curves.
\begin{enumerate}
\item For odd $k$, TGECC $\mathcal{C}(D, kO,\ell, \eta, \mathbf{v})$ is self-dual if and only if $\sum_{i=1}^{n}\gamma_{i}\alpha_{i}^{k+1}=0$ and there exists some $\lambda\in\mathbb{F}_{q}^{*}$ such that $v^{2}_{i}=\lambda\gamma_{i}$ for $ i=1, 2, \cdots, n;$
  
\item For even $k$, TGECC $\mathcal{C}(D, kO,\ell, \eta, \mathbf{v})$ is self-dual if and only if $\sum_{i=1}^{n}\gamma_{i}\alpha_{i}^{k-2}\beta^{2}_{i}=0$
  and there exists some $\mu\in\mathbb{F}_{q}^{*}$ such that $v^{2}_{i}=\mu\gamma_{i}$ for $i=1, 2, \cdots, n$.
\end{enumerate}
\label{c2}
\end{corollary}

\begin{corollary} 
Notations as above. For $n=2k$, we consider the TGECCs constructed over the third type of elliptic curves.
\begin{enumerate}
\item For odd $k$, TGECC $\mathcal{C}(D, kO,\ell, \eta, \mathbf{v})$ is self-dual if and only if
$\sum_{i=1}^{n}\frac{\gamma_{i}}{a\alpha_{i}+b}\alpha^{k+1}_{i}=0$ and there exists some $\lambda\in\mathbb{F}_{q}^{*}$ such that $v^{2}_{i}=\frac{\lambda\gamma_{i}}{a\alpha_{i}+b}$ for $ i=1, 2, \cdots, n;$
  
\item For even $k$, TGECC $\mathcal{C}(D, kO,\ell, \eta, \mathbf{v})$ is self-dual if and only if $\sum_{i=1}^{n}\gamma_{i}(a\alpha_{i}+b)\alpha_{i}^{k-2}\beta_{i}=0$
  and there exists some $\mu\in\mathbb{F}_{q}^{*}$ such that $v^{2}_{i}=\frac{\mu\gamma_{i}}{a\alpha_{i}+b}$ for $i=1, 2, \cdots, n$.
\end{enumerate}
\label{c3}
\end{corollary}

\begin{corollary} \label{c4}
Consider the elliptic function field $F=\mathbb{F}_{q}(E)$   with $q$ a power of a odd prime $p$ and the defining equation is given by $E:\ y^{2}=f(x)$, where $f(x)$ is a square-free polynomial of degree $3$. If we have $ v_{i}=\sqrt{\frac{\lambda\gamma_{i}}{\beta_{i}}}$ for $i=1, 2, \cdots, 2k$
and $(\sum_{i=1}^{2k}v^{2}_{i}\alpha_{i}^{k+1})\eta+2\sum_{i=1}^{2k}v^{2}_{i}\beta_{i}\alpha_{i}^{k-1}=0$, then TGECC $\mathcal{C}(D, kO,\ell, \eta, \mathbf{v})$ over $\mathbb{F}_{q}$ is self-dual.
    \end{corollary}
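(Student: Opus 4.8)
The plan is to deduce this directly from Theorem~\ref{59}, since the hypothesis (an odd prime power $q$ and a defining equation $y^{2}=f(x)$ with $f$ square-free of degree $3$) describes exactly the first type of elliptic function field. As stated, the condition $(\sum_{i=1}^{2k}v_{i}^{2}\alpha_{i}^{k+1})\eta+2\sum_{i=1}^{2k}v_{i}^{2}\beta_{i}\alpha_{i}^{k-1}=0$ matches the odd-$k$ case of Theorem~\ref{59}, so the corollary is for odd $k$; the analogous statement for even $k$ follows from Theorem~\ref{59}(2) in the same way (using $\gamma_{i}\beta_{i}\alpha_{i}^{k-2}$ in place of $\gamma_{i}\alpha_{i}^{k+1}/\beta_{i}$).

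First I would verify that the prescription $v_{i}=\sqrt{\lambda\gamma_{i}/\beta_{i}}$ produces an admissible scaling vector. Since $\omega$ is chosen with $v_{P_{i}}(\omega)=-1$, each residue $\gamma_{i}=\mathrm{res}_{P_{i}}(\omega)$ is nonzero; and because $D$ is the conorm of rational places of $\mathbb{F}_{q}(x)$ that split completely in $F/\mathbb{F}_{q}(x)$, none of the $P_{i}=(\alpha_{i},\beta_{i})$ lies over a ramified place $Q_{j}$ or $Q_{\infty}$, so $\beta_{i}\neq 0$ for all $i$. Hence $\lambda\gamma_{i}/\beta_{i}\in\mathbb{F}_{q}^{*}$, and writing $\sqrt{\cdot}$ presupposes this element is a square in $\mathbb{F}_{q}$; therefore $v_{i}\in\mathbb{F}_{q}^{*}$ and $\mathbf{v}=(v_{1},\dots,v_{2k})$ is a valid choice of scaling coefficients. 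Squaring gives $v_{i}^{2}=\lambda\gamma_{i}/\beta_{i}$, which is precisely condition (b) of Theorem~\ref{59}(1) with the given $\lambda\in\mathbb{F}_{q}^{*}$.

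It then remains to rewrite condition (a) of Theorem~\ref{59}(1), namely $(\sum_{i=1}^{2k}\tfrac{\gamma_{i}}{\beta_{i}}\alpha_{i}^{k+1})\eta+2\sum_{i=1}^{2k}\gamma_{i}\alpha_{i}^{k-1}=0$, in terms of the $v_{i}$. From $v_{i}^{2}=\lambda\gamma_{i}/\beta_{i}$ we get $\gamma_{i}/\beta_{i}=v_{i}^{2}/\lambda$ and $\gamma_{i}=v_{i}^{2}\beta_{i}/\lambda$; substituting and factoring out the nonzero scalar $1/\lambda$ turns condition (a) into
\[
\frac{1}{\lambda}\left[\Bigl(\sum_{i=1}^{2k}v_{i}^{2}\alpha_{i}^{k+1}\Bigr)\eta+2\sum_{i=1}^{2k}v_{i}^{2}\beta_{i}\alpha_{i}^{k-1}\right]=0,
\]
which is equivalent to the displayed hypothesis since $\lambda\neq 0$. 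Thus both conditions of Theorem~\ref{59}(1) are met and $\mathcal{C}(D, kO,\ell, \eta, \mathbf{v})$ is self-dual. There is no substantive obstacle here: the corollary is essentially a change of variables in Theorem~\ref{59}, and the only point demanding care is the well-definedness of the $v_{i}$ over $\mathbb{F}_{q}$, i.e.\ that $\lambda\gamma_{i}/\beta_{i}$ is a nonzero square, which is implicit in the notation $\sqrt{\cdot}$ and guaranteed for $\gamma_{i},\beta_{i}$ by the construction of $\omega$ and $D$.
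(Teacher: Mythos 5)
Your proposal is correct and follows the same route the paper intends: the corollary is an immediate specialization of Theorem~\ref{59}(1) for the first type of elliptic curve with odd $k$, obtained by substituting $v_{i}^{2}=\lambda\gamma_{i}/\beta_{i}$ into conditions (a) and (b) and cancelling the nonzero factor $1/\lambda$. Your additional check that $\lambda\gamma_{i}/\beta_{i}$ must be a nonzero square for $v_{i}$ to be well defined over $\mathbb{F}_{q}$ is a reasonable point of care that the paper leaves implicit.
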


\section{The Minimum Distances of TECCs}
\label{sec:6}

In this section, we will determine the possible minimum distance of TECCs by using the group structures.
	
	Let $E$ be an elliptic curve over $\mathbb{F}_{q}$ with one rational point $O$. The set $E(\mathbb{F}_q)$ of rational points on $E$ forms an abelian group with the identity element $O$. Moreover, the group of rational points $(E(\mathbb{F}_{q}), \oplus)\simeq\mathbb{Z}/n_{1}\times\mathbb{Z}/n_{2}$ for some $n_{1}\mid n_{2}$ and $n_{1}, n_{2}$ are two non-negative integers.
	
	For some $P\in E(\mathbb{F}_{q})$, we denote by $[m]P$ the addition of rational points on elliptic curves \textit{i.e.} $$[m]P=\underbrace{P\oplus\cdots\oplus P}_{m \ \text{times}}.$$ 
	Let $(E(\mathbb{F}_{q}),\oplus)$ be the abelian group. For any subset $D\subseteq E(\mathbb{F}_{q})\setminus\{O\}$, element $P\in E(\mathbb{F}_{q})$, and integer $1\leq k\leq \#D$, denote by $$N(k, P, D)=\#\left\lbrace T\subseteq D\ |\ P=\oplus_{Q\in T}Q\ \text{and}\ \#T=k\right\rbrace.$$
	
	The following Lemma states the relation between the existences of the rational functions and subset sum problems (SSPs).
	
	\begin{lemma}
		(see \cite{14})
		Assume $P_{1}, P_{2}, \cdots, P_{n}, P\in E(\mathbb{F}_{q})\setminus\{O\}$. If $m_{1}P_{1}+m_{2}P_{2}+\cdots+m_{n}P_{n}=P$, where $m_{i}, 1\leq i \leq n$, are positive integers, then there is a function vanishing at $P_{1}, P_{2}, \cdots, P_{n}$, with multiplicities $m_{1}, m_{2}, \cdots, m_{n}$, respectively, a pole at $P$ with multiplicity $1$ and a pole at $O$ with multiplicity $m_{1}+ m_{2}+\cdots+ m_{n}-1$.
		\label{22}
	\end{lemma}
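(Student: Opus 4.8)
The plan is to recognize this statement as the classical dictionary between the group law $(E(\mathbb{F}_q),\oplus)$ and the degree-zero divisor class group $\mathrm{Cl}^{0}(F)$ of $F=\mathbb{F}_q(E)$, and to make that dictionary explicit. The key tool is the map $\phi\colon E(\mathbb{F}_q)\to\mathrm{Cl}^{0}(F)$, $Q\mapsto[Q-O]$. First I would recall why $\phi$ is a group isomorphism: for any divisor $A$ with $\deg A=0$ the Riemann--Roch theorem gives $\ell(A+O)=\deg(A+O)+\ell(W-A-O)=1+\ell(W-A-O)$, and since $\deg(W-A-O)=-1<0$ we get $\ell(A+O)=1$; hence $\mathcal{L}(A+O)$ contains a nonzero $f$ with $(f)+A+O\geq 0$, and the effective divisor on the left has degree $1$, so it equals a degree-one place $Q$, i.e. $A\sim Q-O$. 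Uniqueness of $Q$ follows because $\ell(Q)=1$ forces $\mathcal{L}(Q)=\mathbb{F}_q$. Thus $\phi$ is bijective, and it is a homomorphism against the chord--tangent definition of $\oplus$ (this last point is exactly what is checked in \cite{1}).

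Granting this, suppose $m_{1}P_{1}\oplus\cdots\oplus m_{n}P_{n}=P$ in $(E(\mathbb{F}_q),\oplus)$ and set $M:=m_{1}+\cdots+m_{n}$. Applying $\phi$ and using that it is a homomorphism, this is equivalent to $\sum_{i=1}^{n}m_{i}[P_{i}-O]=[P-O]$ in $\mathrm{Cl}^{0}(F)$, i.e. to the statement that the divisor $\Delta:=\sum_{i=1}^{n}m_{i}P_{i}-P-(M-1)O$ is principal; note $\deg\Delta=M-1-(M-1)=0$, which is consistent. Hence $\Delta=(f)$ for some $0\neq f\in F$, and reading the coefficients of $\Delta$ off directly: $f$ has a zero of order $m_{i}$ at each $P_{i}$, a simple pole at $P$, and a pole of order $M-1=m_{1}+\cdots+m_{n}-1$ at $O$. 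This is precisely the asserted function.

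An alternative that avoids invoking the full isomorphism is an induction on $M=\sum m_{i}$, which I would sketch as a fallback. The base case $M=1$ forces $P_{1}=P$ and $f$ constant. For the inductive step one picks two points $Q_{1},Q_{2}$ among the $P_{i}$ (counted with multiplicity); on the Weierstrass model the function $h_{Q_{1},Q_{2}}:=(y-\lambda x-\nu)/(x-x_{3})$, where $y=\lambda x+\nu$ is the chord (or tangent) through $Q_{1},Q_{2}$ and $(x_{3},y_{3})$ is its third intersection with the cubic, has divisor $Q_{1}+Q_{2}-(Q_{1}\oplus Q_{2})-O$; when $Q_{2}=\ominus Q_{1}$ one uses the vertical line $x-x_{1}$ instead, with divisor $Q_{1}+Q_{2}-2O$. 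Replacing $\{Q_{1},Q_{2}\}$ by the single point $Q_{1}\oplus Q_{2}$ reduces $M$ by one, and multiplying the inductively obtained function by $h_{Q_{1},Q_{2}}$ (or by $x-x_{1}$) produces $f$ after collecting zeros and poles.

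I expect the only genuine subtlety to be the degenerate configurations, and I would handle them with a remark rather than a separate argument. If $P$ equals one of the $P_{i}$, or two of the $P_{i}$ coincide, the divisor $\Delta$ above exhibits cancellation and the literal phrasing "zeros at $P_{1},\dots,P_{n}$ with multiplicities $m_{i}$ and a simple pole at $P$" should be read at the level of the divisor $\Delta$; equivalently one assumes $P_{1},\dots,P_{n},P$ pairwise distinct, as is true in all the later applications of this lemma to TECCs. In the inductive route the analogous care is needed when $Q_{1}\oplus Q_{2}=O$, but one checks this can only occur when $M\geq 3$ (since $P\neq O$), so the induction still goes through. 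Either way the divisor-class computation is unaffected, since it only uses the class of $\Delta$ in $\mathrm{Cl}^{0}(F)$, and that is where I would place the weight of the proof.
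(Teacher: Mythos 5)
Your argument is correct: reducing the group-law identity to the principality of $\Delta=\sum_i m_iP_i-P-(M-1)O$ via the isomorphism $Q\mapsto[Q-O]$ between $E(\mathbb{F}_q)$ and $\mathrm{Cl}^0(F)$ is exactly the standard proof of this fact, and your Riemann--Roch justification that $\ell(A+O)=1$ for $\deg A=0$ is the right way to establish that isomorphism. The paper itself gives no proof here (it cites the result from Cheng's paper), so there is nothing to compare against; your caveat about degenerate configurations where $P$ coincides with some $P_i$ is a genuine point that the stated lemma glosses over, and you resolve it appropriately by reading the conclusion at the level of the divisor $\Delta$.
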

	
	The relation between the minimum distance of ECCs and SSPs is listed in the following Lemma.
	\begin{lemma}(see \cite{8}) Notations as above. Let $E$ be a projective, non-singular and irreducible elliptic curve over $\mathbb{F}_{q}$, $D=\{P_{1}, P_{2}, \cdots, P_{n}\}$ a subset of $E(\mathbb{F}_{q})$ such that rational points (not necessarily distinct) $O\notin D$ and let $G=kO$ ($0<k<n$). Endow $E(\mathbb{F}_{q})$ a group structure with zero element $O$. The minimum distance of $\mathcal{C}_{\mathcal{L}}(D, kO)$ is $d=n-k+1$ if and only if $N(k, O, D)=0$ and the minimum distance $d=n-k$ if and only if $N(k, O, D)>0$.
	\end{lemma}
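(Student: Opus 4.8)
The plan is to determine $d$ by finding exactly when a codeword of weight $n-k$ can occur, since the standard bounds already pin $d$ to two values. First I would record those bounds. By Riemann--Roch $\ell(kO)=k$ for $1\le k<n$, and $ev_{\mathcal{L}}$ is injective because a nonzero $f\in\mathcal{L}(kO)$ has at most $k<n$ zeros counted with multiplicity, so it cannot vanish at all $n$ points $P_i$; hence $\dim\mathcal{C}_{\mathcal{L}}(D,kO)=k$ and the Singleton bound gives $d\le n-k+1$. Conversely any nonzero $f\in\mathcal{L}(kO)$ has $(f)_\infty\le kO$, so $\deg(f)_0=\deg(f)_\infty\le k$, and thus $f$ vanishes at no more than $k$ of $P_1,\dots,P_n$, forcing the associated codeword to have weight $\ge n-k$. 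So $d\in\{n-k,\,n-k+1\}$, and since the two asserted equivalences are logically complementary it suffices to prove $d=n-k \iff N(k,O,D)>0$.

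The next step is to translate ``codeword of weight $n-k$'' into divisor language. If $c=ev_{\mathcal{L}}(f)$ has weight $n-k$, then $f$ vanishes at exactly $k$ of the $P_i$; calling that set $T\subseteq D$, the relations $(f)_0\ge\sum_{P\in T}P$ and $\deg(f)_\infty\le k$, together with $\deg(f)_0=\deg(f)_\infty$, force $(f)=\sum_{P\in T}P-kO$ exactly. Conversely, if for some $k$-subset $T\subseteq D$ the divisor $\sum_{P\in T}P-kO$ is principal, the corresponding function lies in $\mathcal{L}(kO)\setminus\{0\}$, its zero divisor is precisely $\sum_{P\in T}P$, and so its evaluation has weight exactly $n-k$. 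Hence the existence of a weight-$(n-k)$ codeword is equivalent to the existence of a $k$-subset $T\subseteq D$ for which $\sum_{P\in T}P-kO$ is a principal divisor.

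The heart of the proof is the group-theoretic criterion for that divisor to be principal. Using the Abel--Jacobi isomorphism $E(\mathbb{F}_q)\xrightarrow{\ \sim\ }\mathrm{Pic}^0(E)$, $R\mapsto[R-O]$, and the identity $\sum_{P\in T}P-kO=\sum_{P\in T}(P-O)$ (valid since $|T|=k$), the class of $\sum_{P\in T}P-kO$ corresponds to $\bigoplus_{P\in T}P$, so the divisor is principal precisely when $\bigoplus_{P\in T}P=O$ in $(E(\mathbb{F}_q),\oplus)$. If one prefers to stay inside the tools already set up here, the ``only if'' direction is automatic and the ``if'' direction can be built by hand from Lemma~\ref{22}: peel off one point $P_0\in T$, apply Lemma~\ref{22} to the remaining $k-1$ points (whose group sum is $\ominus P_0\neq O$ since $O\notin D$) to get a function with those $k-1$ zeros and poles only at $\ominus P_0$ and $O$, then multiply by the linear function $x-x(P_0)$, whose divisor is $P_0+(\ominus P_0)-2O$, to cancel the extraneous pole and land in $\mathcal{L}(kO)$. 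Either way, a weight-$(n-k)$ codeword exists iff some $k$-subset of $D$ sums to $O$, i.e. iff $N(k,O,D)>0$; combined with the first step this yields $d=n-k\iff N(k,O,D)>0$ and $d=n-k+1\iff N(k,O,D)=0$.

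I expect the main obstacle to be the bookkeeping in the ``if'' direction when one avoids Abel's theorem: one must check that the constructed function has zero divisor exactly $\sum_{P\in T}P$ (no spurious or missing zeros, correct multiplicities if $T$ meets the $2$-torsion, and no pole away from $O$), and that the cases where $\ominus P_0$ coincides with another point of $T$ are handled. Phrasing the principality criterion through the Picard group / group law makes all of this automatic, so the cleanest writeup quotes that dictionary, keeping the Lemma~\ref{22} construction as the elementary alternative.
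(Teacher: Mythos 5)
The paper states this lemma without proof, citing \cite{8}, so there is no in-paper argument to compare against; your write-up is a correct, self-contained reconstruction of the standard argument from that reference. The chain you give --- Goppa/Singleton bounds pinning $d$ to $\{n-k,\,n-k+1\}$, the exact correspondence between weight-$(n-k)$ codewords and $k$-subsets $T\subseteq D$ with $\sum_{P\in T}P-kO$ principal, and the Abel--Jacobi identification of principality with $\bigoplus_{P\in T}P=O$ (with the Lemma~\ref{22}-based construction as an elementary substitute) --- is exactly the intended proof, and your closing caveats about the multiplicity/edge-case bookkeeping in the elementary route are the right ones.
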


	Without loss of generality, we only give the proofs for the first type of TECCs with odd dimensions and denote it by $\mathcal{C}(D, kO,\ell, \eta)$.
	
	By Lemma \ref{22}, the subset sum problem $N(k+1, O, P)>0$ if and only if there exists at least one rational function $f(x, y)=\sum^{\frac{k+1}{2}}_{i=0}a_{i}x^{i}+\sum^{\frac{k-3}{2}}_{j=0}b_{j}x^{i}y\in\mathcal{L}((k+1)O)\in\mathcal{L}((k+1)O)$ vanishing at some $k+1$ rational places.
	For given rational points $P_{i_{1}}=(\alpha_{i_{1}}, \beta_{i_{1}}), P_{i_{2}}=(\alpha_{i_{2}}, \beta_{i_{2}}),\cdots, P_{i_{k+1}}=(\alpha_{i_{k+1}}, \beta_{i_{k+1}})$, we denote by $T=P_{i_{1}}+\cdots+P_{i_{k+1}}$. If there exists a function $f(x, y)\in\mathcal{L}((k+1)O)$ vanishing on them, then the coefficients can be uniquely determined and  such function $f(x,y)$ is unique {up to multiple}. 
	In this case, $f(x, y)\in S_{\ell}\subseteq\mathcal{L}((k+1)O)$ if and only if $\eta=\frac{a_{\frac{k+1}{2}}}{b_{\ell}}$ and such $\eta$ is denoted by $\eta=\eta(\ell, T)$. Note that  $\eta(\ell,T)$ do not exist for given $\ell$ if $N(k+1,O,D)=0$.
	
	\begin{theorem}
		Let $E$ be an elliptic curve over $\mathbb{F}_{q}$. Endow $E(\mathbb{F}_{q})$ a group structure with zero element as infinite place $O$. Denote by $d$ the minimum distance of  $\mathcal{C}(D, kO,\ell, \eta)$.
		\begin{enumerate}
			\item The minimum distance is $d=n-k-1$ if and only if 
			$\eta=\eta(\ell, T)$
			for some subset $T\subseteq \mathrm{Supp}(D)$ with $|T|=k+1$.
			\item  The minimum distance is $d=n-k$ if and only if 
			\begin{enumerate}
				\item $\eta\neq \eta(\ell, T)$ for any subset $T\subseteq \mathrm{Supp}(D)$ with $|T|=k+1$.
				\item $N(k, O, D)>0$ or
				\item if there is a place $P\in E(\mathbb{F}_{q})\setminus\mathrm{Supp}(D)$ or $P\in\mathrm{Supp}(\sum_{j=1}^{k}P_{i_{j}})$ such that $P+\sum^{k}_{j=1}P_{i_{j}}=O$ for some choices of $  \sum^{k}_{j=1}P_{i_{j}}\in\mathrm{Supp}(D)$, then $\eta=\eta(\ell, P+\sum^{k}_{j=1}P_{i_{j}})$.
			\end{enumerate}
			\item The minimum distance is $d=n-k+1$ if and only if 
			\begin{enumerate}
				\item $\eta\neq \eta(\ell, T)$ for any subset $T\subseteq \mathrm{Supp}(D)$ with $|T|=k+1$;
				\item $N(k, O, D)=0$ or there is no $P\in E(\mathbb{F}_{q})\setminus\mathrm{Supp}(D)$ or $P\in\mathrm{Supp}(\sum_{j=1}^{k}P_{i_{j}})$ such that $P+\sum^{k}_{j=1}P_{i_{j}}=O$ for some choices of $  \sum^{k}_{j=1}P_{i_{j}}\in\mathrm{Supp}(D)$ \item or $\eta\neq \eta(\ell, P+\sum^{k}_{j=1}P_{i_{j}})$ for any choice of $P\in E(\mathbb{F}_{q})\setminus\mathrm{Supp}(D)$ or $P\in\mathrm{Supp}(\sum_{j=1}^{k}P_{i_{j}})$.
			\end{enumerate}
		\end{enumerate}
		\label{10}
	\end{theorem}
	\begin{proof}
		By the relation $\mathcal{C}(D, kO, \ell,\eta)\subseteq\mathcal{C}_{\mathcal{L}}(D, (k+1)O)$, the minimum distance satisfies $d\geq d(\mathcal{C}_{\mathcal{L}}(D, (k+1)O))\geq n-(k+1)=n-k-1.$
		Combining the Singleton bound, the minimum distance of $\mathcal{C}(D, kO, \eta)$ only has three choices $\{n-k-1, n-k, n-k+1\}$. 
		
		1) If the minimum distance of $\mathcal{C}(D, kO,\ell, \eta)$ is $d=n-k-1$, by Lemma \ref{22}, there is a function $f\in S^{(1)}_{\ell}\subseteq\mathcal{L}((k+1)O)$ vanishing at $k+1$ points $T=P_{i_{1}}+ \cdots+ P_{i_{k+1}}\in \mathrm{Supp}(D)$. It follows $(f)=P_{i_{1}}+\cdots+P_{i_{k+1}}-(k+1)O$
		and then $\eta=\eta(\ell, T)$ for some $T\subseteq D$.

		2) If the minimum distance of $\mathcal{C}(D, kO,\ell, \eta)$ is $d=n-k$, by Lemma \ref{22}, there exists at least one rational function $g\in S^{(1)}_{\ell}\bigcap\mathcal{L}(kO)$ vanishing at $k$ points $T'=P_{j_{1}}+\cdots+ P_{j_{k}}\in \mathrm{Supp}(D)$ and 
		$(f')=P_{i_{1}}+\cdots+P_{i_{k}}-kO$, then we have $N(k, O, D)>0$. We also need $\eta\neq\eta(\ell,T)$
		for any subset $T\subseteq\mathrm{Supp}(D)$. If not, there exists at least one rational function whose index of pole $O$ exceeds $k$, which is a contradiction. We also need to classify the cases in which there is a rational place $P\in E(\mathbb{F}_{q})\setminus\mathrm{Supp}(D)$ or $P\in\mathrm{Supp}(\sum^{k}_{j=1}P_{i_{j}})$ such that $P+\sum^{k}_{j=1}P_{i_{j}}=O$ for some choices of $  \sum^{k}_{j=1}P_{i_{j}}\in\mathrm{Supp}(D)$, then we need $\eta=\eta(\ell, P+\sum^{k}_{j=1}P_{i_{j}})$ to ensure the existence of function in $S_{\ell}\subseteq\mathcal{L}((k+1)O)$ vanishing at $k$ places in $\mathrm{Supp}(D)$.
		
		3) If the minimum distance of $\mathcal{C}(D, kO,\ell, \eta)$ is $d=n-k+1$, by three choices of the minimum distance of TECC $\mathcal{C}(D, kO,\ell, \eta)$, it follows that if $\eta\neq\eta(\ell, T)$ for any subset $T\subseteq \mathrm{Supp}(D),$ then the minimum distance $d$ of $\mathcal{C}(D, kO,\ell, \eta)$ cannot satisfy $d=n-k-1$. 
		Together with $N(k, O, D)=0$ and no $P\in E(\mathbb{F}_{q})\setminus\mathrm{Supp}(D)$ or $P\in\mathrm{Supp}(\sum^{k}_{j=1}P_{i_{j}})$ such that $P+\sum^{k}_{j=1}P_{i_{j}}=O$ for some choices of $  \sum^{k}_{j=1}P_{i_{j}}\in\mathrm{Supp}(D)$ (resp. $\eta\neq \eta(\ell, P+\sum^{k}_{j=1}P_{i_{j}})$ for any choice of $P\in E(\mathbb{F}_{q})\setminus\mathrm{Supp}(D)$ or $P\in\mathrm{Supp}(\sum_{j=1}^{k}P_{i_{j}})$), then $\mathcal{C}(D, kO,\ell, \eta)$ is MDS. 
		
	\end{proof}
	
	\begin{remark}
		If we consider long TECCs, \textit{is, } $|D|\geq q+1$, then there are only two choices for TECCs, that \textit{is, }\ $\{n-k-1,n-k\}$ by the MDS conjecture in \cite{20}. If $\eta\neq \eta(\ell,T)$, then the minimum distance is $d=n-k$ in this case, which also means $\mathcal{C}(D,kO,\ell,\eta)$ is NMDS. In the last section, we shall prove the non-equivalence between TECCs and ECCs. It also means we find a new construction of NMDS codes which is not monomially equivalent to ECCs.
	\end{remark}
	
	\begin{corollary} 
		The TGECC $\mathcal{C}(D, kO,\frac{k-3}{2}, \eta, \mathbf{v})$ is self-dual MDS if and only if
		\begin{enumerate}
			\item $N(k, O, D)=0$ and there is no $P\in E(\mathbb{F}_{q})\setminus\mathrm{Supp}(D)$ or $P\in\mathrm{Supp}(\sum_{j=1}^{k}P_{i_{j}})$ such that $P+\sum^{k}_{j=1}P_{i_{j}}=O$ for some choices of $  \sum^{k}_{j=1}P_{i_{j}}\in\mathrm{Supp}(D)$.
			\item There exists some $\lambda\in\mathbb{F}_{q}^{*}$ such that $v^{2}_{i}=\frac{\lambda\gamma_{i}}{\beta_{i}}$ for $ i=1, 2, \cdots, n;$
			\item
			$\eta=-\frac{2\sum_{i=1}^{n}\gamma_{i}\alpha_{i}^{s-1}}{\sum_{i=1}^{n}\frac{\gamma_{i}}{\beta_{i}}\alpha_{i}^{s-1}}\neq\eta\left(\small{\frac{k-3}{2}},T\right)$
			for any subset $T\subseteq Supp(D)$ with $|T|=k+1$.
		\end{enumerate}
	\end{corollary}

    Now we provide an explicit example of self-dual MDS code to better illustrate the corollary above.
    \begin{example}
        Let $F=\mathbb{F}_{11}(E)$ be an elliptic function field defined by $E:y^{2}=x^{3}+x+1.$ It can be verified that $E(\mathbb{F}_{11})\simeq\mathbb{Z}/14\mathbb{Z}$ with generator $R=(0,1)$. Take $D=\{R,3R,5R,9R,11R,13R\}$ with 
        \begin{displaymath}
            3R=(8,2), 5R=(4,6), 9R=(4,5), 11R=(8,9), 13R=(1,6).
        \end{displaymath}
        Take the defining set $S_{0}=\{a_{0}+a_{1}x+b_{0}(y+\eta x^{2})\ \big|\ a_{i},b_{j}\in\mathbb{F}_{11}\}$. Note that $N(3,O,D)=0$ and $N(4,O,D)>0$, by Theorem \ref{10}, then $\mathcal{C}(D,3O,0,\eta)$ is an MDS code if $\eta\neq\{4,5,6,10\}$.
    \end{example}

\begin{remark}
1) If we consider long TECCs, \textit{is, } $|D|\geq q+1$, then there are only two choices for TECCs, that \textit{is, }\ $\{n-k-1,n-k\}$ by the MDS conjecture in \cite{20}.

2) The above discussions can also be applied to determine the minimum distance of the TGRS code $\mathcal{C}(D,kO,\ell,\eta)$ with the valuation divisor $D=P_{1}+\cdots+P_{n}$ and each $P_{i}$ corresponding to $x-\alpha_{i},\alpha_{i}\in\mathbb{F}_{q}$. The defining set is given by $V_{\ell}=\{\sum_{i=0}^{k-1}a_{i}x^{i}+a_{\ell}\eta x^{k}, a_{i}\in\mathbb{F}_{q},\eta\in\mathbb{F}_{q}^{*}\}$. According to Goppa's bound, the minimum distance of $\mathcal{C}(D,kO,\ell,\eta)$ only have two choices \textit{i.e.} $\{n-k,n-k+1\}$. If the minimum distance is $n-k$, then there exists at least one function $f\in V_{\ell}\subseteq\mathcal{L}((k+1)O)$ vanishing at $k$ rational places. Then we can assume $(f)=P_{i_{1}}+P_{i_{2}}+\cdots+P_{i_{k}}\subseteq D$ with each $P_{i_{j}}$ corresponding to $x-\alpha_{i_{j}}, \alpha_{i_{j}}\in\mathbb{F}_{q}$ and it is equivalent to $f=\prod^{k}_{j=1}(x-\alpha_{i_{j}})$. To ensure $f\in V_{\ell}$, we have $\frac{1}{\eta}x^{\ell}+x^{k}=(-1)^{k-\ell}\sigma_{k-\ell}(P_{i_{1}},\cdots,P_{i_{k}})x^{\ell}+x^{k}$ which induces $\eta=(-1)^{k-\ell}/\sigma_{k-\ell}(P_{i_{1}},\cdots,P_{i_{k}})$ where $\sigma_{k-\ell}(P_{i_{1}},\cdots,P_{i_{k}})$ is the elementary polynomial with variables $\alpha_{i_{1}},\cdots,\alpha_{i_{k}}\in\mathbb{F}_{q}$. If the TGRS code $\mathcal{C}(D,kO,\ell,\eta)$ is MDS, then $\eta\neq (-1)^{k-\ell}/\sigma_{k-\ell}(P_{i_{1}},\cdots,P_{i_{k}})$ for any subset $\{P_{i_{1}},\cdots,P_{i_{k}}\}\subseteq D$, which coincides with the results given in \cite{5,18} \textit{etc.}    
 \end{remark}

\begin{example} Let $F=\mathbb{F}_{4}(E)$ be an elliptic function field with the defining equation $E:\ y^{2}+y=x^{3}$ and $\mathbb{F}_{4}=\mathbb{F}_{2}(\alpha)$ where $\alpha^{2}+\alpha+1=0$. It is easy to verify that the infinity $O$ and $ P_{1}=(1, \alpha), P_{2}=(1,\alpha+1),
P_{3}=(\alpha, \alpha),P_{4}=(\alpha, \alpha+1),P_{5}=(\alpha+1, \alpha),
P_{6}=(\alpha+1, \alpha+1), P_{7}=(0,1), P_{8}=(0, 0)$ are all the rational points on the elliptic curve $E$. 

First, we take $D=\sum_{i=1}^{6}P_{i}$ and any $\mathbf{v}=(v_{1}, v_{2}, \cdots, v_{6})\in(\mathbb{F}^{*}_{4})^{6}$, and construct a $[6, 3, 3]$ GECC $\mathcal{C}_{\mathcal{L}}(D, 3O, \mathbf{v})$ whose generator matrix can be given by
\begin{small}
\begin{displaymath}
	\begin{pmatrix}
		v_{1}& v_{2}& v_{3}&v_{4}&v_{5}&v_{6}\\
		v_{1}& v_{2}& v_{3}\alpha&v_{4}\alpha&v_{5}(\alpha+1)&v_{6}(\alpha+1)\\
		v_{1}\alpha&v_{2}(\alpha+1)&v_{3}\alpha&v_{4}(\alpha+1)&v_{5}\alpha&v_{6}(\alpha+1)\\
	\end{pmatrix}.
\end{displaymath}
\end{small}
Then by choosing the following defining set
\begin{small}
\begin{displaymath}
 S_{0}=\left\lbrace a_{0}+a_{1}x+b_{0}(y+\eta x^{2})\ \bigg|\ a_1, a_2, b_0\in\mathbb{F}_{q},\ \eta\in\mathbb{F}^{*}_{q}\right\rbrace 
\end{displaymath}
\end{small}
we construct the TGECC $\mathcal{C}(D, 3O,0, \eta, \mathbf{v})$ which has a generator matrix
\begin{small}
\begin{displaymath}
	\begin{pmatrix}
		v_{1}& v_{2}& v_{3}&v_{4}&v_{5}&v_{6}\\
		v_{1}& v_{2}& v_{3}\alpha&v_{4}\alpha&v_{5}(\alpha+1)&v_{6}(\alpha+1)\\
		v_{1}(\alpha+\eta)&v_{2}(\alpha+1+\eta)&v_{3}(\alpha+\eta(\alpha+1))&v_{4}(\alpha+1+\eta(\alpha+1))&v_{5}(\alpha+\eta \alpha)&v_{6}(\alpha+1+\eta\alpha)\\
	\end{pmatrix}
\end{displaymath}
\end{small}
and a parity-check matrix
\begin{small}
\begin{displaymath}
	\begin{pmatrix}
		\frac{1}{v_{1}}& \frac{1}{v_{2}}& \frac{1}{v_{3}(\alpha+1)}&\frac{1}{v_{4}(\alpha+1)}&\frac{1}{v_{5}\alpha}&\frac{1}{v_{6}\alpha}\\
		\frac{1}{v_{1}}& \frac{1}{v_{2}}& \frac{\alpha}{v_{3}(\alpha+1)}&\frac{\alpha}{v_{4}(\alpha+1)}&\frac{\alpha+1}{v_{5}\alpha}&\frac{\alpha+1}{v_{6}\alpha}\\
		\frac{\alpha+\eta}{v_{1}}&\frac{\alpha+1+\eta}{v_{2}}&\frac{\alpha+\eta(\alpha+1)}{v_{3}(\alpha+1)}&\frac{\alpha+1+\eta(\alpha+1)}{v_{4}(\alpha+1)}&\frac{\alpha+\eta\alpha}{v_{5}\alpha}&\frac{\alpha+1+\eta\alpha}{v_{6}\alpha}\\
	\end{pmatrix}
    =			\begin{pmatrix}\frac{1}{v_{1}}& \frac{1}{v_{2}}& \frac{\alpha}{v_{3}}&\frac{\alpha}{v_{4}}&\frac{\alpha+1}{v_{5}}&\frac{\alpha+1}{v_{6}}\\
		\frac{1}{v_{1}}& \frac{1}{v_{2}}& \frac{\alpha+1}{v_{3}}&\frac{\alpha+1}{v_{4}}&\frac{\alpha}{v_{5}}&\frac{\alpha}{v_{6}}\\
		\frac{\alpha+\eta}{v_{1}}&\frac{\alpha+1+\eta}{v_{2}}&\frac{\alpha+1+\eta}{v_{3}}&\frac{1+\eta}{v_{4}}&\frac{1+\eta}{v_{5}}&\frac{\alpha+\eta}{v_{6}}\\
	\end{pmatrix}
\end{displaymath}
\end{small}
by the relation $\alpha^{2}=\alpha+1$.

Take $\mathbf{v}=(\lambda,\lambda,\frac{\lambda}{\alpha},\frac{\lambda}{\alpha},\frac{\lambda}{\alpha+1},\frac{\lambda}{\alpha+1})$ for any $\lambda\in\mathbb{F}^{*}_{4}$. By self-duality condition,  the  TGECC $\mathcal{C}(D, 3O,0, \eta, \mathbf{v})$ is self-dual. Indeed, we write down the transformation matrix between the generator matrix and parity-check matrix of TGECC $\mathcal{C}(D, 3O,0, \eta, \mathbf{v})$ as below:
\begin{small}
\begin{displaymath}
	\begin{pmatrix}
		\lambda& \lambda& \lambda(\alpha+1)&\lambda(\alpha+1)&\lambda\alpha&\lambda\alpha\\
		\lambda& \lambda& \lambda&\lambda&\lambda&\lambda\\
		\lambda(\alpha+\eta)&\lambda(\alpha+1+\eta)&\lambda(1+\eta\alpha)&\lambda(\alpha+\eta\alpha)&\lambda(\alpha+1+\eta (\alpha+1))&\lambda(1+\eta(\alpha+1))\\
        \end{pmatrix}\\
        \end{displaymath}
        \end{small}
        \begin{small}
        \begin{displaymath}
=\begin{pmatrix}
		\lambda^{2}& 0& 0\\
		0&\lambda^{2}&0\\
        0&0&\lambda^{2}
	\end{pmatrix}\begin{pmatrix}
		\frac{1}{\lambda}& \frac{1}{\lambda}& \frac{\alpha+1}{\lambda}&\frac{\alpha+1}{\lambda}&\frac{\alpha}{\lambda}&\frac{\alpha}{\lambda}\\
		\frac{1}{\lambda}& \frac{1}{\lambda}& \frac{1}{\lambda}&\frac{1}{\lambda}&\frac{1}{\lambda}&\frac{1}{\lambda}\\
		\frac{\alpha+\eta}{\lambda}&\frac{\alpha+1+\eta}{\lambda}&\frac{1+\eta\alpha}{\lambda}&\frac{\alpha+\eta\alpha}{\lambda}&\frac{\alpha+1+\eta(\alpha+1)}{\lambda}&\frac{1+\eta(\alpha+1)}{\lambda}\\
	\end{pmatrix}.
    \end{displaymath}
\end{small}

Now we compute the minimum distance of  the  TGECC $\mathcal{C}(D, 3O,0, \eta, \mathbf{v}):$
\[
d(\mathcal{C}(D, 3O,0, \eta, \mathbf{v}))=d(\mathcal{C}(D, 3O,0, \eta)).
\]
We have the following group isomorphism
$$E(\mathbb{F}_{4})\simeq\mathbb{Z}/3\times\mathbb{Z}/3.$$
The isomorphism is shown in Table \ref{667}.
\begin{table}[h]
    \centering
    \begin{tabular}{|l|l|}
    \hline
      $\mathbb{Z}/3\times\mathbb{Z}/3$& Corresponding Rational Point \\\hline
      $(0, 0)$& $[0]P_{1}\oplus [0]P_{3}=O$\\ \hline 
  $(1, 0)$& $P_{1}$   \\ \hline     
  $(0, 1)$& $ P_{3}$   \\ \hline 
          $(1, 1)$& $P_{1}\oplus P_{3}=P_6$\\ \hline         
          $(1, 2)$& $ P_{1}\oplus [2]P_{3}=P_7$\\ \hline      
  $(2, 0)$& $[2]P_{1}=P_2$\\ \hline 
  $(2, 1)$& $[2]P_{1}\oplus P_{3}=P_8$\\ \hline 
  $(2, 2)$& $[2]P_{1}\oplus [2]P_{3}=P_5$\\ \hline 
  $(0, 2)$& $ [2]P_{3}=P_4$\\ \hline
        \end{tabular}
    \caption{The group structure of $E(\mathbb{F}_{4})$}
    \label{tab:my_label}
    \label{667}
\end{table}

To compute the value of $N(4, O, D)$, we first translate the problem to the group 
$\mathbb{Z}/3\times\mathbb{Z}/3$ via the isomorphism and it is not difficult to give the solutions:
\[
\{\{(1,0), (2,0), (0,1), (0,2)\},\{(1,0), (2,0), (2,2), (1,1)\},\{(0,1), (0,2), (2,2), (1,1)\}\}.
\]
So we obtain the solutions of $N(4, O, D)$ and the corresponding rational functions shown in Table~\ref{666}.
\begin{table}[h]
   \centering
  \begin{tabular}{|c|c|c|c|c|c|}
   \hline
No.&$P$&$Q$&$R$&$S$& The Corresponding Rational Function \\\hline
$1$&$P_1=(1, \alpha)$&$P_2=(1, \alpha+1)$&$P_3=(\alpha, \alpha)$&$P_4=(\alpha, \alpha+1)$& $f(x, y)=\alpha+(\alpha+1)x+x^{2}$\\\hline
$2$&$P_3=(1, \alpha)$&$P_4=(1, \alpha+1)$&$P_5=(\alpha+1, \alpha)$&$P_6=(\alpha+1, \alpha+1)$& $f(x, y)=\alpha+1+\alpha x+x^{2}$\\\hline
$3$&$P_1=(\alpha, \alpha)$&$P_2=(\alpha, \alpha+1)$&$P_5=(\alpha+1, \alpha)$&$P_6=(\alpha+1, \alpha+1)$& $f(x, y)=1+x+x^{2}$\\\hline
  \end{tabular}
    \caption{Solutions of $N(4, O, D)$ }
    \label{666}
    \end{table}

Since the rational functions do not belong to the defining set $S_0$, the minimum distance satisfies $d(\mathcal{C}(D, 3O,0, \eta))\geq 3.$ Similarly, we have $N(3, O, D)=2$ and the solutions of $P\oplus Q\oplus R=O$ for $P, Q, R\in \mathrm{Supp}(D)$ are given by
\[
\{\{P_1, P_3, P_5\}, \{P_2, P_4, P_6\}\}.
\]
The corresponding rational functions vanishing on the solution points are given in Table~\ref{668}. 
\begin{table}[h] \label{668}
\centering
    \begin{tabular}{|c|c|c|c|c|}
    \hline
No.&$P$&$Q$&$R$& The Corresponding Rational Function \\\hline
$1$&$P_1=(1, \alpha)$&$P_3=(\alpha, \alpha)$&$P_5=(\alpha+1, \alpha)$& $f(x, y)=y+\alpha$\\\hline
$2$&$P_2=(1, \alpha+1)$&$P_4=(\alpha, \alpha+1)$&$P_6=(\alpha+1, \alpha+1)$& $f(x, y)=y+\alpha+1$\\\hline
   \end{tabular}
    \caption{Solutions of $N(3, O, D)$ }
    \end{table}

Note that the two rational functions in Table \ref{668} do not belong to the defining set $S_{0}$ which means the minimum distance $d(\mathcal{C}(D, 3O,0, \eta))\geq 4.$ Therefore, the TECC $\mathcal{C}(D, 3O,0, \eta)$ is a $[6,3,4]$ MDS code and the TGECC $\mathcal{C}(D, 3O,0, \eta, \mathbf{v})$ is  a $[6,3,4]$ MDS self-dual code for the above chosen $\mathbf{v}$.\qed
\vspace{-0.25 cm}
\end{example}

\begin{example} Let $F=\mathbb{F}_{5}(E)$ be an elliptic function field  with defining equation $E:\ y^{2}=x^{3}+x+1$. Let $F=\mathbb{F}_{5}(E)$ be an elliptic function field  with the defining equation:
		\begin{displaymath}
			E:\ y^{2}=x^{3}+x+1.
		\end{displaymath}
		The elliptic curve $E$ has 9 rational points: the infinity place $O$ and
		\begin{small}
			\begin{displaymath}
				\begin{matrix}
					P_{1}=(0, 1)&P_{2}=(0, -1)& P_{3}=(2,1 )&P_{4}=(2,-1)\\
					P_{5}=(-2, 1)& P_{6}=(-2, -1)& P_{7}=(-1, 2)& P_{8}=(-1, -2).
				\end{matrix}
			\end{displaymath}
		\end{small}
		Choose $D=P_1+P_2+\cdots+P_8$. 
		
		Let $t=x(x-2)(x-3)(x-4)$ be a local uniformizer and consider the differential $\omega=dx/t$. 
		Then we have the residues:
		$\mathrm{res}_{P_{1}}(\omega)=\mathrm{res}_{P_{2}}(\omega)=-\frac{1}{24}$, $\mathrm{res}_{P_{3}}(\omega)=\mathrm{res}_{P_{4}}(\omega)=\frac{1}{4}$, $\mathrm{res}_{P_{5}}(\omega)=\mathrm{res}_{P_{6}}(\omega)=-\frac{1}{3}$ and $\mathrm{res}_{P_{7}}(\omega)=\mathrm{res}_{P_{8}}(\omega)=\frac{1}{8}$.

		Take the defining set 
		\begin{small}
			\begin{displaymath}
				S_{0}=\left\lbrace a_{0}+a_{1}x+b_{0}\left( y+\eta x^{2}\right) \ \bigg|\ a_{i},b_{j}\in\mathbb{F}_{5}\right\rbrace. 
			\end{displaymath}
		\end{small}
		TECC $\mathcal{C}(D, 3O, 0, \eta)$ has a generator matrix
		\begin{small}
			\begin{displaymath}
				\begin{pmatrix}
					1& 1& 1&1&1&1&1&1\\
					0& 0& 2&2&3&3&4&4\\
					1& 4& 1+4\eta&4+4\eta&1+4\eta&4+4\eta& 2+\eta&3+\eta\\
				\end{pmatrix}
			\end{displaymath}
		\end{small}
		and a parity-check matrix
		\begin{small}
			\begin{displaymath}
				\begin{pmatrix}
					1& 4& 1&4&3&2&1&4\\
					0& 0& 2&3&4&1&4&1\\
					0& 0& 4&1&2&3&1&4\\
					1&1&4&4&3&3&2&2\\
					0& 0& 3+3\eta&3+2\eta&4+4\eta&4+\eta&3+\eta&3+4\eta\\
				\end{pmatrix}.
			\end{displaymath}
		\end{small}
		
		Now we compute the minimum distance of $\mathcal{C}(D, 3O, 0, \eta)$. First, it can be checked that the group $E(\mathbb{F}_{5})\simeq\mathbb{Z}/9$. Take $P_{1}$ as the primitive element, then $P_{7}=[2]P_{1}$, $P_{3}=[3]P_{1}$, $P_{6}=[4]P_{1}$, $P_{5}=[5]P_{1}$, $P_{4}=[6]P_{1}$, $P_{8}=[7]P_{1}$, $P_{2}=[8]P_{1}$, $O=[9]P_{1}$.
		
		We first compute the value of $N(4, O, D)$ which is equivalent to enumerate the possible combinations of $P, Q, R, S\in \mathrm{Supp}(D)$ such that $P\oplus Q\oplus R\oplus S=O$. 
		
		Let $P=[a]P_{1}, Q=[b]P_{1}, R=[c]P_{1}, S=[d]P_{1}\in \mathrm{Supp}(D)= E(\mathbb{F}_{5})\setminus\{O\}$ for some pairwise distinct integers $a, b, c, d\in\{1, \cdots, 8\}$, then the above counting problem is equivalent to counting combinations such that $a+b+c+d\equiv 0\ (\text{mod}\ 9)$. Note that $a+b+c+d\in\{10, \cdots, 26\}$. The only possibility is $a+b+c+d=18$. Then we have the following solutions and the corresponding rational functions in Table~\ref{899}:
		\begin{table}[h]
			\centering
			\begin{tabular}{|c|c|c|}
				\hline
				No.&$\{a, b, c, d\}$& The Corresponding Rational Function \\\hline
				$1$&$\{1,2,7,8\}$& $f(x, y)=x+x^{2}$\\\hline
				$2$&$\{1,3,6,8\}$& $f(x, y)=3x+x^{2}$\\\hline
				$3$&$\{1,4,5,8\}$& $f(x, y)=2x+x^{2}$\\\hline
				$4$&$\{1,4,6,7\}$& $f(x, y)=4+y+3x^{2}$\\\hline
				$5$&$\{2,3,5,8\}$& $f(x, y)=1+y+2x^{2}$\\\hline
				$6$&$\{2,3, 6,7\}$& $f(x, y)=3+4x+x^{2}$\\\hline
				$7$&$\{2,4,5,7\}$& $f(x, y)=2+3x+x^{2}$\\\hline
				$8$&$\{3,4,5,6\}$& $f(x, y)=1+x^{2}$\\\hline
			\end{tabular}
			\caption{Solutions of $N(4,O,D)$}
			\vspace{-1 cm}
			\label{899}
		\end{table}
		
		Then we have $N(4, O, D)=8$. Note that only the fourth and fifth rational functions are in the defining set $S_{0}$. Then the minimum distance satisfies $d(\mathcal{C}(D, 3O, 0, \eta))=4$ if and only if $\eta\in \{2, 3\}$. 
		
		On the other hand, by the MDS conjecture of ECCs in \cite{11,12}, we know that $\mathcal{C}(D, 3O,0, \eta)$ cannot be MDS. Therefore TECC $\mathcal{C}(D, 3O,0, \eta)$ is AMDS for  $\eta\in\{1,4\}$.\qed
	\end{example}

\section{Non-equivalence Results }
	\label{sec:6}
	In \cite{4}, we know that any MDS code of length $n$ and dimension $k$ is equivalent to an RS code for $k<3$ and $n-k<3$. In this subsection, we present some non-equivalence results of TECCs for $4\leq k\leq \frac{n-4}{2}$ and $\frac{n+4}{2}\leq k\leq  n-4$. First, the following proposition calculates the dimensions of Schur squares of the classical ECCs $\mathcal{C}_{\mathcal{L}}(D, kO)$ for $4\leq k\leq \frac{n-4}{2}$.
	
	\begin{proposition}[see \cite{3}]
		The dimension of the Schur square of ECC $C_{\mathcal{L}}(D, kO)$ is $2k$ if $4\leq k\leq \frac{n-4}{2}$.
		\label{7}
	\end{proposition}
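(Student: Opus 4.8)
The plan is to trap the Schur square between two explicit function-space codes and then show the trap is tight. The starting observation is that evaluation is multiplicative pointwise, $ev_{\mathcal{L}}(f)\star ev_{\mathcal{L}}(g)=ev_{\mathcal{L}}(fg)$, so
\[
\mathcal{C}_{\mathcal{L}}(D,kO)^{\star 2}=ev_{\mathcal{L}}\bigl(\mathcal{L}(kO)\cdot\mathcal{L}(kO)\bigr),
\]
where $\mathcal{L}(kO)\cdot\mathcal{L}(kO)$ denotes the $\mathbb{F}_q$-span of all products $fg$ with $f,g\in\mathcal{L}(kO)$. Every such $fg$ is regular away from $O$ with a pole of order at most $2k$ at $O$, so $\mathcal{L}(kO)\cdot\mathcal{L}(kO)\subseteq\mathcal{L}(2kO)$ and hence $\mathcal{C}_{\mathcal{L}}(D,kO)^{\star 2}\subseteq\mathcal{C}_{\mathcal{L}}(D,2kO)$. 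Now $\deg(2kO-D)=2k-n\le -4<0$ under the hypothesis $k\le\frac{n-4}{2}$, so $\ell(2kO-D)=0$; this simultaneously shows that $ev_{\mathcal{L}}$ is injective on $\mathcal{L}(2kO)$ and, via the Riemann--Roch theorem on the genus-one field $F$ (which also gives $\ell(2kO)=2k$), that $\dim\mathcal{C}_{\mathcal{L}}(D,2kO)=\ell(2kO)-\ell(2kO-D)=2k$. Consequently it suffices to prove the single equality $\mathcal{L}(kO)\cdot\mathcal{L}(kO)=\mathcal{L}(2kO)$, after which $\dim\mathcal{C}_{\mathcal{L}}(D,kO)^{\star 2}=\dim ev_{\mathcal{L}}(\mathcal{L}(2kO))=2k$.

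For this equality I would argue by explicit monomial bookkeeping using the bases in Lemma~\ref{2}. For the first two types, $\mathcal{L}(kO)$ is spanned by the monomials $x^iy^j$ with $j\in\{0,1\}$ and $2i+3j\le k$, and a product of two of them has the form $x^{i+i'}$, $x^{i+i'}y$, or $x^{j+j'}y^2$; using the defining relation $y^2=f(x)$ with $\deg f=3$, the last type equals $x^{j+j'}f(x)$, a polynomial in $x$ of degree $j+j'+3$ with nonzero leading coefficient. Tracking degrees: the pure $x$-products give $1,x,\dots,x^{2\lfloor k/2\rfloor}$, and the terms $x^{j+j'}f(x)$ extend the reachable powers of $x$ up to $x^{k}$ (the extremal case is $j+j'=k-3$ for odd $k$; for even $k$ the power $x^k$ already occurs among the pure $x$-products), so a triangular elimination yields every $x^i$ with $0\le i\le k$; likewise the mixed products $x^i\cdot x^{j'}y$ yield every $x^m y$ with $0\le m\le\lfloor k/2\rfloor+\lfloor(k-3)/2\rfloor=k-2$. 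Since $\{x^i:0\le i\le k\}\cup\{x^m y:0\le m\le k-2\}$ is exactly the Lemma~\ref{2} basis of $\mathcal{L}(2kO)$, the two spaces coincide. For the third type the same count applies verbatim to the basis $\{x^i\}\cup\{x^j(ax+b)y\}$ once one notes that $\bigl((ax+b)y\bigr)^2=(ax+b)^2y^2=x(ax+b)^2+(ax+b)$ is again a cubic polynomial in $x$.

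The step I expect to be delicate is precisely this degree count: one must verify that in every parity of $k$ and for each of the three curve types the extremal monomials $x^k$ and $x^{k-2}y$ (respectively $x^{k-2}(ax+b)y$) are genuinely produced and that nothing in $\mathcal{L}(2kO)$ is left out — the hypothesis $k\ge 4$ only guarantees enough monomials are present, and the argument in fact goes through for all $k\ge 3$. Everything else is routine: the two inclusions follow from pointwise multiplicativity of $ev_{\mathcal{L}}$ and additivity of pole orders at $O$, the dimension counts follow from Riemann--Roch, and the final equality $\dim\mathcal{C}_{\mathcal{L}}(D,kO)^{\star 2}=2k$ follows by combining the tight inclusion with the injectivity of $ev_{\mathcal{L}}$ on $\mathcal{L}(2kO)$.
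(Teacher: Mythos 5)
Your argument is sound, and in fact the paper offers nothing to compare it against: Proposition~\ref{7} is imported verbatim from reference~\cite{3} (Chen, 2024) with no proof given here, so you have supplied the missing argument. Your route --- squeeze $\mathcal{C}_{\mathcal{L}}(D,kO)^{\star 2}$ between $ev_{\mathcal{L}}(\mathcal{L}(kO)\cdot\mathcal{L}(kO))$ and $ev_{\mathcal{L}}(\mathcal{L}(2kO))$, use $\ell(2kO-D)=0$ for injectivity and $\ell(2kO)=2k$ for the count, and close the gap by showing the multiplication map $\mathcal{L}(kO)\otimes\mathcal{L}(kO)\to\mathcal{L}(2kO)$ is surjective --- is exactly the standard argument (surjectivity holds for genus $1$ as soon as $\deg(kO)=k\ge 3=2g+1$, which your monomial bookkeeping verifies by hand), and your observation that only the upper bound $2k\le n-4$ is doing real work is correct.

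Two small slips in the characteristic-$2$ cases deserve a correction, though neither damages the conclusion. For the second type the relation is $y^2+y=f(x)$, so $x^jy\cdot x^{j'}y=x^{j+j'}f(x)+x^{j+j'}y$, not $x^{j+j'}f(x)$; the extra term $x^{j+j'}y$ has $j+j'\le k-3$ and so already lies in the span of the mixed products $x^i\cdot x^{j'}y$, and the leading power $x^{j+j'+3}$ is unaffected. Likewise for the third type, $\bigl((ax+b)y\bigr)^2=(ax+b)^2y+x(ax+b)^2+(ax+b)$, and you dropped the term $(ax+b)^2y=(ax+b)\cdot\bigl[(ax+b)y\bigr]$, which is itself a product of two elements of $\mathcal{L}(kO)$ and hence already in the product span; the surviving polynomial part still has degree $3$ in $x$ with leading coefficient $a^2\neq 0$, so the triangular elimination you describe goes through unchanged. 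With those two identities repaired, the proof is complete.
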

	Consider the TGRS code $C((k-1)O,\ell,\eta)$ with defining set $R_{\ell}=\{\sum^{k-1}_{i=0}a_{i}x^{i}+a_{\ell}\eta x^{k}, a_{i}\in\mathbb{F}_{q},\eta\in\mathbb{F}_{q}^{*}\}$. It has been verified that the dimension of Schur square of $C((k-1)O,\ell,\eta)$ is lower bounded by $2k$; see \cite{4}. By checking the maximal order of the polynomial in $R_{\ell}$, the dimension of Schur square is upper bounded by $2k+1$ if $\ell\neq k-1$.

	Based on the results above, we can calculate the dimensions of the Schur squares of TECCs and the following theorem illustrates the monomial non-equivalence between TECCs and ECCs, GRS and TGRS codes.
	\begin{theorem}
		\begin{enumerate}
			\item For $4\leq k\leq \frac{n-4}{2}$, the dimension of the Schur square of $C(D, kO,\ell, \eta)$ is at least $2k+1$. Moreover, the dimension of the Schur square of $C(D, kO,\ell, \eta)$ equals to $2k+1$ if and only if $\ell=\frac{k-3}{2}$ for odd $k$ or $\ell=\frac{k}{2}$ for even $k$.
			\item For $4\leq n-k\leq\frac{n-4}{2}$, the dimension of Schur square of the dual $C(D, kO,\ell, \eta)^{\perp}$ is at least $2n-2k+1$. Moreover, the dimension of the Schur square of $C(D, kO,\ell, \eta)^{\perp}$ equals to $2n-2k+1$ if and only if $\ell=\frac{k-3}{2}$ for odd $k$ or $\ell=\frac{k}{2}$ for even $k$.
		\end{enumerate}
		\label{30} 
	\end{theorem}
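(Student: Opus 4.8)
The plan is to compute the Schur square of $\mathcal{C}(D,kO,\ell,\eta)$ directly from an explicit basis of the defining set $S_\ell$, exploiting the fact that Schur products of evaluation vectors are again evaluation vectors of products of rational functions. Recall from Lemma~\ref{2} and the ascending chain of Riemann--Roch spaces that $\mathcal{C}_{\mathcal{L}}(D,kO)\subsetneq\mathcal{C}(D,kO,\ell,\eta)\subsetneq\mathcal{C}_{\mathcal{L}}(D,(k+1)O)$, so the TECC is spanned by $ev_D(g_0),\dots,ev_D(g_{k-1})$ where $g_0,\dots,g_{k-2}$ is a basis of $\mathcal{L}(kO)\cap S_\ell$ (a codimension-one subspace of $\mathcal{L}(kO)$) together with one extra twisted function $g_{k-1}=x^{h}+\eta\,(\text{monomial of pole order }k+1)$ (for odd $k$, the function $\beta$-part $x^\ell y+\eta x^{\frac{k+1}{2}}$; symmetrically for even $k$). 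Then $\mathcal{C}(D,kO,\ell,\eta)^{\star2}$ is spanned by all $ev_D(g_ig_j)$, $0\le i\le j\le k-1$.

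First I would handle the lower bound. Since $\mathcal{C}_{\mathcal{L}}(D,kO)\subseteq\mathcal{C}(D,kO,\ell,\eta)$, we have $\mathcal{C}_{\mathcal{L}}(D,kO)^{\star2}\subseteq\mathcal{C}(D,kO,\ell,\eta)^{\star2}$, and by Proposition~\ref{7} the former has dimension $2k$ for $4\le k\le\frac{n-4}{2}$; in fact one checks $\mathcal{C}_{\mathcal{L}}(D,kO)^{\star2}=\mathcal{C}_{\mathcal{L}}(D,2kO)$ (products of functions in $\mathcal{L}(kO)$ span $\mathcal{L}(2kO)$ because the monomials $x^iy^j$ with $2i+3j\le 2k$ are hit), which has dimension $2k$ since $2k\le\frac{n}{2}<n$. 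It remains to produce one Schur product in $\mathcal{C}(D,kO,\ell,\eta)^{\star2}$ lying outside $\mathcal{C}_{\mathcal{L}}(D,2kO)$: take $g_{k-1}\cdot g_{k-1}$, or more cleanly $g_{k-1}\cdot x^{\lfloor k/2\rfloor}$ (the top pure-power generator). A short valuation computation at $O$ shows this product has a pole of order $2k+1$ or $2k+2$ at $O$ and no other poles, hence lies in $\mathcal{L}((2k+2)O)\setminus\mathcal{L}(2kO)$; since $2k+2\le\frac{n}{2}+2<n$, its evaluation vector is genuinely a new dimension, giving $\dim\mathcal{C}(D,kO,\ell,\eta)^{\star2}\ge 2k+1$.

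Next, the characterization of equality. I would show $\mathcal{C}(D,kO,\ell,\eta)^{\star2}\subseteq\mathcal{L}((2k+2)O)$ evaluated, and then decide exactly which pole orders in $\{2k+1,2k+2\}$ are realized as a function of $\ell$. The twisted generator $g_{k-1}$ contributes, via products $g_{k-1}g_j$, functions whose leading pole behavior at $O$ depends on the exponent $\ell$ appearing in the twist: when $\ell=\frac{k-3}{2}$ (odd $k$) the ``low'' part $x^\ell y$ of $g_{k-1}$ has pole order $2\ell+3=k$, matched with the twist $x^{(k+1)/2}$ of pole order $k+1$, and one finds that only pole order $2k+1$ (not $2k+2$) is produced by any Schur product that escapes $\mathcal{L}(2kO)$ — here the key point is that the functions $g_0,\dots,g_{k-2}$ already span all of $\mathcal{L}(kO)$ except the one missing monomial of order $2\ell+3$ resp.\ $2\ell$, and the arithmetic of which products fill in pole orders $2k+1,2k+2$ reduces to a finite combinatorial check on exponent sums. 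For $\ell<\frac{k-3}{2}$ (odd $k$), the missing monomial is of lower order, so among the $g_i$ one still has $x^{(k-1)/2}$ and $x^{(k-3)/2}y$; products like $g_{k-1}\cdot x^{(k-1)/2}$ and $g_{k-1}\cdot x^{(k-3)/2}y$ then realize pole orders $2k+1$ and $2k+2$ both, forcing $\dim\mathcal{C}^{\star2}\ge 2k+2$. Conversely when $\ell=\frac{k-3}{2}$, the only functions of pole order $>2k$ available are scalar multiples of a single function of order $2k+1$, pinning the dimension to exactly $2k+1$. The even-$k$ case with $\ell=\frac{k}{2}$ is entirely symmetric, swapping the roles of the pure powers $x^i$ and the $x^jy$ functions.

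Finally, part (2) follows by applying part (1) to the dual. By Theorems~\ref{7}, \ref{55}, \ref{56}, the parity-check matrix exhibits $\mathcal{C}(D,kO,\ell,\eta)^{\perp}$ as a monomially-scaled TECC-type code: its rows are $\gamma_i\beta_i^{-1}f^{(1)}_*(\alpha_i,\beta_i)$ (or the analogous scaling) times evaluations of the Riemann--Roch basis of $\mathcal{L}((y)+(n-k-1)O)$ together with one extra twisted row. Since $\dim$ of a Schur square is a monomial-equivalence invariant, the same argument — with $k$ replaced by $n-k$ throughout and the hypothesis $4\le n-k\le\frac{n-4}{2}$ — yields $\dim\big(\mathcal{C}(D,kO,\ell,\eta)^{\perp}\big)^{\star2}\ge 2(n-k)+1$ with equality iff the corresponding twist exponent is extremal, i.e.\ iff $\ell=\frac{k-3}{2}$ (odd $k$) or $\ell=\frac{k}{2}$ (even $k$), since that is precisely the condition under which the dual's twisted row sits in the extremal position. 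The main obstacle I anticipate is the middle step: cleanly pinning down, for non-extremal $\ell$, that the Schur products $g_{k-1}g_j$ actually realize both pole orders $2k+1$ and $2k+2$ (rather than accidentally cancelling), which requires tracking the leading coefficients at $O$ of the twisted generator and verifying a non-degeneracy condition — this is where one must be careful that the twist parameter $\eta\ne 0$ prevents collapse, and where the case analysis on the parity of $k$ and the value of $\ell$ is most delicate.
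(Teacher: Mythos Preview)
Your overall strategy---tracking the pole orders at $O$ of products of basis elements of $S_\ell$ and counting the resulting sumset---is exactly the approach the paper takes (Lemma~\ref{113} and Proposition~\ref{114}). However, your execution contains a genuine error that breaks the lower-bound step, and a second error in the equality analysis.

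\textbf{The false containment.} You write $\mathcal{C}_{\mathcal{L}}(D,kO)\subsetneq\mathcal{C}(D,kO,\ell,\eta)$ and then use $\mathcal{C}_{\mathcal{L}}(D,kO)^{\star2}\subseteq\mathcal{C}(D,kO,\ell,\eta)^{\star2}$ to import the $2k$-dimensional lower bound from Proposition~\ref{7}. But both codes have dimension $k$, so any strict containment is impossible; in fact neither contains the other, since $x^{\ell}y\in\mathcal{L}(kO)\setminus S_\ell$ (for odd $k$). The correct containment from the paper is $\mathcal{C}_{\mathcal{L}}(D,(2\ell+2)O)\subsetneq\mathcal{C}(D,kO,\ell,\eta)$, which is too weak for your shortcut. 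The paper instead argues directly: the pole orders realized by basis elements of $S_\ell$ form $T_\ell=\{0,2,3,\ldots,k+1\}\setminus\{2\ell+3\}$, and one checks combinatorially that the sumset $T_\ell+T_\ell$ already covers $\{0,2,3,\ldots,2k\}$ plus at least one more value. Distinct pole orders give linearly independent functions, and since $2k+2<n$ the evaluation map is injective on $\mathcal{L}((2k+2)O)$, yielding $\dim\ge 2k+1$.

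\textbf{The wrong extra pole order.} In the extremal case $\ell=\tfrac{k-3}{2}$ you claim the Schur square realizes pole order $2k+1$ but not $2k+2$. It is the reverse: $T_\ell=\{0,2,\ldots,k-1,k+1\}$ (the missing value is $k$), so $2k+1=k+(k+1)$ is \emph{not} a sum of two elements of $T_\ell$, while $2k+2=(k+1)+(k+1)$ is (via $g_{k-1}^2$). The sumset is $\{0,2,\ldots,2k,2k+2\}$, of size exactly $2k+1$. Relatedly, your ``cleaner'' product $g_{k-1}\cdot x^{\lfloor k/2\rfloor}$ has pole order $(k+1)+(k-1)=2k$, so it does not escape $\mathcal{L}(2kO)$; the product that works is $g_{k-1}^2$. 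For non-extremal $\ell$, the monomial $x^{(k-3)/2}y$ of pole order $k$ is still in $S_\ell$, so $g_{k-1}\cdot x^{(k-3)/2}y$ hits $2k+1$ and $g_{k-1}^2$ hits $2k+2$, giving $\dim\ge 2k+2$. Once these pole-order computations are corrected, your argument and the paper's coincide; the upper bound in the extremal case follows because every basis product except $g_{k-1}^2$ lies in $\mathcal{L}(2kO)$.
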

	Together with Propositions~\ref{3} and~\ref{7}, we have the following corollary.
	\begin{corollary}
		\begin{enumerate}
			\item TECC $C(D, kO,\ell, \eta)$ is not monomially equivalent to any $k-$dimensional GRS code for $4\leq k\leq \frac{n-4}{2}$ or $\frac{n+4}{2}\leq k\leq n-4;$
			\item TECC $C(D, kO,\ell, \eta)$ is not monomially equivalent to any $k-$dimensional ECCs $\mathcal{C}_{\mathcal{L}}(D,kO)$ for $4\leq k\leq \frac{n-4}{2}$ or $\frac{n+4}{2}\leq k\leq n-4$.
			\item TECC $C(D, kO,\ell, \eta)$ with $0\leq\ell< \frac{k-3}{2}$ or $0\leq\ell<\frac{k}{2}$ is not monomially equivalent to any TGRS code $C((k-1)O,\phi,\eta)$ for $4\leq k\leq \frac{n-4}{2}$ or $\frac{n+4}{2}\leq k\leq n-4$.
		\end{enumerate}
	\end{corollary}
	\begin{proof}
		For $4\leq k\leq \frac{n-4}{2}$, it is straightforward from Theorem~\ref{30} 1), Propositions~\ref{3} and~\ref{7} by comparing the dimensions of the Schur squares.  For $\frac{n+4}{2}\leq k\leq n-4$, we consider dual codes since the dual codes of GRS codes (and ECCs) are still GRS codes (and generalized ECCs, see \cite{10}).
	\end{proof}

	To prove Theorem \ref{30}, we need the following Propositions and Lemmas, which are generalizations of operations for TGRS codes by Beelen \textit{et al}. in \cite{4}. 
	
	Without loss of generality, we only give the proof of the TECCs $\mathcal{C}(D, kO, \ell, \eta)$ with odd dimensions and simply denote by $S_{\ell}=S^{(1)}_{\ell}$ in the following discussions.
	
	Consider the defining set $S_{\ell}\subseteq\mathcal{L}((k+1)O)$ and $D=\sum_{i=1}^{n}P_{i}$ given as before. Denote by two sets $D(S_{\ell}):=\{-v_{O}(f\cdot g): f, g\in S_{\ell}, v_{O}(f\cdot g)>-n    \}$ and $\overline{D}(S_{\ell}, D):=\{-v_{O}(\overline{f\cdot g}): f, g\in S_{\ell}    \}$, where $\overline{f}:=(f\mod h)$ for any $f\in\mathcal{L}(nO)$  and $h$ is the rational function with all the places in $\mathrm{Supp}(D)$ as zeros. 
	
	\begin{remark}
		\begin{enumerate}
			\item Note that the existence of $f, g, h$ rely on the SSPs in $\mathrm{Supp}(D)$.
			\item In the setting of $4\leq k\leq \frac{n-4}{2}$, the cardinality of the two sets $\overline{D}(S_{\ell}, D)$ and $D(S_{\ell})$ are the same and we have
			\begin{displaymath}
				ev_{D}\left(\left\langle fg: f, g\in S_{\ell}\right\rangle\right)= ev_{D}\left(\left\langle\overline{fg}: f, g\in S_{\ell}\right\rangle\right).
			\end{displaymath}
		\end{enumerate}
	\end{remark}
	
	\begin{lemma}
		\begin{enumerate}
			\item $\mathcal{C}(D, kO, \ell, \eta)^{\star 2}=ev_{D}\left(\left\langle fg: f, g\in S_{\ell}\right\rangle\right);$
			\item $\dim\left(\mathcal{C}(D, kO, \ell, \eta)^{\star 2}\right)\geq \#\overline{D}(S_{\ell}, D)\geq \#D(S_{\ell}).    $ 
		\end{enumerate}
		\label{113}
		\begin{proof} The first part of the statement follows directly from the definition of Schur square, \textit{i.e.}, $f(P)\cdot g(P)=(f\cdot g)(P)$ for $f, g\in S_{\ell}$ and any $P\in \mathrm{Supp}(D)$. Note that 
			the evaluation $ev_{D}(\cdot)$ is a bijection between $\mathcal{L}(nO)$ and $\mathbb{F}^{n}_{q}$. Then $\dim(C(D, O, \eta, \ell)^{\star 2})$ is greater than or equal to the cardinality $\#\overline{D}(S_{\ell}, D)$ of the set $\left\langle \overline{f\cdot g}: f, g\in S_{\ell}\right\rangle$. Note that $D(S_{\ell})\subseteq\overline{D}(S_{\ell}, D)$, then we have the second inequality.
		\end{proof}
	\end{lemma}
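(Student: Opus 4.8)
The plan is to prove Lemma \ref{113} directly from the definition of the Schur square and the ``transport of structure'' provided by the evaluation map $ev_D$, which is a bijection between $\mathcal{L}(nO)$ and $\mathbb{F}_q^n$ once we are in the regime $4\le k\le\frac{n-4}{2}$ (so that the relevant pole orders stay below $n$). The first statement is essentially definitional, and I would dispatch it first as a warm-up; the real content is the chain of inequalities in the second statement, which relates the dimension of $\mathcal{C}(D,kO,\ell,\eta)^{\star 2}$ to the two combinatorial quantities $\#\overline{D}(S_\ell,D)$ and $\#D(S_\ell)$.

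For part (1), I would argue that for any $f,g\in S_\ell$ and any $P\in\mathrm{Supp}(D)$ we have the pointwise identity $(f\star g)(P)=f(P)g(P)=(fg)(P)$, so that $ev_D(f)\star ev_D(g)=ev_D(fg)$. Since $\mathcal{C}(D,kO,\ell,\eta)=ev_D(S_\ell)$ and the Schur square is by definition the $\mathbb{F}_q$-span of all such products $ev_D(f)\star ev_D(g)$, taking spans on both sides yields
\begin{displaymath}
\mathcal{C}(D,kO,\ell,\eta)^{\star 2}=\mathrm{span}_{\mathbb{F}_q}\{ev_D(f)\star ev_D(g):f,g\in S_\ell\}=ev_D\!\left(\left\langle fg:f,g\in S_\ell\right\rangle\right),
\end{displaymath}
where the last equality uses the $\mathbb{F}_q$-linearity of $ev_D$ to exchange span with the image. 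This gives (1).

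For part (2), the first inequality $\dim(\mathcal{C}^{\star 2})\ge\#\overline{D}(S_\ell,D)$ is the crux. The idea is that the set $\overline{D}(S_\ell,D)$ records the distinct pole orders at $O$ of the reduced products $\overline{fg}=(fg\bmod h)$, where $h$ vanishes exactly on $\mathrm{Supp}(D)$. I would show that for each value $m\in\overline{D}(S_\ell,D)$ there is at least one product $fg$ realizing $-v_O(\overline{fg})=m$, and that distinct pole orders force the corresponding vectors $ev_D(\overline{fg})$ to be $\mathbb{F}_q$-linearly independent: a nontrivial linear relation among them would produce a nonzero function in $\mathcal{L}(mO)$ for some bounded $m<n$ that vanishes on all $n$ points of $D$, contradicting $\ell(\text{(something of degree}<n))$ being too small to have $n$ zeros (this is the standard pole-order/weight-filtration argument, using that reduction modulo $h$ keeps everything inside $\mathcal{L}((n-1)O)$ so $ev_D$ remains injective on it). By Remark immediately preceding the lemma, in the range $4\le k\le\frac{n-4}{2}$ we have $ev_D(\langle fg\rangle)=ev_D(\langle\overline{fg}\rangle)$, so the count of independent images is at least $\#\overline{D}(S_\ell,D)$. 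The second inequality $\#\overline{D}(S_\ell,D)\ge\#D(S_\ell)$ is then the easy containment $D(S_\ell)\subseteq\overline{D}(S_\ell,D)$: if $-v_O(fg)>-n$ then $fg$ already lies in $\mathcal{L}((n-1)O)$, so the reduction $\overline{fg}$ does not change its pole order, whence each element of $D(S_\ell)$ appears in $\overline{D}(S_\ell,D)$.

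The main obstacle I anticipate is making the pole-order independence argument airtight, i.e.\ verifying that $ev_D$ is genuinely injective on the span of the reduced products and that distinct pole orders at $O$ give a triangular (hence full-rank) system with respect to the weight filtration $\mathcal{L}(0)\subsetneq\mathcal{L}(2O)\subsetneq\cdots$ on the elliptic function field. This requires checking that every pole order $m$ occurring in $\overline{D}(S_\ell,D)$ is strictly below $n$, so that a nonzero element of $\mathcal{L}(mO)$ cannot vanish at all $n$ evaluation points; the hypothesis $4\le k\le\frac{n-4}{2}$ is exactly what guarantees $2(k+1)<n$, keeping all products (before and after reduction) inside $\mathcal{L}((n-1)O)$. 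Once this bound is in place, the independence and the two inequalities follow, and I would assemble these pieces into the final statement.
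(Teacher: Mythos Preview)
Your proposal is correct and follows essentially the same approach as the paper's own proof: part (1) via the pointwise identity $ev_D(f)\star ev_D(g)=ev_D(fg)$ and linearity, and part (2) via the pole-order filtration argument together with injectivity of $ev_D$ on the reduced products, plus the containment $D(S_\ell)\subseteq\overline{D}(S_\ell,D)$. Your write-up is in fact more careful than the paper's: you correctly identify that what is really being used is injectivity of $ev_D$ on $\mathcal{L}((n-1)O)$ (which follows from $\ell((n-1)O-D)=0$), whereas the paper's phrase ``$ev_D$ is a bijection between $\mathcal{L}(nO)$ and $\mathbb{F}_q^n$'' is slightly imprecise since in the paper's setup $h\in\mathcal{L}(nO)$ lies in the kernel; your reduction-mod-$h$ formulation fixes this.
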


	\begin{proposition}
		\vspace{-0.25 cm}
		Notations as above. Denote by $T_{\ell}=\{0, 2, 3, \cdots, k, k+1\}\setminus\{2\ell+3\}$ or $T_{\ell}=\{0, 2, 3, \cdots, k, k+1\}\setminus\{2\ell\}$. Then the dimension of the Schur square satisfies
		\begin{displaymath}
			\dim\left(\mathcal{C}(D, kO, \ell, \eta)^{\star 2}\right)\geq\#\{d_{1}+d_{2}: d_{1},d_{2}\in T_{\ell}, d_{1}+d_{2}<n\}.
		\end{displaymath}
		\begin{proof} 
			For simplicity of discussions, we only prove for the first case. Denote by $f_{1}=1,\cdots,f_{2\ell+3}=x^{\ell}y+\eta x^{\frac{k+1}{2}},\cdots, f_{k}=x^{\frac{k-3}{2}}y$. By checking their indices of poles, it is obvious that they form a basis of $S_{\ell}$ and $T_{\ell}=\{0,2,3,\cdots,k+1\}\setminus\{2\ell+3\}=\{-v_{O}(f_{1}),\cdots,-v_{O}(f_{k})\}$. Then the proposition directly from Lemma \ref{113}.
		\end{proof}
		\label{114}
	\end{proposition}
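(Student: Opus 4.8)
The plan is to prove the bound by identifying the set $T_\ell$ with the set of pole orders at $O$ of a natural basis of the defining set $S_\ell$, and then letting additivity of the valuation $v_O$ turn products of basis elements into sums of pole orders. Accordingly, the right-hand side is to be read as the cardinality of the restricted sumset $\{d_1+d_2 : d_1,d_2\in T_\ell,\ d_1+d_2<n\}$ (that is, $d_1,d_2$ range over $T_\ell$, not $d_1+d_2$), and the statement then becomes an immediate consequence of Lemma~\ref{113}(2), which already gives $\dim(\mathcal{C}(D,kO,\ell,\eta)^{\star 2})\ge\#D(S_\ell)$.

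First I would fix the basis of $S_\ell$. For odd $k$, the functions $1,x,\dots,x^{(k-1)/2}$, together with the $x^jy$ for $0\le j\le\frac{k-3}{2}$ with $j\neq\ell$, and the single twisted generator $x^\ell y+\eta x^{(k+1)/2}$, span $S_\ell$ and are linearly independent by checking their pole orders (as in Lemma~\ref{2}). Using $v_O(x)=-2$ and $v_O(y)=-3$, the untwisted generators contribute the even pole orders $0,2,4,\dots,k-1$ and the odd pole orders $3,5,\dots,k$ with $2\ell+3$ omitted. The key local computation is the twisted generator: since $\ell\le\frac{k-3}{2}$ forces $2\ell+3\le k<k+1$, the monomial $\eta x^{(k+1)/2}$ strictly dominates $x^\ell y$ in pole order, so $-v_O(x^\ell y+\eta x^{(k+1)/2})=k+1$. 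Collecting these, the set of pole orders is exactly $\{0,2,3,4,\dots,k,k+1\}\setminus\{2\ell+3\}=T_\ell$, the gap at $1$ reflecting the Weierstrass semigroup of the elliptic curve. The even-$k$ case is entirely parallel: there the twist $x^\ell+\eta x^{(k-2)/2}y$ has dominant monomial $\eta x^{(k-2)/2}y$ of pole order $k+1$ (since $2\ell\le k$), and the omitted pole order is $2\ell$, giving $T_\ell=\{0,2,\dots,k,k+1\}\setminus\{2\ell\}$.

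With the basis in hand, for any two basis elements $f,g$ of pole orders $d_1,d_2\in T_\ell$, additivity of $v_O$ gives $-v_O(fg)=d_1+d_2$ with no cancellation at $O$. Hence the set $D(S_\ell)=\{-v_O(fg):f,g\in S_\ell,\ -v_O(fg)<n\}$ is precisely $\{d_1+d_2:d_1,d_2\in T_\ell,\ d_1+d_2<n\}$. Any product of pole order $<n$ lies in $\mathcal{L}((n-1)O)$, on which $ev_D$ is injective (a nonzero function with at most $n-1$ poles has at most $n-1$ zeros, hence cannot vanish on all of $P_1,\dots,P_n$); choosing one product for each distinct value in $D(S_\ell)$ yields functions of pairwise distinct pole orders, hence linearly independent, whose images under $ev_D$ are therefore independent elements of $\mathcal{C}(D,kO,\ell,\eta)^{\star 2}$. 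This realizes the chain $\dim(\mathcal{C}(D,kO,\ell,\eta)^{\star 2})\ge\#\overline{D}(S_\ell,D)\ge\#D(S_\ell)$ of Lemma~\ref{113}(2), and the proposition follows.

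I expect the only genuine subtlety — and hence the main point to get right — to be the determination of the pole order of the twisted generator: one must confirm that in $x^\ell y+\eta x^{(k+1)/2}$ (resp. $x^\ell+\eta x^{(k-2)/2}y$) the monomial carrying $\eta$ truly dominates, which is guaranteed precisely by the ranges $\ell\le\frac{k-3}{2}$ (resp. $\ell\le\frac{k}{2}$) imposed in the definition of $S_\ell$. The remaining ingredients — additivity of $v_O$, injectivity of $ev_D$ below the code length, and the identification of $D(S_\ell)$ with the sumset — are routine once the pole-order bookkeeping is pinned down, and the other two curve types require no new idea.
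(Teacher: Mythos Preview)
Your proposal is correct and follows essentially the same approach as the paper: identify $T_\ell$ with the set of pole orders at $O$ of a basis of $S_\ell$, then invoke Lemma~\ref{113} together with additivity of $v_O$ to bound the Schur-square dimension below by the restricted sumset. Your write-up is in fact more careful than the paper's---you make explicit the key computation that the $\eta$-monomial dominates in the twisted generator (so its pole order is $k+1$, not $2\ell+3$ or $2\ell$), and you correctly clarify that the intended set on the right is $\{d_1+d_2:d_1,d_2\in T_\ell,\ d_1+d_2<n\}$ rather than the literal (and typo-laden) condition $d_1+d_2\in T_\ell$.
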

	
	\begin{proof}[Proof of Theorem \ref{30}]
		\begin{enumerate}
			\item By Proposition \ref{114}, we have $T_{\ell}=\{0, 2, \cdots, k,k+1\}\setminus\{2\ell+3\}$ and then we have the following discussions for the set
			$\{d_{1}+d_{2}: d_{1},d_{2}\in T_{\ell}, d_{1}+d_{2}<n\}$. 
			\begin{enumerate}
				\item $2\ell+3=k$ i.e. $\ell=\frac{k-3}{2}$, we have $T_{\ell}=\{0, 2, \cdots, k-1, k+1\}$, then
				$\{d_{1}+d_{2}: d_{1},d_{2}\in T_{\ell}, d_{1}+d_{2}<n\}=\{0, 2, 3, \cdots, 2k-2, 2k-1,2k, 2k+2 \}=2k+1$.
				\item $2\ell+3\neq k$ i.e. $0<\ell< \frac{k-3}{2}$, we have $T_{\ell}=\{0, 2, \cdots,2\ell+2, 2\ell+4, \cdots, k, k+1\}$, then
				$\{d_{1}+d_{2}: d_{1},d_{2}\in T_{\ell}, d_{1}+d_{2}<n\}=\{0, 2, 3, \cdots, 2k+1,2k+2 \}=2k+2$.
			\end{enumerate}
			Therefore, we have $\dim\left(\mathcal{C}(D, kO, \ell, \eta)^{\star 2}\right)\geq 2k+1$ by Proposition \ref{114}.
			On the other hand, by the relation
			\begin{displaymath}
				ev_{D}\left(\left\langle f\cdot g: f, g\in S_{\ell}\right\rangle\right)= ev_{D}\left(\left\langle \overline{f\cdot g}: f, g\in S_{\ell}\right\rangle\right),  
			\end{displaymath}
			we have $\dim\left(\mathcal{C}(D, kO, \ell, \eta)^{\star 2}\right)= 2k+1$ for $\ell=\frac{k-3}{2}$. Then if $0<\ell<\frac{k-3}{2}$, we have $T_{\ell}=\{0, 2, \cdots,2\ell+2, 2\ell+4, \cdots, k, k+1\}$ and $\dim\left(\mathcal{C}(D, kO, \ell, \eta)^{\star 2}\right)\geq 2k+2$, which is a contradiction.
			
			\item By Theorem \ref{7}, the dual of $\mathcal{C}(D, kO, \ell, \eta)$ is a TGECC with the scalar vector
			\begin{displaymath}
				\mathbf{v}=(\frac{\gamma_{1}}{\beta_{1}},\frac{\gamma_{2}}{\beta_{2}},\cdots,\frac{\gamma_{n}}{\beta_{n}})
			\end{displaymath}
			and defining set $$S_{\ell}^{\perp}=\left\lbrace\sum^{\frac{n-k-1}{2}}_{i=0}a_{i}x^{i}+\sum_{j=0}^{\frac{n-k-5}{2}}b_{i}x^{j}y+b_{\frac{n-k-3}{2}}f^{(1)}_{1}(x,y)\right\rbrace$$, where $f^{(1)}_{1}(x,y)$ is given in Theorem \ref{7}. By the similar discussions as above, we denote by $T_{\ell}^{\perp}=\{0, 2, \cdots, n-k-1\}\cup\{n-2\ell-2\}$ and we have $n-2\ell-2\geq n-k+1$ for $0<\ell\leq\frac{k-3}{2}$. Then we have
			$\{d_{1}+d_{2}: d_{1},d_{2}\in T^{\perp}_{\ell}, d_{1}+d_{2}<n\}=\{0, 2, 3, \cdots, 2n-k-2\ell-3,2n-4\ell-4 \}=2n-k-2\ell-2\geq 2n-2k+1$. By the similar discussions as the first claim, the equality holds if and only if $\ell=\frac{k-3}{2}$.
		\end{enumerate}
	\end{proof}

\section{Conclusion and Future Works}
\label{sec:7}
In this paper, by utilizing the Riemann-Hurtwitz formula and prime ideal decomposition in the elliptic function fields, we initiate the study of twisted elliptic curve codes (TECCs). In particular, we study a class of TECCs with one twist. The parity-check matrices of the TECCs are given by explicitly calculating the Weil differentials. The sufficient and necessary conditions of self-duality are presented. The possible minimum distances of the TECCs are also determined. Moreover, we give some examples of MDS, AMDS, self-dual and MDS self-dual TECCs. On the other hand, we calculate the dimensions of the Schur squares of TECCs and show the non-equivalence between TECCs and ECCs/GRS codes. We list some research problems to conclude the paper:
	\begin{enumerate}
		\item Problem 1:
		In \cite{22}, Hu \textit{et al.} extend the original constructions of TGRS codes to the $(\mathcal{L}, \mathcal{P})-$TGRS codes, therefore it is interesting to extend such constructions to the TECCs.

		\item Problem 2:
		Study more constructions of twisted AG (TAG) codes such as twisted Hermitian codes (THCs), twisted hyper-elliptic curve codes (THECCs), \textit{etc.}.
		
		\item Problem 3:
		Determine the weight distribution of TECCs.

        \item Problem 4:
		Study the combinatorial designs supported by TECCs.
        \end{enumerate}

\end{document}